\newtheorem{theorem}{Theorem}[section]
\newtheorem{lemma}[theorem]{Lemma}
\newtheorem{corollary}[theorem]{Corollary}
\newtheorem{proposition}[theorem]{Proposition}
\newtheorem{claim}[theorem]{Claim}
\newtheorem{definition}[theorem]{Definition}
\title{Constant-Time Algorithms for Sparsity Matroids}
\author{Hiro Ito\thanks{
    School of Informatics, Kyoto University.
    \texttt{itohiro@kuis.kyoto-u.ac.jp}
  }
  \and
  Shin-ichi Tanigawa\thanks{
    Research Institute for Mathematical Sciences, Kyoto University.
    Supported by Grant-in-Aid for JSPS Research Fellowships for Young Scientists.
    \texttt{tanigawa@kurims.kyoto-u.ac.jp}
  }
  \and
  Yuichi Yoshida\thanks{
    School of Informatics, Kyoto University, and 
  Preferred Infrastructure, Inc. 
  Supported by MSRA Fellowship 2010.
  \texttt{yyoshida@kuis.kyoto-u.ac.jp}}
}
\newcommand{\calA}{\mathcal{A}}
\newcommand{\calM}{\mathcal{M}}
\newcommand{\calO}{\mathcal{O}}
\newcommand{\calF}{\mathcal{F}}
\newcommand{\calP}{\mathcal{P}}
\begin{document}
\maketitle

\begin{abstract}
  A graph $G=(V,E)$ is called $(k,\ell)$-full if $G$ contains a subgraph $H=(V,F)$ of $k|V|-\ell$ edges such that,
  for any non-empty $F' \subseteq F$, $|F'| \leq k|V(F')| - \ell$ holds.
  Here, $V(F')$ denotes the set of vertices incident to $F'$.
  It is known that the family of edge sets of $(k,\ell)$-full graphs forms a family of matroid, known as the sparsity matroid of $G$.
  In this paper, we give a constant-time approximation algorithm for the rank of the sparsity matroid of a degree-bounded undirected graph. This leads to a constant-time tester for $(k,\ell)$-fullness in the bounded-degree model, 
  (i.e.,  we can decide  with high probability whether an input graph satisfies a property $P$ or far from $P$). 
Depending on the values of $k$ and $\ell$, it can test various properties of a graph such as connectivity, rigidity, and how many spanning trees can be packed.

Based on this result,  we also propose a constant-time tester for $(k,\ell)$-edge-connected-orientability in the bounded-degree model,
where an undirected graph $G$ is called $(k,\ell)$-edge-connected-orientable if there exists an orientation $\vec{G}$ of $G$ with a vertex $r \in V$ such that $\vec{G}$ contains $k$ arc-disjoint dipaths from $r$ to each vertex $v \in V$ and $\ell$ arc-disjoint dipaths from each vertex $v \in V$ to $r$.

  A tester is called a one-sided error tester for $P$ if it always accepts a graph satisfying $P$.
  We show, for $k \geq 2$ and (proper) $\ell \geq 0$,
  any one-sided error tester for $(k,\ell)$-fullness and $(k,\ell)$-edge-connected-orientability requires $\Omega(n)$ queries.
\end{abstract}

\section{Introduction}\label{sec:intro}
Property testing is a relaxation of decision.
In property testing, given an instance $I$,
we are to distinguish the case in which $I$ satisfies a predetermined property $P$ from the case in which $I$ is ``far'' from satisfying $P$.
The farness depends on each model.
The main objective of property testing is to develop efficient algorithms running even in constant time,
which is independent of sizes of instances.

In this paper, we study about testing algorithms for two strongly related properties of undirected graphs, 
$(k, \ell)$-sparsity and $(k,\ell)$-edge-connected-orientability.
A graph $G = (V, E)$ is called \textit{$(k, \ell)$-sparse} if $|F| \leq k|V(F)| - \ell$ for any $F \subseteq E, |F| \geq 1$,
where $V(F)$ denotes the set of vertices incident to edges in $F$.
We note that $(k,\ell)$-sparsity becomes meaningful only when $2k - \ell \geq 1$.
If otherwise, any non-empty graph cannot be $(k,\ell)$-sparse since just an edge violates the condition.
A graph $G$ is called \textit{$(k, \ell)$-tight} if $G$ is $(k, \ell)$-sparse and $|E| = kn - \ell$, 
 where $n$ is the number of vertices in $G$.
A graph $G$ is called \textit{$(k, \ell)$-full} if $G$ contains a $(k,\ell)$-tight subgraph with $n$ vertices.
Checking whether a given graph is $(k,\ell)$-full or not is one of main topics in this paper.

Another topic studied in this paper is an orientability of undirected graphs. 
A (di)graph is called \textit{$k$-edge-connected} (resp., \textit{$k$-vertex-connected}) if deletion of any $k-1$ edges (resp., vertices) leaves the graph connected.
By Menger's theorem, this is equivalent to asking $k$ edge-disjoint (resp., $k$ openly-disjoint) paths between any pair of vertices.
A digraph $D=(V,A)$ is called {\em $(k,\ell)$-edge-connected} with a root $r\in V$ if, for each $v\in V\setminus \{r\}$, $D$ has $k$ arc-disjoint dipaths from $r$ to $v$ and 
$\ell$ arc-disjoint dipaths from $v$ to $r$.
An undirected graph $G=(V,E)$ is called {\em $(k,\ell)$-edge-connected-orientable} ($(k,\ell)$-ec-orientable, in short)
if one can assign an orientation to each edge so that the resulting digraph is $(k,\ell)$-edge-connected with some root $r\in V$.
Note that the choice of $r$ is actually not important, and we may specify any vertex as $r$. 

Nash-Williams' graph-orientation theorem~\cite{nash1960} implies that a graph $G$ admits an orientation such that the resulting digraph is $k$-edge-connected if and only if 
$G$ is $2k$-edge-connected.
This implies that $(k,k)$-ec-orientability of a graph is equivalent to $2k$-edge-connectivity.
Another famous result of Nash-Williams~\cite{nash1964} for the forest-partition problem shows that an undirected graph $G$ contains $k$ edge-disjoint spanning trees if and only if 
$G$ is $(k,k)$-full. 
This theorem, combined with Edmonds' arc-disjoint branching theorem~\cite{edmonds1972}, implies that $G$ is $(k,0)$-ec-orientable if and only if $G$ is $(k,k)$-full.
In this sense, $(k,\ell)$-ec-orientability can be considered as an unified concept of the sparsity and the conventional edge-connectivity.

In this paper, we give constant-time testers for $(k,\ell)$-fullness and $(k,\ell)$-ec-orientability in the bounded-degree model.
In \textit{the bounded-degree model} with a degree bound $d$~\cite{GR02}, 
we only consider graphs with maximum degree at most $d$.
A graph $G = (V, E)$ is represented by an oracle $\calO_G$.
Given a vertex $v$ and an index $i \in \{1,\ldots,d\}$, 
$\calO_G$ returns the $i$-th edges incident to $v$.
If there is no such vertex, $\calO_G$ returns a special character $\bot$.
It can be seen that $\calO_G$ represents the incidence list of $G$,
and we can see one entry of the incidence list by one query to $\calO_G$.
A graph is called \textit{$\epsilon$-far} from a property $P$ if we must modify at least $\frac{\epsilon dn}{2}$ edges.
In other words, we must modify at least $\epsilon$-fraction of the incidence list to make $G$ satisfy $P$.
The \textit{query complexity} of an algorithm is the number of accesses to $\calO_G$.
For a property $P$,
an algorithm is called a \textit{tester} for a property $P$ if it accepts graphs satisfying $P$ with probability at least $\frac{2}{3}$ and rejects graphs $\epsilon$-far from $P$ with probability at least $\frac{2}{3}$.

Our main results are summarized as follows.
\begin{theorem}\label{thr:test-k-l-fullness}
Let $k \geq 1, \ell \geq 0$ be integers with $2k-\ell \geq 1$.
In the bounded-degree model with a degree bound $d$,  
there is a testing algorithm for the $(k,\ell)$-fullness of a graph with query complexity $(k+d)^{O(1/\epsilon'^2)}(\frac{1}{\epsilon'})^{O(1/\epsilon')}$, where $\epsilon'=\frac{\epsilon}{k+d\ell}$.
\end{theorem}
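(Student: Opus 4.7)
The plan is a two-step reduction. Step 1: give a constant-time additive approximator for the rank $r(G)$ of the $(k,\ell)$-sparsity matroid. Since $G$ is $(k,\ell)$-full iff $r(G)=kn-\ell$, testing reduces to a numerical threshold on this estimate, provided we translate $\epsilon$-farness into a quantifiable rank deficit. Step 2: prove this gap lemma and calibrate the approximator's accuracy accordingly.

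For Step 1, I would analyze the randomized greedy algorithm for matroid rank. Assign each edge $e$ an i.i.d.\ uniform priority $\pi_e \in [0,1]$, scan edges in increasing priority, and keep $e$ iff its addition preserves $(k,\ell)$-sparsity. The retained set is a basis, so $r(G)=\sum_e \Pr[e\text{ is kept}]$; splitting each term evenly between the two endpoints of $e$ yields $r(G)=\sum_v c(v)$ with $c(v)\in [0,d]$. To approximate $c(v)$ at a sampled vertex, I would locally simulate the greedy decisions. Sparsity is checked via the $(k,\ell)$-pebble game, which inspects the relevant component of the current accepted set around $e$; the recursion therefore branches only into edges $e'$ with $\pi_{e'}<\pi_e$ that lie in this component. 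I expect to show that this recursion is dominated by a sub-critical branching process, so that truncating exploration at local size roughly $(k+d)^{O(1/\epsilon')}$ incurs only $\epsilon'$ per-vertex error. Averaging over $s=\Theta(1/\epsilon'^2)$ uniformly sampled vertices then yields an additive-$\epsilon' n$ estimate of $r(G)$ via Hoeffding's inequality, matching the claimed query count.

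For Step 2, the gap lemma, I would prove that if $r(G)\geq kn-\ell-\epsilon' n$ with $\epsilon'=\epsilon/(k+d\ell)$, then $G$ is $\epsilon$-close to $(k,\ell)$-full. Starting from a maximum $(k,\ell)$-sparse spanning subgraph $H$ of $G$, one has $kn-\ell-|H|\leq \epsilon' n$, and by the matroid axioms $H$ can be extended to a basis of the matroid on $K_n$ by adding exactly that many edges. Each augmenting edge may be a non-edge of $G$, and inserting it may require deleting $O(k+d\ell)$ existing edges to respect the degree bound and restore local sparsity at its endpoints. The total number of edge modifications is therefore at most $(k+d\ell)\cdot \epsilon' n \leq \epsilon dn/2$ after adjusting constants, establishing the needed gap and explaining the specific scaling $\epsilon' = \epsilon/(k+d\ell)$.

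The main technical obstacle is the local simulation of greedy inside Step 1. Deciding whether an edge is kept depends transitively on earlier greedy decisions, which may chain through the graph via pebble-game moves; the usual branching argument used for graphic matroids or matching does not transplant directly, since a single pebble-game test can sweep across an entire sparse component. The heart of the proof is to show that, under random priorities, the induced recursion is sub-critical, so that both its expected truncation radius and its expected query count are bounded as claimed. Once this contraction estimate is in hand, combining it with Hoeffding's inequality and the gap lemma yields the stated complexity.
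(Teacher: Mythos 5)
Your proposal diverges from the paper's proof in Step~1 in a way that opens a genuine gap. You propose to locally simulate the random-priority greedy by running the pebble game inside a truncated neighbourhood and to argue that the induced recursion is sub-critical. The paper explicitly flags this route as the obstacle rather than the method: to decide whether an edge $e$ is retained by greedy you must determine whether $e$ lies in the $(k,\ell)$-closure of the lower-priority edges, which amounts to finding a circuit (equivalently, a pebble-game failure witness), and circuits of $\calM_{k,\ell}$ need not be local. Already for $\calM_{1,1}$, a $d$-regular expander of girth $\Omega(\log_d n)$ has no short circuits at all, so no bounded-radius exploration can certify that an edge is dependent. The sub-criticality claim is the entire ballgame and your sketch neither states the branching estimate nor explains why a pebble-game sweep through a large tight sub-component of the low-priority graph is avoided; the Nguyen--Onak/YYI contraction arguments you invoke are tailored to maximal matching / independent set, where dependencies only propagate along incidences, and do not transfer to matroid rank without substantial new ideas. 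The paper avoids the issue entirely with a different decomposition: Algorithm~2 (\textsf{Component}) extracts small $(k,\ell)$-connected sets, each is replaced by a base of itself, and Theorem~\ref{thm:key} shows the surviving edge set $E'$ satisfies $\rho_{k,0}(E')-\tfrac{\ell dn}{t}\leq \rho_{k,\ell}(E)\leq \rho_{k,0}(E')$ (the gap scales with $1/t$ because after purging small dependencies all remaining non-trivial $(k,\ell)$-components are large, so there are at most $dn/t$ of them and each costs at most $\ell$). Then $\rho_{k,0}(E')$ is reduced via Hall's theorem to maximum matching in an auxiliary bipartite graph (Section~\ref{sec:matching}) and approximated with the constant-time matching estimator of~\cite{YYI09}. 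No per-edge greedy decision is ever certified.

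Your Step~2 gap lemma also overcomplicates matters. In this model a graph $\epsilon$-far from $(k,\ell)$-fullness directly has $\rho_{k,\ell}(E)\leq kn-\ell-\tfrac{\epsilon dn}{2}$: fullness is upward-closed under edge addition, so deletions never help, and each added edge increases the rank by at most one. The bookkeeping about inserting a non-edge forcing $O(k+d\ell)$ deletions to preserve the degree bound is not where $\epsilon'=\epsilon/(k+d\ell)$ comes from in the paper; that parameter emerges from balancing the exploration radius $t=\ell d/\epsilon$ of \textsf{Component} (which contributes $d^{O(t)}$ queries) against the auxiliary-graph parameters $n'=(k+d)n$, $d'=O(k+d)$ fed to the matching approximator. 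So the proposal lands on the right closing formula but by a coincidence, not by the mechanism that produces it.
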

\begin{theorem}\label{thr:test-k-l-orientability}
Let $k \geq 1, \ell \geq 0$ be integers.
In the bounded-degree model with a degree bound $d$,  
there is a testing algorithm for the $(k,\ell)$-ec-orientability of a graph with query complexity  $(k+d)^{O(1/\epsilon'^2)}(\frac{1}{\epsilon'})^{O(1/\epsilon')}$, where $\epsilon'=\max(\frac{\epsilon}{dk},\frac{d\epsilon }{\ell})$.
\end{theorem}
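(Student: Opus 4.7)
The plan is to reduce $(k,\ell)$-ec-orientability to generalized $(k,\ell)$-fullness and then invoke Theorem~\ref{thr:test-k-l-fullness}. The starting observation, already recorded in the introduction, is that $(k,0)$-ec-orientability coincides with $(k,k)$-fullness via the combination of Nash-Williams' forest partition theorem and Edmonds' arc-disjoint branching theorem, so it is natural to seek an analogous fullness-type characterization for general $(k,\ell)$-ec-orientability.

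For the characterization I would appeal to a theorem of Frank asserting that $G$ is $(k,\ell)$-ec-orientable with root $r$ if and only if, after some orientation, $G$ contains $k$ edge-disjoint out-arborescences from $r$ together with $\ell$ edge-disjoint in-arborescences to $r$. Via matroid union applied to the graphic matroid, this packing condition can be encoded as one or two fullness conditions in a suitable sparsity matroid (possibly after attaching a bounded gadget at the intended root and around each vertex, which increases the degree by only a constant factor). Each of the resulting conditions is then testable by Theorem~\ref{thr:test-k-l-fullness} through a constant-time local simulation of the auxiliary graph.

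The expected main obstacle is the farness translation. We need to argue that if $G$ is $\epsilon$-far from $(k,\ell)$-ec-orientability, then $G$ is $\Omega(\epsilon/(dk))$-far from the out-arborescence fullness condition, or $\Omega(d\epsilon/\ell)$-far from the in-arborescence condition; the two expressions correspond respectively to the $k$ out-arborescences (where fixing one defect costs $O(dk)$ edge modifications per violating cut) and to the $\ell$ in-arborescences (whose cost scales as $\ell/d$). The $\max$ in the definition of $\epsilon'$ then reflects that we may target whichever direction is further from satisfaction. This inequality I would establish by a splitting-off and edge-augmentation argument in the spirit of Mader and Lov\'asz, bounding the number of edge modifications required to restore each sparsity matroid condition given an $\epsilon$-far instance; the difficulty lies in showing that local fixes in one direction do not cascade into large disruptions in the other.

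With these pieces in place, the overall tester simply runs the two Theorem~\ref{thr:test-k-l-fullness} sub-testers, each with the appropriate parameter, amplifies confidence by a constant number of repetitions, and accepts iff both accept. Completeness is immediate from the structural characterization, soundness follows from the farness translation, and the query complexity is obtained by plugging the worst of the two effective parameters into the bound of Theorem~\ref{thr:test-k-l-fullness}.
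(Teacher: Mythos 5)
Your plan to reduce $(k,\ell)$-ec-orientability to one or two sparsity-matroid fullness conditions and then invoke Theorem~\ref{thr:test-k-l-fullness} twice does not go through, and the obstruction is not merely technical.

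The structural characterization you invoke — orient $G$ so that it carries $k$ arc-disjoint out-arborescences and $\ell$ arc-disjoint in-arborescences from a fixed root — is correct (it is the Edmonds branching restatement of $(k,\ell)$-edge-connectedness), but it does \emph{not} decouple into independent sparsity conditions on the undirected graph: the single orientation must simultaneously serve both families. If you instead test the two families separately via matroid union, you end up checking that $G$ is $(k,k)$-full and $(\ell,\ell)$-full, which is just $(\max(k,\ell),\max(k,\ell))$-fullness, i.e.\ $(\max(k,\ell),0)$-ec-orientability. This is strictly weaker: for $k=\ell=2$ it would accept any graph with two edge-disjoint spanning trees, while $(2,2)$-ec-orientability is exactly $4$-edge-connectivity by Nash--Williams, and $(2,2)$-fullness does not imply $4$-edge-connectivity. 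So running two Theorem~\ref{thr:test-k-l-fullness} sub-testers and intersecting the accept sets fails soundness. No bounded-size gadget around the root can repair this, because the coupling constraint is a global cut condition, not a local one.

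The paper's actual proof works from the Frank--Kir\'aly augmentation theorem (Theorem~\ref{thm:fk03-augment}), which expresses the minimum number of added edges to reach $(k,\ell)$-ec-orientability as the maximum of two deficiency quantities: a partition term (A), controlled by $\eta_{k,0}$ and hence by $\rho_{k,k}$ through the Tutte--Nash-Williams identity~\eqref{eq:tutte}, and a partition/co-partition term (B), encapsulated in $\xi_{k,\ell}$. Only (A) reduces to a fullness-type rank estimate. The crux, which your proposal does not engage with, is Theorem~\ref{thm:key2}: when $\xi_{k,\ell}(G)$ is large, either there are $\Omega(n)$ \emph{small} deficient sets (each locally verifiable by minimizing an intersecting supermodular function on a constant-radius ball), or $\eta_{k,0}(G)$ is itself large. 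Consequently the tester has a genuinely two-phase structure — a global rank approximation for (A) plus a sampled local search for (B) — and the two parameters $\frac{\epsilon}{dk}$ and $\frac{d\epsilon}{\ell}$ arise respectively from the error budget of the $\rho_{k,k}$ estimator and from the radius $t=\Theta(\ell/(\epsilon d))$ of the local search, not from separate out- versus in-arborescence sub-testers as you suggest.

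Finally, the farness translation you gesture at (``splitting-off and edge-augmentation in the spirit of Mader and Lov\'asz'') is precisely what the Frank--Kir\'aly theorem supplies in closed form; without it, there is no clean argument that being $\epsilon$-far from $(k,\ell)$-ec-orientability forces $\epsilon$-farness from one of your proposed fullness conditions, and your own observation that ``local fixes in one direction might cascade'' is exactly why the decoupling is unsound.
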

The second result resolves an open problem raised by Orenstein~\cite{Ore10}, 
which asks the existence of a constant-time tester for $(k,\ell)$-ec-orientability.
As mentioned below, the first result has numerous applications to both theoretical and practical problems.   

An algorithm is called a \textit{$(1,\beta)$-approximation algorithm} for a value $x^*$ if, with probability $\frac{2}{3}$,
it outputs $x$ such that $x^* - \beta \leq x \leq x^*$.
For a graph $G=(V,E)$, it is known that the family of edge sets of $(k,\ell)$-sparse subgraphs forms a family of independent sets of a matroid on $E$.
This matroid is called the {\em $(k,\ell)$-sparsity matroid}  of $G$, denoted by ${\cal M}_{k,\ell}(G)$, and the rank function by $\rho_{k,\ell}:2^E\rightarrow \mathbb{Z}$.
Although the detailed property will be discussed in the next section, 
we should note that $G$ is $(k,\ell)$-full if and only if $\rho_{k,\ell}(E)=kn-\ell$. 
To test $(k,\ell)$-fullness, we actually develop a constant-time $(1,\epsilon n)$-approximation algorithm for $\rho_{k,\ell}(E)$.

For a property $P$,
a tester is called a \textit{one-sided error tester} for $P$ if it always accepts graphs satisfying $P$.
A general tester is sometimes called a \textit{two-sided error tester} for comparison.
Our testers for $(k,\ell)$-fullness and $(k,\ell)$-ec-orientability are two-sided error testers.
On the contrary, we give the following lower bounds for one-sided error testers.
\begin{theorem}\label{thr:lower-k-l-fullness}
  Let $k \geq 2, \ell \geq 0$ be integers with $2k - \ell \geq 1$.
  In the bounded-degree model,
  any one-sided error tester for $(k,\ell)$-fullness requires $\Omega(n)$ queries where $n$ is the number of vertices in an input graph.
\end{theorem}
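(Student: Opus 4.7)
The plan is to apply an indistinguishability argument to obtain the $\Omega(n)$ lower bound. Because a one-sided tester must always accept every $(k,\ell)$-full graph, it rejects only when the oracle responses observed during querying are certifiably inconsistent with the view of every $(k,\ell)$-full graph of maximum degree at most $d$ on $n$ vertices. Hence, it suffices to exhibit a family $\{G_n\}$ of graphs such that each $G_n$ is $\epsilon$-far from $(k,\ell)$-full yet every transcript of $o(n)$ oracle queries made on $G_n$ is also a valid transcript on some $(k,\ell)$-full graph $G_n^*$. Applying Yao's principle to the randomized tester then yields the $\Omega(n)$ bound.

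For the construction of $G_n$, I would use a high-girth graph with a globally-distributed obstruction to $(k,\ell)$-fullness. The matroid rank formula $\rho_{k,\ell}(E) = \min_{\calP}\sum_{F \in \calP}(k|V(F)| - \ell)$ over edge partitions gives a natural source of obstructions: rank-deficient partitions of $E$. A natural candidate is to take $G_n$ as a union of many constant-size $(k,\ell)$-tight ``blocks'' of $c$ vertices each, connected by a sparse high-girth super-structure, so that (a) the cumulative edge deficit relative to $kn - \ell$ is $\Omega(n)$, establishing $\epsilon$-farness for appropriately small $\epsilon$, and (b) every vertex has the degree distribution expected in some $(k,\ell)$-full graph, avoiding trivial degree-based rejections (such as a vertex of degree less than $k$, which is certifiable in $O(d)$ queries by the min-degree constraint of any $(k,\ell)$-tight subgraph). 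The blocks are small enough to yield $\Omega(n)$ nearly independent defects while the super-structure is arranged to have girth $\omega(1)$, hiding the global partition obstruction from any constant-radius local exploration.

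For the local indistinguishability, I would argue that any $o(n)$-query view of $G_n$ is, with high probability, a bounded-depth forest of block fragments that can be realized as a local view of some alternative $(k,\ell)$-full graph $G_n^*$, obtained by replacing the sparse super-structure of $G_n$ with a denser one for which $\rho_{k,\ell}(E) = kn - \ell$. Matching the edge indices between $G_n$ and $G_n^*$ at queried vertices is handled via a random relabeling of edge indices, as is standard in Yao-principle arguments. The main obstacle is the tension between $\epsilon$-farness and local consistency: the former demands many rank-deficient substructures, while the latter forbids their detection by short oracle walks. Overcoming this requires the global partition witnessing rank deficiency to be sufficiently diffuse that no $o(n)$-query walk can aggregate enough structural information to rule out every $(k,\ell)$-full extension. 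This leverages both the matroid rank formula and the expansion/girth properties of random (or explicitly-constructed) high-girth graphs to verify that the block-partition defect cannot be localized within any sublinear-size query neighborhood.
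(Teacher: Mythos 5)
Your high-level strategy (a one-sided tester cannot reject without an unextendable partial view, so exhibit a far graph whose every $o(n)$-query view extends to some $(k,\ell)$-full graph) is exactly the paper's strategy. However, the specific construction you propose is both underspecified and, in at least one regime, provably wrong, and you do not supply the key extendability lemma that makes the argument go through.

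First, the construction. You take a union of constant-size $(k,\ell)$-tight blocks connected by a sparse high-girth super-structure, and you claim the cumulative edge deficit relative to $kn-\ell$ is $\Omega(n)$. For $\ell=0$ (which the theorem allows) this fails outright: a disjoint union of $(k,0)$-tight components is already $(k,0)$-full (the sparsity condition decomposes over vertex-disjoint pieces, and the total edge count is exactly $kn$), so the deficit is zero and adding a super-structure only makes things worse. Even for $\ell\geq 1$, a tight block on $c$ vertices has average degree about $2k$, so to stay within a degree bound $d$ while also adding the super-structure you need $d>2k$, and the edge-deficit claim becomes unclear once the super-structure's edges are counted. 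The paper uses a far simpler graph: a single $(2k-1)$-regular expander on $n$ vertices, which has only $kn-n/2$ edges; $\epsilon$-farness then follows from a one-line edge count (Lemma~\ref{lmm:far}), with no need for a rank-deficient-partition argument or any girth assumption.

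Second, and more importantly, the extendability step is where the real work lies and your sketch does not address it. After the tester has queried $\beta n$ vertices, you must show that the \emph{unqueried} part can be completed to a $(k,\ell)$-full graph \emph{consistent with every answer given so far}. Your proposal says one can ``replace the sparse super-structure with a denser one,'' but the queried vertices' incidence lists are now fixed and cannot be altered, so you cannot freely rewire near them; the completion must leave all of $G$'s edges at queried vertices intact and only add edges among unqueried vertices. The paper's proof handles this by building a $(k,\ell)$-full graph on the unqueried set $S$ and then iteratively re-absorbing the queried vertices into $S$ using two $(k,\ell)$-fullness-preserving operations: attaching a new vertex by $k$ edges to the current full graph (Lemma~\ref{lmm:combine}), and, when $k=2$, attaching an entire cycle (Lemma~\ref{lmm:combine2}). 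The expander property of $G$ (not high girth) is exactly what guarantees that at every step some queried vertex (or cycle of such vertices) has enough neighbours in the current $S$ to invoke one of these two lemmas. Your proposal omits both of these lemmas and the expansion argument that makes the iteration terminate, which is the substantive content of the proof; without them the claim that $o(n)$ queries cannot rule out all $(k,\ell)$-full completions is unsupported.
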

\begin{theorem}\label{thr:lower-k-l-ec-orientability}
  Let $k \geq 2, \ell \geq 0$ be integers with $k > \ell$.
  In the bounded-degree model,
  any one-sided error tester for $(k,\ell)$-ec-orientability requires $\Omega(n)$ queries where $n$ is the number of vertices in an input graph.
\end{theorem}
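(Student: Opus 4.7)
The plan is to establish this lower bound using the standard one-sided error argument in the bounded-degree property-testing model. I will exhibit, for each sufficiently large $n$, an $n$-vertex graph $G_n$ of maximum degree $d$ that is $\epsilon$-far from $(k,\ell)$-ec-orientability but such that no $o(n)$-query observation of $G_n$ can serve as a certificate of non-$(k,\ell)$-ec-orientability. Because a one-sided error tester must reject every $\epsilon$-far input with probability at least $2/3$ and is only allowed to reject on the basis of an observation incompatible with every $(k,\ell)$-ec-orientable graph of maximum degree $d$, the absence of such a certificate forces the tester to accept $G_n$, yielding the desired $\Omega(n)$ lower bound.

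For the case $\ell = 0$ (which is allowed by the hypothesis $k > \ell$), I would argue directly by reduction to Theorem~\ref{thr:lower-k-l-fullness}. The equivalence ``$G$ is $(k,0)$-ec-orientable if and only if $G$ is $(k,k)$-full'' noted in the introduction is a property of the underlying undirected graph, so $\epsilon$-farness from $(k,0)$-ec-orientability coincides with $\epsilon$-farness from $(k,k)$-fullness. Any one-sided error tester for the former is therefore a one-sided error tester for the latter, and Theorem~\ref{thr:lower-k-l-fullness} applied with parameters $k' = k$ and $\ell' = k$ (valid since $k \ge 2$ and $2k' - \ell' = k \ge 1$) immediately gives the $\Omega(n)$ bound.

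For the remaining cases $\ell \ge 1$ and $k > \ell$, I would construct hard instances directly. Starting from the $(k,k)$-fullness hard instance of Theorem~\ref{thr:lower-k-l-fullness}, I would augment it by a small local gadget attached to a designated root vertex $r$ so as to accommodate the additional $\ell$ required ``return'' paths per vertex without introducing any local certificate: since $k > \ell$, Frank's partition characterization of $(k,\ell)$-ec-orientability leaves enough slack to route the $\ell$ return paths through un-queried edges in any hypothetical extension. Farness follows because the inherited $(k,k)$-fullness obstruction still requires $\Omega(n)$ edge modifications to repair, and local extendability follows because any $o(n)$-query observation of the augmented graph projects, away from $r$, onto an $o(n)$-query observation of the original hard instance, which is itself extendable to a $(k,k)$-full graph, and this extension can in turn be completed to a $(k,\ell)$-ec-orientable graph.

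I expect the main technical obstacle to lie in designing the root gadget and proving the required correspondence between certificates of non-$(k,\ell)$-ec-orientability on the augmented graph and certificates of non-$(k,k)$-fullness on the original graph; this requires a careful analysis of how Frank's orientability condition interacts with local modifications near the root, together with a verification that the degree bound $d$ can be chosen so that both the planted obstruction and the extendability argument survive simultaneously.
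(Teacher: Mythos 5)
Your reduction for $\ell=0$ is exactly the paper's starting point (via Orenstein's result and the equivalence to $(k,k)$-fullness), and your farness observation for $\ell\ge 1$ (that $(k,\ell)$-ec-orientability implies $(k,k)$-fullness via $\eta_{k,0}\le \eta_{k,\ell}$, so the same expander remains far) is sound. But the core of your proposal for $\ell\ge 1$ -- a ``root gadget'' attached to $r$ -- rests on a misconception. The paper explicitly notes that the choice of root $r$ in $(k,\ell)$-ec-orientability is irrelevant: if $G$ admits a $(k,\ell)$-edge-connected orientation with one root, it admits one with every root. More fundamentally, Frank's characterization (Theorem~\ref{thm:orientability}) is a global partition condition $d_G(\calF)\ge k(|\calF|-1)+\ell$, not a local, per-vertex path condition; there is no ``return path near $r$'' to accommodate by a local gadget. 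So the gadget contributes nothing, and your sentence ``leaves enough slack to route the $\ell$ return paths through un-queried edges'' conflates the Menger-type path formulation with the partition formulation in a way that does not translate into an actual argument.

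The step that actually needs work -- and which your proposal waves through -- is the local extendability: showing that any $o(n)$-query view of the expander can be completed, within the degree bound, to a graph satisfying Frank's condition $d_{G'}(\calF)\ge k(|\calF|-1)+\ell$ for every partition $\calF$ with $|\calF|\ge 2$. Completing to $(k,k)$-full (i.e., $\eta_{k,0}=0$) is strictly weaker than this when $\ell\ge 1$, so appealing to the $(k,k)$-fullness completion and then asserting it ``can in turn be completed to a $(k,\ell)$-ec-orientable graph'' begs precisely the question at issue. The intended modification of Orenstein's argument is to run the same expander construction and same iterative completion strategy, but to verify at each step that the completed graph satisfies Frank's partition inequality (with the $+\ell$ slack, using $k>\ell$ and the expansion to guarantee enough crossing edges) rather than merely $(k,k)$-fullness; no root gadget appears, and the analysis is carried out entirely in the partition language of Theorem~\ref{thm:orientability}.
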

It is not hard to show that there are one-sided error testers for $(1,\ell)$-fullness and $(1,\ell)$-ec-orientability.
Also, we have a one-sided error tester for $(k,\ell)$-ec-orientability when $\ell \geq k$.

We briefly mention why we use the bounded-degree model.
Another famous model for graphs is the \textit{adjacency matrix model},
in which a graph is represented by an oracle $\calO_G$ such that, 
given two vertices $u$ and $v$, 
$\calO_G$ answers whether there is an edge between $u$ and $v$.
A graph $G$ is called $\epsilon$-far from $P$ in this model if we must modify $\frac{\epsilon n^2}{2}$ edges to make $G$ satisfy $P$.
We show that testing $(k,\ell)$-fullness is trivial in this model. 
Note that we can make any graph $(k,\ell)$-full by adding $kn - \ell$ edges.
Thus, any graph is at most $O(\frac{1}{n})$-far.
Thus, for any $\epsilon > 0$, when $n = \Omega(\frac{1}{\epsilon})$,
we can safely accept graphs without any computation.
When $n = O(\frac{1}{\epsilon})$, we can test $(k,\ell)$-fullness using a standard polynomial-time algorithm.
We have the same issue also for $(k,\ell)$-ec-orientability.

\paragraph{Related works}
In the bounded-degree model, many testers are known for several fundamental graph properties (see e.g.,\cite{goldreich2010}).
The most relevant works are testers for connectivity.
For undirected graphs, Goldreich and Ron~\cite{GR02} gave constant-time testers for $k$-edge-connectivity ($k \geq 1$),
$2$-vertex-connectivity, and $3$-vertex-connectivity.
Yoshida and Ito~\cite{YI08} extended the result by showing constant-time testers for $k$-vertex-connectivity ($k \geq 1$).
For digraphs, 
constant-time testers for $k$-edge-connectivity ($k \geq 1$) are given in~\cite{YI10}.
Recently, Orenstein~\cite{Ore10} simplified those results,
and he also gave constant-time testers for $k$-vertex-connectivity of digraphs ($k \geq 1$).
We stress that the idea behind all the algorithms above is to detect a small evidence that a graph does not satisfy the property we are concerned with.
However, as we discuss later, for $(k,\ell)$-sparsity and $(k,\ell)$-ec-orientability,
there may not be any such small evidence.
This fact makes our testers more involved.

Regarding exact and deterministic algorithms for checking the $(k,\ell)$-fullness of a graph $G$ with $n$ vertices and $m$ edges,
Imai~\cite{Ima83} proposed an algorithm for computing the rank of ${\cal M}_{k,k}(G)$ in $O(n^2)$ time and that of ${\cal M}_{k,\ell}(G)$ in $O(nm)$ time for general $\ell$.
Improved algorithms were proposed by Gabow and Westermann~\cite{gabow:1992}, which run in $O(n\sqrt{m+n\log n})$ time for $k=\ell$ and in $O(n^2)$ time for $k=2$ and $\ell=3$.
Also, they proposed an $O(n\sqrt{n\log n})$-time algorithm for checking the $(2,3)$-tightness (but not fullness). 
An efficient and practical algorithm for computing the rank of ${\cal M}_{k,\ell}(G)$ for general $k$ and $\ell$ is the so-called pebble algorithm by Lee and Streinu~\cite{lee:streinu:2005}, which runs in $O(n^2)$ time.

As the $(k,\ell)$-sparsity has a wide range of applications in rigidity theory and scene analysis (see e.g.~\cite{whiteley:hand,Whiteley:1997}), 
it is recognized as an important open problem to improve the $O(n^2)$ upper-bound for computing the rank of the $(k,\ell)$-sparsity matroid (see e.g.,~\cite[Open Problem 4.1]{demaine2008}). 
To the best of our knowledge, our result is the first sub-quadratic algorithm for approximating the rank of the $(k,\ell)$-sparsity matroid.

\paragraph{Applications}
It is elementary to see that a graph is a forest if and only if it is $(1,1)$-sparse, 
and  the concept of $(1,1)$-fullness coincides with the connectivity of graphs.
As a variant of the commonly studied trees or forests, a graph is called a {\em pseudoforest} if each connected
component contains at most one cycle~\cite{gabow:1992}. 
It is known that a graph is a pseudoforest if and only if it is $(1,0)$-sparse~\cite{gabow:1988}.
As we mentioned above, Nash-Williams~\cite{nash1964} proved that a graph contains $k$ edge-disjoint spanning trees if and only if it is $(k,k)$-full.
Motivated by an application to rigidity theory, Whiteley~\cite{Whiteley:1997} or Haas~\cite{haas:2002} proved a generalization of Nash-Williams' theorem to $(k,\ell)$-sparse graphs by mixing trees and pseudoforests.
Our result leads to constant-time testers for these properties.

Another important application of $(k,\ell)$-sparse graphs is the rigidity of graphs.
A classical theorem by Laman~\cite{laman:1970} implies that a $(2,3)$-full graph has a special property of being a generically rigid bar-joint framework on the plane, 
by regarding each vertex as a joint and each edge as a bar.
More precisely, the deficiency between $2n-3$ and the rank of the $(2,3)$-sparsity matroid is equal to the degree of freedoms of the graph in the plane. 
It is further proved by Whiteley~\cite{whiteley:88} that the $(2,2)$-sparsity matroid characterizes the generic rigidity of graphs embedded on a torus or a cylinder 
while the $(2,1)$-sparsity does the generic rigidity of graphs on the surface of a cone.
For a general $d$-dimensional case, the $({d+1 \choose 2}, {d+1\choose 2})$-sparsity matroid characterizes the generic rigidity of special types of structural models, 
called body-bar frameworks~\cite{tay:84} and body-hinge frameworks~\cite{whiteley:88}. 

Although a combinatorial characterization of 3-dimensional generic rigidity of graphs  has not been found yet (see e.g.~\cite{whiteley:hand,Whiteley:1997}), 
a characterization of an important special class, called {\em molecular graphs}, has been proved recently.
In terms of graph theory, a molecular graph means the square $G^2$ of a graph $G$
as the rigidity of a molecule can be modeled by the rigidity of the square of a graph by identifying each atom as a vertex and each covalent bond as an edge (see e.g.,~\cite{thorpe:2005,Whiteley:2005}).
Tay and Whiteley~\cite{tay:whiteley:84}, or more formally, Jackson and Jord{\'a}n~\cite{jackson:08},  conjectured that $G^2$ is generically rigid in 3-dimensional space if and only if $5G$ is $(6,6)$-full.
Here $5G$ denotes the graph obtained from $G$ by duplicating each edge by five parallel copies.
Recently, Katoh and Tanigawa~\cite{molecular} solved this conjecture affirmatively.
In fact, based on this theory, the pebble game algorithm for checking $(6,6)$-fullness (runs in $O(n^2)$ time) is implemented in several softwares
(e.g.,~\cite{amato,flexweb,kinari}) to compute the degree of freedoms of proteins. 
In this sense our super-efficient approximation algorithm for computing the degree of freedoms of molecules could bring a totally new approach in the protein flexibility analysis and the similarity search in the protein data base.

\paragraph{Organization and proof overview}
In Section~\ref{sec:pre}, we review properties of ${\cal M}_{k,\ell}(G)$.
Then, in Sections~\ref{sec:matching} and~\ref{sec:sparsity}, we first describe how to test $(k,\ell)$-fullness.
To test whether $G$ is $(k,\ell)$-full,
we develop a $(1,\epsilon n)$-approximation algorithm for $\rho_{k,\ell}(E)$ running in constant time (Theorem~\ref{thr:approximation-to-mkl}).

A natural way to estimate the rank of $\calM_{k,\ell}(G)$ is locally simulating the greedy algorithm,
i.e., we add edges one by one,
and if a newly added edge forms a circuit, we discard it.
The main obstacle to simulate this algorithm is that, in general, we cannot detect any circuit in constant time.
For example, a circuit in $\calM_{1,1}(G)$ corresponds to a cycle in $G$.
However, there is a $d$-regular graph in which any cycle is of length $\Omega(\log_d n)$.
Thus, we need to estimate the rank without seeing any circuit.
We mention that, for $\calM_{1,1}(G)$, 
it is known that $\rho_{1,1}(E) = n-c$ holds where $c$ is the number of connected components.
Using this fact,~\cite{CRT01} gave an algorithm to estimate $\rho_{1,1}(E)$.
However, for general $k$ and $\ell$, there is no such formula.

Our strategy to overcome this issue is as follows:
First, we remove constant-size circuits w.r.t.\ $\calM_{k,\ell}(G)$,
and let $G' = (V, E')$ be the resulting graph.
We can show that $\rho_{k,\ell}(E) = \rho_{k,\ell}(E')$.
A crucial fact is that $\rho_{k,\ell}(E')$ is close to $\rho_{k,0}(E')$.
Thus, it amount to estimate $\rho_{k,0}(E')$ efficiently.
It is known that $\rho_{k,0}(E')$ equals the size of the maximum matching of an auxiliary graph,
and we can compute the maximum matching with a constant-time approximation algorithm for the maximum matching~\cite{YYI09}.

In Section~\ref{sec:orientability}, we provide a constant-time tester for $(k,\ell)$-ec-orientability.
Our algorithm is based on a characterization of the number of edges we need to add to make a graph $(k,\ell)$-ec-orientable by Frank and Kir{\'a}ly~\cite{FK03}.
Although this characterization is not so simple as the case of the edge-connectivity augmentation problem, 
we are able to show that, if $G$ is $\epsilon$-far, either there are many small evidences or $G$ is globally sparse which can be measured by $(k,k)$-fullness (Theorem~\ref{thm:key2}). 
As mentioned in introduction, the $(k,\ell)$-ec-orientability has strong relations to the sparsity as well as to the conventional edge-connectivity.
Indeed, our algorithm can be considered as a combination of the idea of Yoshida and Ito for testing connectivity and the algorithm for testing $(k,k)$-sparsity given in Section~\ref{sec:sparsity}.

In Section~\ref{sec:lower-bound}, we prove linear lower bounds of one-sided error testers.
In~\cite{Ore10}, Orenstein proved linear lower bounds of one-sided error tester for $(k,0)$-ec-orientability (or equivalently, $(k,k)$-fullness).
Orenstein's proof made use of Tutte-and-Nash-Williams' tree packing theorem (see Theorem~\ref{thm:tutte}), which is a special property of $(k,k)$-fullness.
We can however show that Orenstein's approach can be applied to the general case of $\ell$ by the use of graph operations that preserve $(k,\ell)$-fullness.

\section{Preliminaries}\label{sec:pre}
For an integer $n$, we denote by $[n]$ the set $\{1,\ldots,n\}$.
Let $G = (V,E)$ be a graph.
For a vertex set $S \subseteq V$, $G[S]$ denote the subgraph of $G$ induced by $S$.
For an edge set $F \subseteq E$, 
we define $V_G(F)$ as the set of vertices incident to $F$.
For a vertex set $S,T \subseteq V$, 
we define $E_G(S,T) = \{uv \in E\mid u \in S, v \in T\}$ and $d_G(S,T)=|E_G(S,T)|$.
If $T=V\setminus S$, we abbreviate them as $E_G(S)$ and $d_G(S)$, respectively. 
For a vertex $v \in V$, we use $E_G(v)$ and $E_G(v,T)$ instead of $E_G(\{v\})$ and $E_G(\{v\},T)$, respectively.
Also, we define $\Gamma_G(S)$ as the set of vertices in $V \setminus S$ adjacent to some vertex in $S$.
When the context is clear, we omit the subscripts.

Let $f:2^E\rightarrow \mathbb{R}$ be a set function on a finite set $E$.
$f$ is called {\em submodular} if $f(X)+f(Y)\geq f(X\cap Y)+f(X\cup Y)$ holds for any $X,Y\subseteq V$,
and $f$ is called {\em non-decreasing} if $f(X)\leq f(Y)$ for any $X\subseteq Y\subseteq V$.
Edmonds and Rota~\cite{edmonds:1966} observed (and Pym and Perfect~\cite{pym} formally proved) that an integer-valued non-decreasing submodular function $f:2^E\rightarrow \mathbb{Z}$
{\em induces} a matroid on $E$, where $F\subseteq E$ is independent if and only if $|F'|\leq f(F')$ for every non-empty $F'\subseteq F$.

For a graph $G=(V,E)$ and integers $k \geq 1,\ell \geq 0$, we define a function $f_{k,\ell}:2^E\rightarrow \mathbb{Z}$ by $f_{k,\ell}(F)=k|V(F)|-\ell$ for $F\subseteq E$.
It is known (and easy to show anyway) that $f_{k,\ell}$ is non-decreasing and submodular.
Thus, $f_{k,\ell}$ induces a matroid on $E$, that is, the $(k,\ell)$-sparsity matroid ${\cal M}_{k,\ell}(G)$ defined in introduction.
The rank function  and the closure operator are denoted by $\rho_{k,\ell}$ and ${\rm cl}_{k,\ell}$, respectively.
We note that $\rho_{k,\ell}(F)$ equals the size of the largest $(k,\ell)$-sparse edge set contained in $F$.
This implies that $G$ is $(k,\ell)$-tight iff the rank of ${\cal M}_{k,\ell}(G)$ is $kn-\ell$.

A set $F\subseteq E$ is called a \textit{$(k,\ell)$-connected set} if, for any pair $e,e'\in F$, $F$ has a circuit of ${\cal M}_{k,\ell}(G)$ that contains $e$ and $e'$. 
For simplicity of the description, a singleton $\{e\}$ is also considered as a $(k,\ell)$-connected set.
A maximal $(k,\ell)$-connected set w.r.t.\ edge inclusion is called a \textit{$(k,\ell)$-connected component}.
The following property of $(k,\ell)$-connected sets is just a restatement of a general fact on matroid-connectivity for our purpose.
\begin{proposition}
\label{prop:property1} 
Let $G=(V,E)$ be a graph and $k \geq 1, \ell \geq 0$ be integers with $2k-\ell\geq 1$.
Then, ${\cal M}_{k,\ell}(G)$ has the following properties:
\begin{description}
\item[(i)] For two $(k,\ell)$-connected sets $F_1$ and $F_2$ with $F_1\cap F_2\neq \emptyset$, $F_1\cup F_2$ is $(k,\ell)$-connected.
\item[(ii)] We can uniquely partition $E$ into $(k,\ell)$-connected components $\{C_1,\dots, C_t\}$,
and the following relation holds: 
\begin{equation}
\label{eq:connected_component}
\rho_{k,\ell}(E)=\sum_{i=1}^t \rho_{k,\ell}(C_i).
\end{equation}
\end{description}
\end{proposition}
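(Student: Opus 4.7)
My plan is to deduce Proposition~\ref{prop:property1} from two standard facts about matroid connectivity applied to $\calM_{k,\ell}(G)$: (a) the binary relation $\equiv$ on $E$ defined by $e \equiv e'$ iff $e = e'$ or some circuit of $\calM_{k,\ell}(G)$ contains both is an equivalence relation, with transitivity the substantive claim (established by an induction using strong circuit elimination, as in Oxley's textbook); and (b) the rank function of a matroid is additive over its connected components.

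For (i), let $e_0 \in F_1 \cap F_2$ and fix arbitrary $e, e' \in F_1 \cup F_2$. When $\{e, e'\} \subseteq F_i$ for some $i$, a common circuit inside $F_i \subseteq F_1 \cup F_2$ is supplied directly by the $(k,\ell)$-connectedness of $F_i$. Otherwise, I may assume $e \in F_1 \setminus F_2$ and $e' \in F_2 \setminus F_1$, and extract from the hypothesis circuits $C_1 \subseteq F_1$ through $\{e, e_0\}$ and $C_2 \subseteq F_2$ through $\{e_0, e'\}$. The transitivity argument for $\equiv$, applied to the witness pair $(C_1, C_2)$, produces a circuit $C^*$ through $\{e, e'\}$; since every intermediate circuit manufactured by strong circuit elimination stays inside the union of its two inputs, one has $C^* \subseteq C_1 \cup C_2 \subseteq F_1 \cup F_2$, as required.

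For (ii), I claim the maximal $(k,\ell)$-connected sets are exactly the $\equiv$-equivalence classes. Each class $C$ is $(k,\ell)$-connected because a witness circuit for $e \equiv e'$ has all its elements $\equiv$-related to $e$ and hence lies entirely in $C$; conversely, any $(k,\ell)$-connected set consists of pairwise $\equiv$-related elements, so it is contained in a single class, and maximal ones must coincide with entire classes. Hence $E$ partitions into the components $C_1, \dots, C_t$. Submodularity of $\rho_{k,\ell}$ applied to the disjoint $C_i$'s gives $\rho_{k,\ell}(E) \leq \sum_i \rho_{k,\ell}(C_i)$. For the reverse inequality, I fix a base $B_i$ of $\calM_{k,\ell}(G)|C_i$ and form $B = \bigsqcup_i B_i$; any circuit $D \subseteq B$ of $\calM_{k,\ell}(G)$ has all its elements in a single class, hence is contained in some $C_i$, forcing $D \subseteq B_i$, which contradicts independence of $B_i$. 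Thus $B$ is independent and $\rho_{k,\ell}(E) \geq |B| = \sum_i \rho_{k,\ell}(C_i)$, establishing~(\ref{eq:connected_component}).

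The main obstacle is the transitivity of $\equiv$, equivalently, that matroid components are well-defined. The inductive strong-circuit-elimination argument is somewhat delicate, but it is a classical result, so the write-up can invoke it as a black box rather than reproduce the details.
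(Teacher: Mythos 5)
Your proof is correct and takes essentially the same approach as the paper, which itself supplies no argument but simply cites~\cite[Chapter~4]{Oxl92} for these standard matroid-connectivity facts; your write-up is precisely the Oxley-style reconstruction (circuit-transitivity defines an equivalence relation, its classes are the connected components, and the rank is additive over them via the direct-sum/submodularity argument). As a minor polish, the cleanest justification that the produced circuit $C^*$ lies in $C_1\cup C_2$ is to apply the transitivity theorem to the restriction $\calM_{k,\ell}(G)|(C_1\cup C_2)$, whose circuits are exactly the circuits of $\calM_{k,\ell}(G)$ contained in $C_1\cup C_2$, rather than tracking the intermediate circuits of the strong-elimination induction.
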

Proofs can be found in e.g.,~\cite[Chapter~4]{Oxl92}.
A $(k,\ell)$-connected set (or component) is called \textit{trivial} if it is singleton, otherwise \textit{non-trivial}. 
We remark that $\{e\}$ is a trivial $(k,\ell)$-connected component if and only if $e$ is a \textit{coloop} in ${\cal M}_{k,\ell}(G)$ (i.e., every base contains $e$) 
since ${\cal M}_{k,\ell}(G)$ has no loop (in the matroid sense) if $2k-\ell\geq 1$.
Hence, if we denote the family of non-trivial $(k,\ell)$-connected components in ${\cal M}_{k,\ell}(G)$ by $\{C_1,\dots, C_s\}$, then 
\eqref{eq:connected_component} implies 
\begin{equation}
\label{eq:connected_component2}
\rho_{k,\ell}(E)=\left|E\setminus \bigcup_{i=1}^s C_i\right|+\sum_{i=1}^s \rho_{k,\ell}(C_i).
\end{equation}
  
We also need the following known properties of ${\cal M}_{k,\ell}(G)$.
(Since they are so fundamental, we present proofs for completeness.) 
\begin{lemma}
\label{lmm:property2}
Let $G=(V,E)$ be a graph and $k \geq 1, \ell \geq 0$ be integers with $2k-\ell\geq 1$.
Then, ${\cal M}_{k,\ell}(G)$ has the following properties:
\begin{description}
\item[(i)] For any circuit $C$ of ${\cal M}_{k,\ell}(G)$, $\rho_{k,\ell}(C)=f_{k,\ell}(C)$.
\item[(ii)] For any non-trivial $(k,\ell)$-connected set $F\subseteq E$, $\rho_{k,\ell}(F)=f_{k,\ell}(F)$. 
Namely, $F$ is $(k,\ell)$-full.
\end{description}
\end{lemma}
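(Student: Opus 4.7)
The plan is to prove (i) directly from the defining property of a circuit in a matroid induced by a submodular function, and then bootstrap (i) into (ii) by decomposing a $(k,\ell)$-connected set into a chain of circuits and iterating a union argument.

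For (i), since $\calM_{k,\ell}(G)$ is induced by $f_{k,\ell}$, the dependency of a circuit $C$ is witnessed by some non-empty $C'\subseteq C$ with $|C'|>f_{k,\ell}(C')$; the minimality of $C$ forces $C'=C$, so $|C|\geq f_{k,\ell}(C)+1$. Conversely, $C\setminus\{e\}$ is independent for every $e\in C$, so by monotonicity of $f_{k,\ell}$, $|C|-1\leq f_{k,\ell}(C\setminus\{e\})\leq f_{k,\ell}(C)$. Hence $|C|=f_{k,\ell}(C)+1$, and since $C\setminus\{e\}$ is a maximum independent subset of $C$, $\rho_{k,\ell}(C)=|C|-1=f_{k,\ell}(C)$.

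For (ii), the central observation is that the family of \emph{tight} sets, i.e., those $F\subseteq E$ with $\rho_{k,\ell}(F)=f_{k,\ell}(F)$, is closed under unions with non-empty intersection. Specifically, if $F_1$ and $F_2$ are tight and $F_1\cap F_2\neq\emptyset$, then submodularity of $\rho_{k,\ell}$ gives
\[\rho_{k,\ell}(F_1\cup F_2)\ \geq\ \rho_{k,\ell}(F_1)+\rho_{k,\ell}(F_2)-\rho_{k,\ell}(F_1\cap F_2).\]
Using $\rho_{k,\ell}(F_i)=f_{k,\ell}(F_i)$, the general bound $\rho_{k,\ell}(F_1\cap F_2)\leq f_{k,\ell}(F_1\cap F_2)$ (which is meaningful because $|V(F_1\cap F_2)|\geq 2$ forces $f_{k,\ell}(F_1\cap F_2)\geq 2k-\ell\geq 1$), and submodularity of $f_{k,\ell}$, the right-hand side is at least $f_{k,\ell}(F_1\cup F_2)$. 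Combined with the trivial bound $\rho_{k,\ell}\leq f_{k,\ell}$, this yields $\rho_{k,\ell}(F_1\cup F_2)=f_{k,\ell}(F_1\cup F_2)$.

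I then decompose a non-trivial $(k,\ell)$-connected set $F$ as $F=C_1\cup\cdots\cup C_m$ where each $C_i$ is a circuit and $C_{i+1}\cap(C_1\cup\cdots\cup C_i)\neq\emptyset$ for every $i\geq 1$. This is built greedily: pick any circuit $C_1\subseteq F$ (which exists since $F$ is non-trivial and $(k,\ell)$-connected), and whenever $F_i:=C_1\cup\cdots\cup C_i\subsetneq F$, pick any $e\in F\setminus F_i$ and $e'\in F_i$; the definition of $(k,\ell)$-connectedness then supplies a circuit $C_{i+1}\subseteq F$ containing both, which ensures $C_{i+1}\cap F_i\neq\emptyset$ and $F_{i+1}\supsetneq F_i$. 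By (i) each $C_i$ is tight, so iterating the tight-closure observation on $F_1,F_2,\dots,F_m$ yields that $F=F_m$ is tight, proving $\rho_{k,\ell}(F)=f_{k,\ell}(F)$. The only subtlety to keep track of is the non-empty intersection hypothesis at each step, which is precisely how the chain was arranged.
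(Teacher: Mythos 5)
Your proof of part (i) is correct and matches the paper's argument.

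Part (ii) contains a genuine error. The ``tight-closure observation'' is justified by claiming that submodularity of $\rho_{k,\ell}$ gives
\[
\rho_{k,\ell}(F_1\cup F_2)\ \geq\ \rho_{k,\ell}(F_1)+\rho_{k,\ell}(F_2)-\rho_{k,\ell}(F_1\cap F_2),
\]
but this is backwards. Submodularity of a matroid rank function states $\rho(X)+\rho(Y)\geq\rho(X\cup Y)+\rho(X\cap Y)$, which yields the \emph{opposite} inequality $\rho(F_1\cup F_2)\leq\rho(F_1)+\rho(F_2)-\rho(F_1\cap F_2)$. Rank functions are not supermodular, so the inequality you need does not hold in general and certainly does not follow ``from submodularity.'' And this is the load-bearing step: to conclude $\rho(F_1\cup F_2)\geq f_{k,\ell}(F_1\cup F_2)$ you need a \emph{lower} bound on $\rho(F_1\cup F_2)$, which none of the submodularity-type inequalities available to you can supply. (The gluing fact you are after---tight sets sharing an edge have a tight union---is true for these count matroids, but it is not an elementary consequence of submodularity; it essentially needs an argument at the level of part (ii) itself.)

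The chain-of-circuits decomposition and the reduction to a closure-under-union lemma is a reasonable plan, but as written the key step is unproven. The paper's route is different and sidesteps this: assuming $\rho_{k,\ell}(F)<f_{k,\ell}(F)$, it picks a rank-increasing edge $uv$ on $V(F)$, uses $(k,\ell)$-connectedness to find a \emph{single} circuit $C\subseteq F$ touching both $u$ and $v$, and then applies part (i) together with $f_{k,\ell}(C+uv)=f_{k,\ell}(C)$ to conclude $uv\in\mathrm{cl}_{k,\ell}(C)\subseteq\mathrm{cl}_{k,\ell}(F)$---a contradiction. That argument never needs to glue two tight sets together, so it avoids the trap you fell into. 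To salvage your approach, you would need to prove the tight-union lemma by some other means (for example via the closure characterization of tight sets), at which point you may as well use the circuit argument directly.
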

\begin{proof}
A proof for (i): 
Since $C$ is a minimal dependent set, $|C|>f_{k,\ell}(C)$ and $|C|-1=|C-e|\leq f_{k,\ell}(C-e)\leq f_{k,\ell}(C)$ for any $e\in C$.
This implies $|C|=f_{k,\ell}(C)+1$. Thus, $\rho_{k,\ell}(C)=|C|-1=f_{k,\ell}(C)$.

A proof for (ii): Suppose $\rho_{k,\ell}(F)<f_{k,\ell}(F)$. 
Then, there is an edge $uv\notin F$ with $u,v\in V(F)$ such that $\rho_{k,\ell}(F+uv)=\rho_{k,\ell}(F)+1$.
Let us take two distinct edges $e$ and $e'$ of $F$ incident to $u$ and $v$, respectively.
(It is easy to see that such two edges exist since $F$ is $(k,\ell)$-connected.)
Since $F$ is $f$-connected, there is a circuit $C\subseteq F$ that contains $e$ and $e'$.
Then, by (i) and by $f_{k,\ell}(C+uv)=f_{k,\ell}(C)$,  
we obtain $\rho_{k,\ell}(C+uv)\leq f_{k,\ell}(C+uv)=f_{k,\ell}(C)=\rho_{k,\ell}(C)$, implying $\rho_{k,\ell}(C+uv)=\rho_{k,\ell}(C)$.
In other words, $uv$ is contained in the closure of $C$. This contradicts $\rho_{k,\ell}(F+uv)=\rho_{k,\ell}(F)+1$.
\end{proof}

We also need the following relation between ${\cal M}_{k,\ell}(G)$ and ${\cal M}_{k,\ell'}(G)$ with distinct $\ell$ and $\ell'$.
\begin{lemma}
\label{lmm:property3}
Let $G=(V,E)$ be a graph and $k \geq 1, \ell \geq 0$ be integers with $2k-\ell\geq 1$.
Then, any $(k,\ell)$-sparse set $F\subseteq E$ is $(k,\ell')$-sparse for any $\ell'\leq \ell$.
\end{lemma}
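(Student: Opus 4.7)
The plan is to unwind the definition of sparsity and observe that the condition becomes weaker as $\ell$ decreases. Concretely, I would start from an arbitrary non-empty subset $F' \subseteq F$. By the hypothesis that $F$ is $(k,\ell)$-sparse, we have $|F'| \leq k|V(F')| - \ell$, where $V(F')$ denotes the vertex set incident to $F'$. Since $\ell' \leq \ell$, we have $-\ell \leq -\ell'$, and therefore $k|V(F')| - \ell \leq k|V(F')| - \ell'$. Chaining these two inequalities yields $|F'| \leq k|V(F')| - \ell'$ for every non-empty $F' \subseteq F$, which is exactly the defining property of $(k,\ell')$-sparsity.

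So the argument is a one-line consequence of the definition, and there is no genuine obstacle to overcome. If a slightly more structural explanation is desired, I would alternatively observe that the inducing functions satisfy $f_{k,\ell'}(F') = k|V(F')| - \ell' \geq k|V(F')| - \ell = f_{k,\ell}(F')$ pointwise, so any set whose subsets all satisfy the upper bound $f_{k,\ell}$ automatically satisfies the weaker upper bound $f_{k,\ell'}$; equivalently, every independent set of ${\cal M}_{k,\ell}(G)$ is independent in ${\cal M}_{k,\ell'}(G)$.

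The only cosmetic point worth flagging is the standing assumption $2k - \ell \geq 1$ mentioned earlier in the paper, which is needed for $(k,\ell)$-sparsity to be non-vacuous. Since $\ell' \leq \ell$ gives $2k - \ell' \geq 2k - \ell \geq 1$, the assumption is preserved for the conclusion as well, and no additional hypothesis on $\ell'$ (such as non-negativity) is required for the proof itself to go through.
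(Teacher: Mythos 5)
Your proposal is correct and coincides with the paper's proof: both unwind the definition on an arbitrary non-empty $F'\subseteq F$ and use $\ell'\leq\ell$ to weaken the bound from $k|V(F')|-\ell$ to $k|V(F')|-\ell'$. (The paper's displayed inequality has a typo, writing $V(F)$ where $V(F')$ is meant; your version is the intended one.)
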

\begin{proof}
For any nonempty $F'\subseteq F$, we have $|F'|\leq k|V(F')|-\ell\leq k|V(F)|-\ell'$.
\end{proof}

Finally, we give the formal definition of the bounded-degree model.
\begin{definition}[Bounded-degree model] 
  In the \textit{bounded-degree model} with a degree bound $d$,
  we consider graphs with maximum degree at most $d$.
  A graph $G=(V,E)$ of $n$ vertices is represented by an oracle $\calO_G$ satisfying the followings:
  \begin{itemize}
    \setlength{\itemsep}{0pt}
  \item For each vertex $v \in V$, there exists an injection $\pi_v: E_G(v) \to [d]$
    such that $\pi_v$ is an injection.
  \item The oracle $\calO_G$, on two numbers $u \in V, i\in \mathbb{N}$,
    returns $v$ such that $(u,v)\in E$ and $\pi_u((u,v)) = i$.
    If no such vertex $v$ exists, it returns a special character $\bot$.
    An edge $e = uv$ is called the $i$-th edge of $u$ if $\pi_u(e) = i$.
  \end{itemize}
  Algorithms are given $V$, $n$, $d$, and the access to $\calO_G$ beforehand.
  For an error parameter $\epsilon > 0$, 
  a graph is called $\epsilon$-far from a property $P$,
  if we must add or remove at least $\frac{\epsilon dn}{2}$ edges to make $G$ satisfy $P$. 
\end{definition}

\section{Approximating the rank of ${\cal M}_{k,0}(G)$}\label{sec:matching}
\label{sec:k0}
Let $k \geq 1$ be an integer.
In this section, 
we present a constant-time approximation algorithm for the rank $\rho_{k,0}(E)$ of $\calM_{k,0}(G)$ for a graph $G=(V,E)$.
A crucial fact is that computing $\rho_{k,0}(E)$ can be reduced to computing the size of a maximum matching in an auxiliary bipartite graph $G_k$ obtained from $G$.
The vertex set of $G_k$ is $E \cup (V \times [k])$ where $E$ and $V \times [k]$ form a partition, and
$G_k$ has an edge between $e \in E$ and $(v, i) \in V \times [k]$ iff  $e$ is incident to $v$ in the original graph $G$ (see Figure~\ref{fig:auxiliary}). 
From the celebrated Hall's marriage theorem, 
the following result easily follows (see e.g.,~\cite{Ima83} for more details):

\begin{figure}[t]
  \centering
  \begin{minipage}{0.4\textwidth}
    \centering
    \includegraphics[scale=0.8]{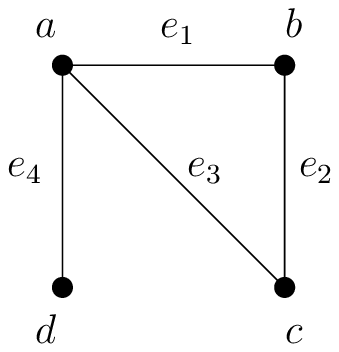}
    \par (a)
  \end{minipage}
  \begin{minipage}{0.4\textwidth}
    \centering
    \includegraphics[scale=0.8]{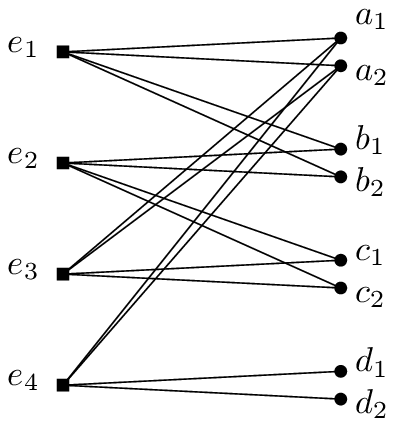}
    \par (b)
  \end{minipage}
  \caption{(a) $G$ and (b) $G_2$.}
  \label{fig:auxiliary}
\end{figure}

\begin{proposition}
\label{prop:pseudo_forest}
Let $G=(V,E)$ be a graph and $k$ be an integer.
Then, $G_k$ contains a matching covering $F \subseteq E$ if and only if $F$ is $(k,0)$-sparse.
\end{proposition}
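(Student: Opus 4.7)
The plan is to apply Hall's marriage theorem directly to the bipartite graph $G_k$. Since $G_k$ has bipartition $(E,\, V \times [k])$, Hall's theorem tells us that $G_k$ contains a matching covering $F \subseteq E$ if and only if $|N_{G_k}(F')| \geq |F'|$ for every $F' \subseteq F$. So the proposition reduces to showing that this Hall condition is equivalent to the $(k,0)$-sparsity condition on $F$.

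To do this, I would first compute $N_{G_k}(F')$ explicitly for an arbitrary $F' \subseteq F$. By the construction of $G_k$, a vertex $(v,i) \in V \times [k]$ is adjacent to $e \in F'$ if and only if $v$ is an endpoint of $e$ in $G$, independently of the index $i$. Hence $(v,i) \in N_{G_k}(F')$ if and only if $v \in V(F')$, which gives the identity $N_{G_k}(F') = V(F') \times [k]$ and therefore $|N_{G_k}(F')| = k|V(F')|$.

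Substituting this back into Hall's condition yields $|F'| \leq k|V(F')|$ for every non-empty $F' \subseteq F$ (the empty case being trivial), which is exactly the definition of $F$ being $(k,0)$-sparse. The argument is therefore essentially a one-step translation via Hall, and I do not anticipate any real obstacle; the only point requiring a little care is to confirm that the neighborhood identity $N_{G_k}(F') = V(F') \times [k]$ follows directly from the way $G_k$ is defined, and to handle the trivial empty subcase.
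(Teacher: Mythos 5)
Your proof is correct and uses exactly the approach the paper intends: the paper merely remarks that the proposition "easily follows" from Hall's marriage theorem (citing Imai for details), and your argument is a clean, complete write-out of that Hall-condition calculation, including the key neighborhood identity $N_{G_k}(F') = V(F') \times [k]$.
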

Proposition~\ref{prop:pseudo_forest} implies that the rank of ${\cal M}_{k,0}(G)$ is equal to the size of a maximum matching in $G_k$.
We use the following algorithm.
\begin{lemma}[\cite{YYI09}]\label{lmm:approximation-to-matching}
  In the bounded-degree model with a degree bound $d$, 
  there exists a $(1,\epsilon n)$-approximation algorithm for the size of the maximum matching of a graph with query complexity $d^{O(1/\epsilon^2)}(\frac{1}{\epsilon})^{O(1/\epsilon)}$.
\end{lemma}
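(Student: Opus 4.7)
The plan is to reduce estimation of $|M^*|$ (the maximum matching size) to a small number of local ``is $v$ matched?'' queries on a carefully constructed near-optimal matching $M$, in the spirit of the local computation framework of Nguyen and Onak. Concretely, I would sample $s=\Theta(1/\epsilon^2)$ vertices $v_1,\dots,v_s$ uniformly at random, determine for each whether it is covered by $M$, and output $\tfrac{n}{2}$ times the empirical covered fraction (shifted down by $\epsilon n/2$ to enforce the one-sided error required by the $(1,\epsilon n)$-approximation definition). By Hoeffding's inequality this quantity lies within $\epsilon n/2$ of $|M|$ with probability at least $2/3$, so it suffices to ensure $|M|\geq |M^*|-\epsilon n/2$ while making membership in $M$ answerable by $O(1)$ queries to $\calO_G$.

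To construct such an $M$, I would use a rank-based augmentation scheme. Assign every edge an independent uniform rank $r(e)\in[0,1]$, and let $M_0$ be the randomized greedy matching defined recursively by: $e\in M_0$ iff no adjacent edge $e'\neq e$ with $r(e')<r(e)$ lies in $M_0$. This is well-defined by induction on rank, is maximal, and hence already a $\tfrac{1}{2}$-approximation. Then, for $i=1,\dots,L$ with $L=\Theta(1/\epsilon)$, obtain $M_i$ from $M_{i-1}$ by augmenting along a maximal vertex-disjoint family of $M_{i-1}$-augmenting paths of length exactly $2i-1$, again selected by a random ordering on candidate paths and a recursive greedy rule analogous to that for $M_0$. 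The standard Hopcroft--Karp length-shortening argument shows that $M_L$ admits no augmenting path of length $\leq 2L-1$, hence $|M_L|\geq \bigl(1-\tfrac{1}{L+1}\bigr)|M^*|\geq |M^*|-\epsilon n/2$, as required.

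The point of this rank-based construction is that membership queries are \emph{local}: deciding $e\in M_0$ requires only recursing into edges adjacent to $e$ of strictly smaller rank; deciding $e\in M_i$ reduces to $M_{i-1}$-membership of $e$ together with the existence of certain length-$(2i-1)$ augmenting paths through $e$, each of which is in turn a $O(L)$-depth chain of $M_{i-1}$-membership queries. Unfolded all the way down to $M_0$, each top-level query $v\in V(M_L)$ triggers a bounded but branching recursion that interacts with $G$ only through $\calO_G$.

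The main obstacle is bounding the number of oracle calls per membership query. A naive analysis multiplies a factor of $d^{O(L)}$ per layer across all $L$ layers and yields a bound that is doubly exponential. The crux is to exploit the random ranks: when the recursion inspects the $O(d)$ edges adjacent to a given edge $e$, only those with rank smaller than $r(e)$ need be explored further, and a uniformly random rank among $O(d)$ competitors is the smallest with probability $O(1/d)$, so the expected size of the recursion tree per layer is controlled by a function of $d$ rather than a power of it. A careful amortization of this saving across the $L$ augmentation layers, balancing the per-layer blow-up against the probability that the recursion actually descends, yields the stated query complexity $d^{O(1/\epsilon^2)}(1/\epsilon)^{O(1/\epsilon)}$; this amortized depth analysis is the technically hardest step.
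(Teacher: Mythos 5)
This lemma is not proved in the paper at all: it is imported as a black box from the cited reference [YYI09] (Yoshida, Yamamoto, and Ito), so there is no in-paper proof to compare against. What I can assess is whether your sketch constitutes a proof of the stated bound, and whether it matches the strategy of that cited work.

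Your high-level plan does mirror YYI09 and its predecessor by Nguyen and Onak: sample $\Theta(1/\epsilon^2)$ vertices, test local membership in a near-maximum matching $M_L$ obtained by layering $L=\Theta(1/\epsilon)$ rounds of length-$(2i-1)$ augmentation on top of a rank-based greedy maximal matching $M_0$, and invoke the Hopcroft--Karp bound $|M_L|\geq\bigl(1-\tfrac{1}{L+1}\bigr)|M^*|$. That scaffolding is correct (modulo tightening the constants around the $\epsilon n/2$ shift so the total additive error stays at most $\epsilon n$).

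The gap, however, is exactly where you flag it and it is not a minor loose end --- it is the entire content of the cited result. Your observation that ``a uniformly random rank among $O(d)$ competitors is the smallest with probability $O(1/d)$'' is precisely the intuition Nguyen and Onak exploited, and on its own it yields only a query bound of $2^{d^{O(1/\epsilon)}}$, doubly exponential in $1/\epsilon$: a naive union bound over rank-decreasing recursion paths does not cancel the exponential branching. The claimed $d^{O(1/\epsilon^2)}(1/\epsilon)^{O(1/\epsilon)}$ requires the substantially sharper argument of [YYI09], which bounds the \emph{expected total} number of recursive calls of the local rank-based greedy computation (over a random starting query and random ranks) by a quantity polynomial in the degree of the relevant conflict graph rather than exponential in it, and then propagates that bound through the layered construction in which the ``path graph'' at layer $i$ has degree $d^{O(i)}$. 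Your sentence ``the expected size of the recursion tree per layer is controlled by a function of $d$ rather than a power of it'' also slightly misstates the target: the control is per conflict-graph vertex at that layer, whose degree itself grows like $d^{O(i)}$. In short, you have correctly identified where the difficulty lives but have not supplied the argument, so as written this is a plausible program, not a proof, and the heuristic it offers is only known to give a far weaker bound.
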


To run the algorithm given in Lemma~\ref{lmm:approximation-to-matching} on $G_k$,
we want to make an oracle access $\calO_{G_k}$ to $G_k$ using the oracle access $\calO_G$ to $G$.
However, since we do not have a method to access $E$ directly,
the vertex set $E \cup (V \times [k])$ is inconvenient to design $\calO_{G_k}$.

To deal with this issue, 
we use a slightly different auxiliary graph,
which is essentially equivalent to the previous auxiliary graph.
First, we introduce \textit{arbitrary} ordering among vertices.
We call $(v, i) \in V \times [d]$ \textit{valid} if the $i$-th edge incident to $v$ exists and the vertex $v$ is the larger one in the endpoints of the edge,
and \textit{invalid} if otherwise.
Then, we define a graph $G_k = (U_k \cup V_k, E_k)$ where
\begin{eqnarray*}
  U_k &=& \{(0, v, i) \mid v \in V, i \in [d]\},\\
  V_k &=& \{(1, v, i) \mid v \in V, i \in [k]\}, \\
  E_k &=& \{ ((0, u, i), (1, v, j)) \in U_k \times V_k \mid \text{if $(u, i)$ is valid and the corresponding edge is incident to $v$ in $G$} \}.
\end{eqnarray*}
For a vertex $(b, v, i)$ in $G_k$, 
the first bit $b$ is used to distinguish whether the vertex is in $U_k$ or $V_k$.
We can see $G_k$ constructed here is isomorphic to the graph obtained from the previous auxiliary graph by adding singleton vertices.

\begin{lemma}\label{lmm:approximation-to-mk0}
  Let $k \geq 1$ be an integer.
  In the bounded-degree model with a degree bound $d$,
  for a graph $G$ of $n$ vertices,
  there exists a $(1,\epsilon n)$-approximation algorithm for the rank of $\calM_{k,0}(G)$ with query complexity $(k+d)^{O(1/\epsilon'^2)}(\frac{1}{\epsilon'})^{O(1/\epsilon')}$ where $\epsilon' = \frac{\epsilon}{k+d}$.
\end{lemma}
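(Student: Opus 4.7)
The plan is to combine Proposition~\ref{prop:pseudo_forest} with the matching approximation algorithm of Lemma~\ref{lmm:approximation-to-matching}, applied to the auxiliary graph $G_k=(U_k\cup V_k,E_k)$ constructed just above the statement. By Proposition~\ref{prop:pseudo_forest} (together with the fact that adding isolated vertices does not change the maximum matching size), $\rho_{k,0}(E)$ equals the size of a maximum matching in $G_k$, so it suffices to approximate the latter.

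First I would verify that $G_k$ has bounded degree. A vertex $(0,u,i)\in U_k$ is adjacent, when $(u,i)$ is valid, only to $(1,v,1),\dots,(1,v,k)$ where $v$ is the other endpoint of the $i$-th edge of $u$, so its degree is at most $k$; a vertex $(1,v,j)\in V_k$ is adjacent only to $(0,u,i)$ for the valid encodings of edges incident to $v$, so its degree is at most $d$. Hence the maximum degree of $G_k$ is bounded by $k+d$. The number of vertices in $G_k$ is $N=(k+d)n$.

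Next I would explain how to simulate a single query to $\calO_{G_k}$ using $\calO_G$. To list the neighbors of $(0,u,i)$, a single call to $\calO_G(u,i)$ tells us whether $(u,i)$ is valid and, if so, yields the other endpoint $v$; the neighbor list is then $\{(1,v,j):j\in[k]\}$. To list the neighbors of $(1,v,j)$, iterate over $i=1,\dots,d$, call $\calO_G(v,i)$, and check validity of each returned edge; this costs $O(d)$ queries to $\calO_G$. Thus one query to $\calO_{G_k}$ costs $O(d)$ queries to $\calO_G$, a factor that is absorbed into the asymptotic bound.

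Finally I would invoke Lemma~\ref{lmm:approximation-to-matching} on $G_k$ with degree bound $D=k+d$ and with additive error parameter chosen so that the output is within $\epsilon n$ of the true matching size. Since Lemma~\ref{lmm:approximation-to-matching} produces a $(1,\epsilon''N)$-approximation, I would set $\epsilon''=\epsilon/(k+d)=\epsilon'$ so that $\epsilon''N=\epsilon''(k+d)n=\epsilon n$. The resulting query complexity to $\calO_{G_k}$ is $(k+d)^{O(1/\epsilon'^2)}(1/\epsilon')^{O(1/\epsilon')}$, and multiplying by the $O(d)$ simulation overhead is still absorbed in this expression. The only delicate point — and arguably the main thing to get right — is the rescaling of the additive error from $G$ to $G_k$, because $G_k$ has $(k+d)n$ vertices rather than $n$, which is exactly why $\epsilon'$ appears as $\epsilon/(k+d)$ in the statement.
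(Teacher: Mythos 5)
Your overall plan — reduce to maximum matching on the auxiliary bipartite graph, simulate $\calO_{G_k}$ using $\calO_G$, and rescale the error parameter to compensate for $G_k$ having $(k+d)n$ vertices — is exactly the paper's approach, and your observation that the rescaling is the ``main thing to get right'' is on point. However, there is a concrete error in your reading of the auxiliary graph that would break the reduction.

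You claim that $(0,u,i)$, when $(u,i)$ is valid and corresponds to the edge $e=uw$, is adjacent \emph{only} to $(1,w,1),\dots,(1,w,k)$, i.e., to $k$ copies of the \emph{other} endpoint. But the construction puts an edge between $(0,u,i)$ and $(1,v,j)$ whenever $e$ is incident to $v$ — and $e$ is incident to \emph{both} $u$ and $w$. So $(0,u,i)$ is adjacent to all of $(1,u,1),\dots,(1,u,k)$ and $(1,w,1),\dots,(1,w,k)$, giving it degree $2k$, not $k$ (the paper's bound is $\max(2k,d)$). This is not merely an off-by-a-factor in the degree bound: the whole point of the $G_k$ construction (and Proposition~\ref{prop:pseudo_forest}) is that a matching that covers an edge-vertex assigns that edge to one of its \emph{two} endpoints, which is what makes Hall's condition equivalent to $(k,0)$-sparsity. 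If each edge-vertex were connected to copies of only one fixed endpoint, the matching number of your graph would be strictly smaller than $\rho_{k,0}(E)$ for many inputs (e.g., a triangle with $k=1$: each of the three edges would be forced toward its smaller endpoint, two edges would collide, and your matching would have size $2$ while $\rho_{1,0}(E)=3$). So the algorithm you describe would systematically underestimate the rank, and the proof does not go through as written. Relatedly, your simulation of $\calO_{G_k}$ for $(0,u,i)$ returns neighbors indexed only by $j\in[k]$; it needs to expose $2k$ neighbors, splitting the index range between the two endpoints as the paper does. Once the construction and simulation are corrected, the rest of your argument (the $O(d)$ overhead per simulated query including the search for the edge's index at the other endpoint, and the $\epsilon'=\epsilon/(k+d)$ rescaling) goes through and matches the paper.
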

\begin{proof}
  Let $n$ and $m$ be the number of vertices and edges in $G = (V, E)$, respectively.
  The number of vertices in $G_k$ is $n' := kn + dn = (k+d)n$.
  Also, the maximum degree of $G_k$ is $d' := \max(2k, d) = O(k + d)$.

  Using the oracle access $\calO_G$ to $G$,
  we make an oracle access $\calO_{G_k}$ to $G_k$ on which we will run the algorithm given in Lemma~\ref{lmm:approximation-to-matching}.
  For a query $\calO_{G_k}((b, v, i), j)$, we do the following.

  Suppose that $b = 0$, which means that $(b, v, i)$ is a vertex in $U_k$.
  We can check whether $(v, i)$ is valid by asking $\calO_G$ once.
  If $(v, i)$ is invalid, we return $\bot$.
  Suppose that $(v, i)$ is valid and it corresponds to an edge $e = uv$ where $v > u$.
  If $j \leq k$, we return $(1, u, j)$.
  If $j > k$, we return $(1, v, j - k + 1)$.
  
  Suppose that $b = 1$, which means that $(b, v, i)$ is a vertex in $V_k$.
  If there is no $j$-th edge incident to $v$, we return $\bot$.
  Let $e = uv$ be the $j$-th edge incident to $v$.
  If $v > u$, we return $(0, v, j)$.
  If $v < u$ and $e$ is the $j'$-th edge of $u$, we return $(0, u, j')$.
  Here, we can find $j'$ by asking $\calO_G$ at most $d$ times.

  To summarize, we can simulate the oracle access $\calO_{G_k}$ by asking $\calO_G$ at most $d+1$ times.
  To approximate the rank of $\calM_{k,0}(G)$ with an additive error $\epsilon n$,
  we run the algorithm given in Lemma~\ref{lmm:approximation-to-matching} on $\calO_{G_k}$
  after replacing $\epsilon$ by $\epsilon' = \frac{\epsilon}{k+d}$.
  The query complexity becomes $d\cdot d'^{O(1/\epsilon'^2)}(\frac{1}{\epsilon'})^{O(1/\epsilon')}$,
  and the lemma holds.
\end{proof}

\section{Approximating the rank of ${\cal M}_{k,\ell}(G)$}\label{sec:sparsity}
In this section, 
we describe a constant-time approximation algorithm for the rank of $\calM_{k,\ell}(G)$ for a graph $G = (V, E)$.
For a given error value $\epsilon$, 
let $t$ be a constant determined later.
We say that a subset $S\subseteq E$ is {\em large} if $|S|\geq t$; otherwise called {\em small}.


For an edge $e = uv$ and an integer $r > 0$, 
let $G_r(e)$ be the graph induced by the set of vertices whose distance to $u$ or $v$ is at most $r$.
Also, let $E_r(e)$ be the set of edges in $G_r(e)$.
The core of our approximation algorithm is an efficient implementation of an algorithm \textsf{Component}$(e)$ that (approximately) decides whether a given edge $e \in E$ is in a large $(k,\ell)$-connected set or not.
As a subroutine, we first prepare an algorithm called \textsf{SmallCircuits}$(e)$ in Algorithm~1 and then show \textsf{Component}$(e)$ in Algorithm~2.

\begin{algorithm}
  \caption{\textsf{SmallCircuits}$(e)$: returns the union of small circuits containing an edge $e$}
  \label{alg:small-curcuits}
  \begin{algorithmic}[1]
    \STATE $S = \{e\}$.
    \WHILE {there is an unchecked small circuit $C\subseteq E_t(e)$ containing $e$}
    \STATE $S=S\cup C$.
    \IF{$|S|\geq t$}
    \STATE \textbf{return} \textbf{Large} (a special symbol).
    \ENDIF
    \ENDWHILE
    \STATE \textbf{return} $S$.
  \end{algorithmic}
\end{algorithm}

\begin{algorithm}
  \caption{\textsf{Component}$(e)$: decides whether $e$ is contained in a large $(k,\ell)$-connected set}
  \label{alg:component}
  \begin{algorithmic}[1]
    \STATE $S = \{e\}$.
    \WHILE {there is an unchecked element $f$ in $S$}
    \IF{$\textsf{SmallCircuits}(f)=\mathbf{Large}$}
    \STATE \textbf{return} \textbf{Large}.
    \ENDIF
    \STATE check $f$. 
    \STATE $S=S\cup \textsf{SmallCircuits}(f)$ \label{line}
    \IF{$|S| \geq t$}
    \STATE \textbf{return} \textbf{Large}.
    \ENDIF
    \ENDWHILE
    \STATE \textbf{return} $S$.
  \end{algorithmic}
\end{algorithm}

The following sequence of lemmas shows structural properties of outputs of \textsf{SmallCircuits}$(e)$ and \textsf{Component}$(e)$.
\begin{lemma}
\label{lmm:circuits}
For any $e\in E$, $\textsf{SmallCircuits}(e)$ and $\textsf{Component}(e)$ are small $(k,\ell)$-connected sets unless they return \textbf{Large}.
\end{lemma}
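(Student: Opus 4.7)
The plan is to prove both statements simultaneously by induction on the main loop of each algorithm, using Proposition~\ref{prop:property1}(i): the union of two $(k,\ell)$-connected sets that share an element is again $(k,\ell)$-connected.

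For \textsf{SmallCircuits}$(e)$, I would first record the trivial observation that any circuit $C$ of $\calM_{k,\ell}(G)$ is itself a $(k,\ell)$-connected set, because for any two $e_1,e_2\in C$ the circuit $C$ itself serves as the required circuit containing both. Then I would maintain the loop invariant that $S$ is a $(k,\ell)$-connected set containing $e$. Initially $S=\{e\}$, which is $(k,\ell)$-connected by the convention stated before Proposition~\ref{prop:property1}. At each iteration we union $S$ with a small circuit $C$ containing $e$; by the invariant $e\in S\cap C$, and both $S$ and $C$ are $(k,\ell)$-connected, so $S\cup C$ is $(k,\ell)$-connected by Proposition~\ref{prop:property1}(i), which preserves the invariant. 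If the algorithm does not return \textbf{Large}, the test $|S|\geq t$ never succeeded, so $|S|<t$ and hence $S$ is small.

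For \textsf{Component}$(e)$, I would use the just-proved statement as a black box: whenever \textsf{SmallCircuits}$(f)$ does not return \textbf{Large}, its output is a $(k,\ell)$-connected set containing $f$. The same loop invariant (that $S$ is $(k,\ell)$-connected and contains $e$) can then be maintained, starting from $S=\{e\}$. On each iteration we have some already-processed $f\in S$ for which $T:=\textsf{SmallCircuits}(f)$ did not return \textbf{Large}; then $f\in S\cap T$ and both $S$ and $T$ are $(k,\ell)$-connected, so Proposition~\ref{prop:property1}(i) gives that $S\cup T$ is $(k,\ell)$-connected. Again, if the algorithm does not return \textbf{Large}, the condition $|S|\geq t$ was never met, so $S$ is small.

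The argument is essentially bookkeeping once Proposition~\ref{prop:property1}(i) is in hand, so I do not expect a real obstacle; the only point that needs care is to verify that every set we ever union into $S$ (a circuit in the first algorithm, an output of \textsf{SmallCircuits} in the second) is itself $(k,\ell)$-connected and shares an element with the current $S$, which is why the invariant is phrased to keep track of a distinguished common element ($e$ for \textsf{SmallCircuits}, and any previously checked $f$ for \textsf{Component}).
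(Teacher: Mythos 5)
Your proof is correct and takes essentially the same approach as the paper: observe that circuits are $(k,\ell)$-connected, then apply Proposition~\ref{prop:property1}(i) iteratively to the unions built up during each algorithm, and note that the $|S|\geq t$ check ensures smallness on a non-\textbf{Large} return. Your explicit loop-invariant formulation is a bit more careful at the \textsf{Component} step, where the paper simply says the claim ``similarly follows'' from the fact that $\textsf{Component}(e)$ is a union of $\textsf{SmallCircuits}(e')$ sets, glossing over the detail that those sets need not all share a single common element; your invariant correctly tracks that the intersection witness changes from iteration to iteration (it is the currently processed $f\in S$, not $e$).
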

\begin{proof}
Let $S=\textsf{SmallCircuits}(e)$.
If $S=\{e\}$, then $S$ is a trivial $(k,\ell)$-connected set.
If  $|S|>1$, then $S$ is the union of circuits containing $e$.
By Proposition~\ref{prop:property1}(i), $S$ is $(k,\ell)$-connected.

The latter claim similarly follows from Proposition~\ref{prop:property1}(i)  since $\textsf{Component}(e)$ is the union of $\textsf{SmallCircuit}(e')$ for all $e'\in \textsf{Component}(e)$.
\end{proof}

%

We define a relation $\sim$ on $E$ such that $e\sim f$ for $e,f\in E$ if and only if ${\cal M}_{k,\ell}(G)$ has  a small circuit that contains $e$ and $f$. 

\begin{lemma}
\label{lmm:large}
Suppose that $\textsf{Component}(e)=\mathbf{Large}$.
Then, there is a large $(k,\ell)$-connected set $S$ containing $e$ such that,
for each $f\in S$, 
\begin{itemize}
\item $e\sim f$, or 
\item $e\sim f'\sim f$ for some $f'\in S$.
\end{itemize}
\end{lemma}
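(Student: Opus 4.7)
The plan is to show that $\textsf{Component}(e)=\mathbf{Large}$ forces the depth-$2$ set $\bar S_2:=T_e\cup\bigcup_{f\in T_e}T_f$, where $T_e$ and $T_f$ denote the full unions of small circuits through $e$ and through $f$ respectively, to have size at least $t$, and then to take $S:=\bar S_2$ itself as the witness for the lemma.

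The core observation is that any $(k,\ell)$-connected set $F$ with $|F|<t$ is saturated by $\sim$: given any $e',f'\in F$, matroid connectivity produces a circuit $C\subseteq F$ containing them, and $|C|\le|F|<t$ makes $C$ small, so $e'\sim f'$. I would use this contrapositively. Assume $|\bar S_2|<t$. Iterated application of Proposition~\ref{prop:property1}(i), using the intersections $f\in T_e\cap T_f$ for each $f\in T_e$, shows that $\bar S_2$ is $(k,\ell)$-connected, so by the observation every $g\in\bar S_2$ satisfies $e\sim g$, which forces $g\in T_e$. Thus $\bar S_2=T_e$, and consequently $T_f\subseteq T_e$ with $|T_f|<t$ for every $f\in T_e$. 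But then no $\textsf{SmallCircuits}$ call issued by $\textsf{Component}(e)$ returns $\mathbf{Large}$, and the cumulative $S$ never leaves $T_e$, so $|S|<t$ throughout; the algorithm would return the small set $T_e$, contradicting the hypothesis.

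Having secured $|\bar S_2|\ge t$, the set $S:=\bar S_2$ discharges the conclusion: the same iterated application of Proposition~\ref{prop:property1}(i) yields $(k,\ell)$-connectedness, $|S|\ge t$ and $e\in S$ are immediate, and every $g\in S$ lies in some $T_f$ with $f\in T_e\subseteq S$ (taking $f=e$ when $g\in T_e$), giving the required witness --- either $e\sim g$ directly or $e\sim f\sim g$ with $f'=f\in S$. The step I expect to require the most care is the saturation observation itself, together with verifying that $\bar S_2$ really is $(k,\ell)$-connected via Proposition~\ref{prop:property1}(i); it is precisely this that pins $\bar S_2=T_e$ and thereby blocks $\textsf{Component}(e)$ from ever returning $\mathbf{Large}$ when $\bar S_2$ is small.
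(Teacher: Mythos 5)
Your argument is correct and reaches the lemma by a slightly different route than the paper's. The paper reads the witness directly off the execution trace of Algorithm~2: when $\textsf{Component}(e)$ reports $\mathbf{Large}$, the accumulated set $S$ just beforehand is a small $(k,\ell)$-connected set containing $e$, and the witness is taken to be $S\cup S_f$, where $S_f$ is the union of all small circuits through the critical unchecked edge $f$ that triggered the return; the distance condition then holds because a small $(k,\ell)$-connected set is $\sim$-saturated (so $S$ sits inside the union of small circuits through $e$), but the paper leaves this final check implicit. You instead surface the $\sim$-saturation observation as an explicit lemma and build a static, algorithm-independent witness, the depth-two union $\bar S_2$, proving it is large by contrapositive: if $\bar S_2$ were small, it would collapse to the union of small circuits through $e$, every $\textsf{SmallCircuits}(f)$ call would return a small subset of it, and the accumulator in $\textsf{Component}(e)$ could never reach size $t$. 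Both proofs rest on the same key fact (small $(k,\ell)$-connected sets contain only $\sim$-related pairs, via Lemma~\ref{lmm:circuits} and Proposition~\ref{prop:property1}(i)). Your version buys a canonical witness that does not depend on the while-loop's nondeterministic processing order and makes visible the saturation step the paper uses silently, at the cost of being somewhat longer than the paper's two-case trace argument.
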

\begin{proof}
If $\textsf{SmallCircuits}(e)$ returns \textbf{Large}, then the union of small circuits containing $e$ forms a large $(k,\ell)$-connected set.
This set satisfies the property of the statement.

Thus, assume $\textsf{SmallCircuits}(e)\neq \mathbf{Large}$. 
Since $\textsf{Component}(e)$ returns \textbf{Large}, we encounter either one of the following two situations at the end of Algorithm~2:
a small $(k,\ell)$-connected set $S$ with $e\in S$ contains an edge $f$ such that (i) $\textsf{SmallCircuits}(f)$ returns \textbf{Large} or 
(ii) $\textsf{SmallCircuits}(f)$ is small but $S\cup \textsf{SmallCircuits}(f)$ is large.
In both cases, let $S_{f}$ be the union of all small circuits containing $f$.
Then, $S\cup S_{f}$ is a desired large $(k,\ell)$-connected set. 
\end{proof}

\begin{lemma}
\label{lmm:maximal}
Let $e\in E$.
Suppose that $\textsf{Component}(e)$ does not return \textbf{Large}.
Then, every small $(k,\ell)$-connected set intersecting $\textsf{Component}(e)$ is contained in $\textsf{Component}(e)$.
\end{lemma}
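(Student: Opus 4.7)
The plan is to fix $S := \textsf{Component}(e)$ and an arbitrary small $(k,\ell)$-connected set $T$ with $T \cap S \neq \emptyset$, and to prove that every $f \in T$ lies in $S$. First I would pick any $g \in T \cap S$; if $f = g$ there is nothing to do, so assume $f \neq g$. Since $T$ is $(k,\ell)$-connected, there exists a circuit $C$ of ${\cal M}_{k,\ell}(G)$ with $\{f,g\} \subseteq C \subseteq T$, and $|C| \leq |T| < t$ makes $C$ a small circuit through both $f$ and $g$.

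The next step is to verify that $C \subseteq E_t(g)$, so that $C$ is visible to the local search inside \textsf{SmallCircuits}$(g)$. For this I would first observe that, when $\ell \geq 0$, every circuit of ${\cal M}_{k,\ell}(G)$ is connected as a subgraph of $G$: a splitting $C = C_1 \sqcup C_2$ into vertex-disjoint non-empty parts would, by minimality of $C$, force $|C_i| \leq k|V(C_i)| - \ell$ for $i = 1,2$, whence $|C| \leq k|V(C)| - 2\ell$, in contradiction with $|C| = k|V(C)| - \ell + 1$ from Lemma~\ref{lmm:property2}(i). A graph-connected $C$ with fewer than $t$ edges therefore has $|V(C)| \leq t$ and diameter at most $|C| \leq t-1$, so every endpoint of any edge of $C$ lies within distance $t-1$ from an endpoint of $g$; this gives $C \subseteq E_t(g)$.

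Finally I would invoke the hypothesis that \textsf{Component}$(e)$ did not return \textbf{Large}. This forces the outer while-loop of Algorithm~2 to have emptied its queue, so in particular $g$, which is in $S$, was eventually checked, and the corresponding call \textsf{SmallCircuits}$(g)$ returned a set (rather than \textbf{Large}). Inspecting Algorithm~1, a normal return of \textsf{SmallCircuits}$(g)$ means that its while-loop processed every small circuit in $E_t(g)$ containing $g$ and unioned them into its output; our $C$ is precisely such a circuit, so $f \in C \subseteq \textsf{SmallCircuits}(g) \subseteq S$, as required. The only non-routine ingredient I anticipate is the graph-connectedness of matroid circuits established above; the rest is direct bookkeeping over the pseudocode of Algorithms~1 and~2.
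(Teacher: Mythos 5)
Your proposal is correct and follows the same outline as the paper's proof: pick an edge $g$ common to $T$ and $\textsf{Component}(e)$, extract a small circuit of $\mathcal{M}_{k,\ell}(G)$ through $g$ and the target edge $f$ inside $T$, and observe that this circuit is absorbed by $\textsf{SmallCircuits}(g)$ and hence by $\textsf{Component}(e)$. The paper writes this as ``Since $f \sim f'$, we have $f' \in \textsf{SmallCircuits}(f)$'', which tacitly assumes that any small circuit through $f$ lies in the local ball $E_t(f)$ that $\textsf{SmallCircuits}$ actually inspects. You make that step explicit by proving that circuits of $\mathcal{M}_{k,\ell}(G)$ are connected as subgraphs (a vertex-disjoint split $C = C_1 \sqcup C_2$ would give $|C| \leq k|V(C)| - 2\ell$, contradicting $|C| = k|V(C)| - \ell + 1$ for $\ell \geq 0$), so a circuit with fewer than $t$ edges has diameter less than $t$ and therefore sits inside $E_t(g)$. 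This is a genuine and worthwhile addition that closes a gap the paper leaves implicit, but it is a refinement of the same argument rather than a different route.
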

\begin{proof}
Let $S=\textsf{Component}(e)$.
Suppose that ${\cal M}_{k,\ell}(G)$ has a small $(k,\ell)$-connected set $S'$ such that $S\cap S'\neq\emptyset$ and $S'\setminus S\neq \emptyset$.
Take $f\in S\cap S'$ and $f'\in S'\setminus S$.
Since $f\sim f'$, we have $f'\in \textsf{SmallCircuts}(f)$.
By Line~\ref{line} of Algorithm~2, we obtain $f'\in \textsf{Component}(e)=S$, a contradiction. 
\end{proof}

\begin{lemma}
\label{lmm:relation}
For any $e\in E$ with $\textsf{Component}(e)\neq \mathbf{Large}$ and for any $f\in \textsf{Component}(e)$,
$\textsf{Component}(e)=\textsf{Component}(f)$.
\end{lemma}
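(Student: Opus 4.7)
The plan is to establish the two inclusions $\textsf{Component}(f)\subseteq \textsf{Component}(e)$ and $\textsf{Component}(e)\subseteq \textsf{Component}(f)$ separately. Writing $S_e:=\textsf{Component}(e)$, Lemma~\ref{lmm:circuits} and the hypothesis tell us that $S_e$ is a small $(k,\ell)$-connected set. Moreover, the \textbf{while}-loop of Algorithm~\ref{alg:component} terminates only when there are no unchecked elements left, so during the run of $\textsf{Component}(e)$ every $g\in S_e$ is eventually checked; in particular the call $\textsf{SmallCircuits}(g)$ is evaluated, does not return \textbf{Large} (otherwise $\textsf{Component}(e)$ would have), and its output is absorbed into $S_e$.

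For the first inclusion I would trace the execution of Algorithm~\ref{alg:component} started from $f$ and show by induction on the iteration count that the running working set stays within $S_e$. The base case $\{f\}\subseteq S_e$ is exactly the hypothesis. For the inductive step, the next iteration picks an unchecked $g$ from the current working set, which therefore lies in $S_e$; since $\textsf{SmallCircuits}(g)$ is a deterministic function of $g$ and $G$ (its return value is \textbf{Large} iff the union of all small circuits in $E_t(g)$ through $g$ has at least $t$ edges, and is that union otherwise), the call produces the same result it did during $\textsf{Component}(e)$, namely a subset of $S_e$ different from \textbf{Large}. Consequently neither \textbf{Large}-exit of Algorithm~\ref{alg:component} ever fires during $\textsf{Component}(f)$: the \textsf{SmallCircuits} check is safe by the previous sentence, and the size check is safe because the working set stays bounded by $|S_e|<t$. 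Hence $\textsf{Component}(f)$ terminates normally with some $S_f\subseteq S_e$.

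For the reverse inclusion I would apply Lemma~\ref{lmm:maximal} to $\textsf{Component}(f)$: since $S_f\neq \textbf{Large}$, every small $(k,\ell)$-connected set meeting $S_f$ is contained in $S_f$. By Lemma~\ref{lmm:circuits}, $S_e$ is small and $(k,\ell)$-connected, and $f\in S_e\cap S_f$, so $S_e\subseteq S_f$. Combining the inclusions yields $\textsf{Component}(e)=\textsf{Component}(f)$. The main obstacle, and the one that most needs an explicit sentence in the final write-up, is the determinism claim for $\textsf{SmallCircuits}(g)$ that justifies identifying its two calls inside $\textsf{Component}(e)$ and $\textsf{Component}(f)$; everything else reduces to routine bookkeeping and a direct appeal to Lemmas~\ref{lmm:circuits} and~\ref{lmm:maximal}.
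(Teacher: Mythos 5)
Your proposal is correct, and it diverges from the paper's proof in one noteworthy way. The paper establishes $\textsf{Component}(f)\neq\mathbf{Large}$ by a structural contradiction argument: it invokes Lemma~\ref{lmm:large} to extract a large $(k,\ell)$-connected set $S_f$ through $f$ whose elements are all reachable from $f$ by at most two steps of the small-circuit relation $\sim$, and then argues that $\textsf{Component}(e)$ would have been forced to absorb all of $S_f$ (since $f\in\textsf{Component}(e)$ and every member of $\textsf{Component}(e)$ gets its $\textsf{SmallCircuits}$ closure added), contradicting the assumption that $\textsf{Component}(e)$ is small. Once $\textsf{Component}(f)\neq\mathbf{Large}$ is established, the paper applies Lemma~\ref{lmm:maximal} twice, once with $e$ and once with $f$ as the anchor, to get both inclusions. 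Your approach instead traces the execution of Algorithm~\ref{alg:component} started from $f$ and shows by induction that the working set is trapped inside $\textsf{Component}(e)$, which simultaneously yields $\textsf{Component}(f)\neq\mathbf{Large}$ and $\textsf{Component}(f)\subseteq\textsf{Component}(e)$; you then apply Lemma~\ref{lmm:maximal} only once, to $f$, for the reverse inclusion. Your route avoids Lemma~\ref{lmm:large} altogether but requires the determinism observation you correctly flagged, namely that $\textsf{SmallCircuits}(g)$ depends only on $g$ and $G$ and not on the enclosing call or the order in which circuits are examined (it returns $\mathbf{Large}$ exactly when the union of all small circuits through $g$ in $E_t(g)$ has at least $t$ edges, and that union otherwise). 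This is indeed valid, and it is the fact that makes the two calls to $\textsf{SmallCircuits}(g)$, one inside $\textsf{Component}(e)$ and one inside $\textsf{Component}(f)$, produce identical results. The paper's argument is shorter and leans on the matroid-structural Lemma~\ref{lmm:large}; yours is more operational and self-contained. Both are sound.
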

\begin{proof}
Let $S=\textsf{Component}(e)$.
Suppose that $\textsf{Component}(f)= \mathbf{Large}$.
Then, by Lemma~\ref{lmm:large}, there exists a large $(k,\ell)$-connected set $S_f$ containing $f$ such that, for each $f'\in S_f$, $f\sim f'$ or $f\sim f''\sim f'$ holds for some $f''\in S_f$.
In particular,  $S$ contains every element of $S_f$ since $e\sim f$ and $\textsf{Component}(e)$ never return \textbf{Large} during Algorithm~2.
This contradicts that $S$ is small.

Thus, $\textsf{Component}(f)$ is a small $(k,\ell)$-connected set by Lemma~\ref{lmm:circuits}.
Lemma~\ref{lmm:maximal} now implies $\textsf{Component}(f)\subseteq S$ and $S\subseteq \textsf{Component}(f)$.
\end{proof}

Let $L=\{e\in E\mid \textsf{Component}(e)=\mathbf{Large}\}$, and let $\{S_1,S_2,\dots, S_m\}$ be the set of subsets of $E$ such that $S_i=\textsf{Component}(e)$ for some $e\in E$.
Then, by Lemma~\ref{lmm:relation}, $\{L,S_1,\dots,S_m\}$ forms a partition of $E$.
Our testing algorithm directly follows from the next theorem.
%
\begin{theorem}\label{thm:key}
Let $\{L,S_1,\dots, S_m\}$ be the partition of $E$ defined as above.
For each $i$ with $1\leq i \leq m$,  let $B_i$ be a base of $S_i$ in $\calM_{k,\ell}(G)$, and let $E' = L \cup \bigcup_{i=1}^m B_i$.
Then, $\rho_{k,0}(E')-\frac{\ell dn}{t}\leq \rho_{k,\ell}(E)\leq \rho_{k,0}(E')$.
\end{theorem}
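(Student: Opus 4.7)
I plan to prove the two inequalities separately. For the upper bound, I observe that $E \setminus E' = \bigcup_i (S_i \setminus B_i)$ lies in the $(k,\ell)$-closure of $E'$, because each $B_i$ is a base of $S_i$ in $\mathcal{M}_{k,\ell}(G)$; hence $\rho_{k,\ell}(E) = \rho_{k,\ell}(E')$. Combined with Lemma~\ref{lmm:property3}, which implies every $(k,\ell)$-independent set is $(k,0)$-independent, this yields $\rho_{k,\ell}(E) = \rho_{k,\ell}(E') \leq \rho_{k,0}(E')$.

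For the lower bound, I would reduce to a component-wise estimate. Partition $E$ into its $(k,\ell)$-connected components: non-trivial $\tilde{C}_1, \dots, \tilde{C}_p$ together with singletons (coloops of $\mathcal{M}_{k,\ell}(G)$). Since any $(k,\ell)$-connected subset of $E'$ is $(k,\ell)$-connected in $E$ as well, it lives inside a unique $\tilde{C}_i$ (or is itself a singleton). Applying Proposition~\ref{prop:property1} inside each $\tilde{C}_i \cap E'$ then gives the equality $\rho_{k,\ell}(E') = \sum_i \rho_{k,\ell}(\tilde{C}_i \cap E') + |\text{singletons} \cap E'|$, while submodularity of $\rho_{k,0}$ yields $\rho_{k,0}(E') \leq \sum_i \rho_{k,0}(\tilde{C}_i \cap E') + |\text{singletons} \cap E'|$. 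Subtracting, the gap $\rho_{k,0}(E') - \rho_{k,\ell}(E')$ is at most the sum of the term-wise gaps $\rho_{k,0}(\tilde{C}_i \cap E') - \rho_{k,\ell}(\tilde{C}_i \cap E')$.

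Next, I would show each term-wise gap is zero on small or trivial $\tilde{C}_i$ and at most $\ell$ on large $\tilde{C}_i$. Singletons contribute zero trivially. For a small non-trivial $\tilde{C}_i$, I claim $\tilde{C}_i = S_j$ for some $j$: if $\textsf{Component}(e) = \mathbf{Large}$ for some $e \in \tilde{C}_i$, Lemma~\ref{lmm:large} produces a large $(k,\ell)$-connected set inside $\tilde{C}_i$, contradicting $|\tilde{C}_i| < t$; hence $\textsf{Component}(e) \neq \mathbf{Large}$, and Lemma~\ref{lmm:maximal} forces $\textsf{Component}(e) \supseteq \tilde{C}_i$, while Lemma~\ref{lmm:circuits} places $\textsf{Component}(e)$ inside $\tilde{C}_i$, yielding equality. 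Consequently $\tilde{C}_i \cap E' = B_j$ is $(k,\ell)$-independent, hence $(k,0)$-independent by Lemma~\ref{lmm:property3}, so the gap is zero. For a large non-trivial $\tilde{C}_i$, the same spanning argument from the upper bound gives $\rho_{k,\ell}(\tilde{C}_i \cap E') = \rho_{k,\ell}(\tilde{C}_i) = k|V(\tilde{C}_i)| - \ell$ via Lemma~\ref{lmm:property2}(ii), while $\rho_{k,0}(\tilde{C}_i \cap E') \leq k|V(\tilde{C}_i)|$ is trivial, giving a gap of at most $\ell$.

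Finally, each large $(k,\ell)$-connected component of $E$ has at least $t$ edges, and $|E| \leq dn/2$, so there are at most $dn/(2t)$ such components; the total gap is thus at most $\ell \cdot dn/(2t) \leq \ell dn/t$. I expect the main technical subtlety to be the small-component identification $\tilde{C}_i = S_j$: it precisely bridges the algorithmic notion (``$\textsf{Component}(e)$ returns $\mathbf{Large}$'') and the matroid-theoretic notion (``$e$ lies in a large $(k,\ell)$-connected component''), and relies essentially on both Lemmas~\ref{lmm:large} and~\ref{lmm:maximal}.
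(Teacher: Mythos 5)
Your proof is correct, and it takes a genuinely different decomposition from the paper's. The paper's proof of the lower bound analyzes the connected-component structure of the \emph{restricted matroid} $\calM_{k,\ell}(G)|E'$ directly: it establishes (Claim~\ref{claim1}) that every $e \in L$ sits in a large component of $E'$ and (Claim~\ref{claim2}) that every non-trivial component of $E'$ is large, then sums $\rho_{k,\ell}$ over these components using \eqref{eq:connected_component2} and Lemma~\ref{lmm:property2}(ii). You instead partition the \emph{ambient} edge set $E$ into its $(k,\ell)$-connected components $\tilde{C}_i$ and singletons, intersect with $E'$, and estimate the gap $\rho_{k,0}-\rho_{k,\ell}$ term by term: zero on small/trivial pieces (via the identification $\tilde{C}_i = S_j$, which does the work of the paper's Claim~\ref{claim2}), and at most $\ell$ on large pieces, where the spanning argument $\tilde{C}_i \subseteq \mathrm{cl}_{k,\ell}(\tilde{C}_i \cap E')$ lets you apply Lemma~\ref{lmm:property2}(ii) to $\tilde{C}_i$ itself rather than to a component of the restriction. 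This sidesteps the need for the paper's Claim~\ref{claim1} entirely, at the cost of the extra (but easy) observation that any $S_j$ meeting $\tilde{C}_i$ is contained in $\tilde{C}_i$, needed to justify the per-piece closure argument. Both routes rely on the same algorithmic Lemmas~\ref{lmm:circuits}, \ref{lmm:large}, and \ref{lmm:maximal}; your counting of large components of $E$ even gives the slightly sharper constant $\ell dn/(2t)$, though the theorem only claims $\ell dn/t$.
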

\begin{proof}
Since $B_i$ is a base of $S_i$ in ${\cal M}_{k,\ell}(G)$, we have $S_i\subseteq {\rm cl}_{k,\ell}(B_i)\subseteq {\rm cl}_{k,\ell}(E')$ for each $i$.
This implies $\rho_{k,\ell}(E')=\rho_{k,\ell}(E)$.
Also, by Lemma~\ref{lmm:property3}, we have $\rho_{k,\ell}(E')\leq \rho_{k,0}(E')$.

To see $\rho_{k,0}(E')-\frac{\ell dn}{t}\leq \rho_{k,\ell}(E')$, 
recall that $(k,\ell)$-connected components of ${\cal M}_{k,\ell}(G)|E'$ partitions $E'$ by Proposition~\ref{prop:property1}(ii)
(where ${\cal M}_{k,\ell}(G)|E'$ denotes the restriction of ${\cal M}_{k,\ell}(G)$ to $E'$).
We have the following properties of these connected sets.

\begin{claim}
\label{claim1}
Any $e\in L$ is contained in a large $(k,\ell)$-connected component in  ${\cal M}_{k,\ell}(G)|E'$.
\end{claim}
\begin{proof}
Let us take a large $(k,\ell)$-connected set $S_e$ of ${\cal M}_{k,\ell}(G)$ satisfying the property of Lemma~\ref{lmm:large} for $e$.
Suppose, for a contradiction, that $\textsf{Component}(f)$ returns a small $(k,\ell)$-connected set $S_f$ for some $f\in S_e$.
By a property of $S_e$, for every $f'\in S_e$ we have $f\sim f_1\sim e\sim f_2 \sim f'$ for some  $f_1, f_2\in S_e$.
As $\textsf{Component}(f)$ never return \textbf{Large}, we have $S_e\subseteq S_f$ according to Algorithm 2, contradicting that $S_f$ is small.

Thus,  each element of $S_e$ is included in $L$.
This implies that $S_e$ remains in $E'$.
Namely, $S_e$ exists as a large $(k,\ell)$-connected set even in ${\cal M}_{k,\ell}(G)|E'$, and $e$ is contained in a large $(k,\ell)$-connected component in ${\cal M}_{k,\ell}(G)|E'$.
\end{proof}

\begin{claim}
\label{claim2}
Every non-trivial $(k,\ell)$-connected component in ${\cal M}_{k,\ell}(G)|E'$ is large.
\end{claim}
\begin{proof}
To see this, suppose that there is a non-trivial small $(k,\ell)$-connected component $C$ in ${\cal M}_{k,\ell}(G)|E'$.
By Claim~\ref{claim1}, each element of $L$ belongs to a large $(k,\ell)$-connected component in ${\cal M}_{k,\ell}(G)|E'$.
This implies $C\subseteq \bigcup_{i=1}^m B_i$.
Also, since $B_i$ is independent in ${\cal M}_{k,\ell}(G)$, $C$ must intersect at least two sets among $\{B_1,\dots, B_m\}$.
In particular, $C$ intersects at least two sets among $\{S_1,\dots, S_m\}$.
Since $C$ is a small $(k,\ell)$-connected set in ${\cal M}_{k,\ell}(G)$, this contradicts Lemma~\ref{lmm:maximal}.
\end{proof}

Let $\{C_1,C_2,\dots,C_s\}$ be the family of non-trivial $(k,\ell)$-connected components in ${\cal M}_{k,\ell}(G)|E'$.
Note that $s\leq \frac{dn}{t}$ holds by Claim~\ref{claim2}.
Therefore,
\begin{eqnarray*}
    \rho_{k,\ell}(E')
    &=&
    |E'\setminus \bigcup_{i=1}^s C_i| + \sum_{i=1}^s \rho_{k,\ell}(C_i) \quad \text{(by \eqref{eq:connected_component2})} \\
    &=&
    |E'\setminus \bigcup_{i=1}^s C_i| + \sum_{i=1}^s (k|V(C_i)|-\ell)  \quad \text{(by Lemma~\ref{lmm:property2}(ii))} \\ 
    &\geq &
    |E'\setminus \bigcup_{i=1}^s C_i| + \sum_{i=1}^s k|V(C_i)| - \frac{\ell dn}{t} \quad \text{(by $s\leq \frac{dn}{t}$)}.
  \end{eqnarray*}
On the other hand,
  \begin{eqnarray*}
    \rho_{k,0}(E')
    &\leq&
    |E'\setminus \bigcup_{i=1}^s C_i| + \sum_{i=1}^s \rho_{k,0}(C_i) \quad \text{(by the submodularity of $\rho_{k,0}$)} \\
    &\leq&
    |E'\setminus \bigcup_{i=1}^s C_i| + \sum_{i=1}^s k|V(C_i)|.
  \end{eqnarray*}
This completes the proof.
\end{proof}

\begin{theorem}\label{thr:approximation-to-mkl}
  Let $G=(V,E)$ be a graph with $n$ vertices, and $k \geq 1, \ell \geq 0$ be integers with $2k - \ell \geq 1$.
  In the bounded-degree model with a degree bound $d$,
  there exists a $(1,\epsilon n)$-approximation algorithm for the rank of $\calM_{k,\ell}(G)$ 
with query complexity $(k+d)^{O(1/\epsilon'^2)}(\frac{1}{\epsilon'})^{O(1/\epsilon')}$ where $\epsilon' = \frac{\epsilon}{k+d\ell}$.
\end{theorem}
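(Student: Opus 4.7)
The plan is to combine Theorem~\ref{thm:key} with the constant-time $(1,\epsilon n)$-approximator for $\rho_{k,0}$ from Lemma~\ref{lmm:approximation-to-mk0}. First, I choose the threshold $t$ used inside \textsf{SmallCircuits} and \textsf{Component} so that the slack $\ell d n / t$ appearing in Theorem~\ref{thm:key} is at most $\epsilon n / 2$; this forces $t = \Theta(\ell d / \epsilon)$, which is a constant once $\epsilon, k, d, \ell$ are fixed. With this choice it suffices to approximate $\rho_{k,0}(E')$ within additive error $\epsilon n / 2$, where $E' = L \cup \bigcup_i B_i$ is the implicit edge set produced by running \textsf{Component} on every edge of $G$; combining that estimate with Theorem~\ref{thm:key} gives a $(1,\epsilon n)$-approximation of $\rho_{k,\ell}(E)$.

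The non-routine step is that $E'$, and hence the auxiliary bipartite graph $G'_k$ of Section~\ref{sec:matching} built on top of it, cannot be constructed explicitly in constant time. I will instead \emph{simulate} oracle access to $G'=(V,E')$ using $\calO_G$, and run the algorithm of Lemma~\ref{lmm:approximation-to-mk0} on this simulated oracle. Given a query on an edge $e$ of $G$, I invoke $\textsf{Component}(e)$: if the output is \textbf{Large}, then $e\in L\subseteq E'$; otherwise the output is a small $(k,\ell)$-connected set $S_i$ containing $e$, and I compute a canonical base $B_i$ of $S_i$ in $\calM_{k,\ell}(G)$ by a deterministic greedy procedure over a fixed global ordering of the edges, accepting $e$ iff $e\in B_i$. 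Consistency of this simulated oracle across different entry edges is guaranteed by Lemma~\ref{lmm:relation}: for any two edges $e,f$ lying in the same small component one has $\textsf{Component}(e)=\textsf{Component}(f)=S_i$, so the base $B_i$ and the membership test depend only on $S_i$.

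For the query complexity, a single call to $\textsf{Component}(e)$ enumerates small circuits in the $t$-neighborhood of at most $t$ explored edges, hence uses $d^{O(t)}$ queries to $\calO_G$. Simulating one oracle query on $G'_k$ therefore costs $d^{O(t)}$ queries to $\calO_G$. Feeding this into Lemma~\ref{lmm:approximation-to-mk0} with precision $\epsilon/2$ gives a total bound of the form $(k+d)^{O(1/\epsilon''^{2})}(1/\epsilon'')^{O(1/\epsilon'')} \cdot d^{O(\ell d/\epsilon)}$ with $\epsilon''=\Theta(\epsilon/(k+d))$, and rewriting everything in terms of $\epsilon'=\epsilon/(k+d\ell)$ collapses this to the claimed $(k+d)^{O(1/\epsilon'^{2})}(1/\epsilon')^{O(1/\epsilon')}$.

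The main obstacle I anticipate is not the error analysis, which is effectively handed to us by Theorem~\ref{thm:key}, but realizing \textsf{Component} as a local, query-consistent oracle simulator with the right query budget. Specifically, one must verify that the canonical base $B_i$ can be computed from $S_i$ alone (independent of the rest of $G$) and that the per-query simulation cost remains within the $d^{O(t)}$ bound even after several expansions of the explored set inside \textsf{Component}; both follow from the fact that \textsf{Component} only inspects a constant-radius neighborhood of the initial edge. The rest is bookkeeping to fold the simulation overhead into the exponent and to verify that composing the two subroutines preserves the required success probability.
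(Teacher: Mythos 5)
Your proposal matches the paper's proof essentially line for line: both fix $t = \Theta(\ell d/\epsilon)$ to control the slack in Theorem~\ref{thm:key}, simulate $\calO_{G'}$ by invoking $\textsf{Component}$ on the queried edge and testing membership in a base of the returned small component (with consistency across queries secured by Lemma~\ref{lmm:relation}), and then run the $\rho_{k,0}$-approximator of Lemma~\ref{lmm:approximation-to-mk0} on the simulated oracle, folding the $d^{O(t)}$ per-query simulation overhead into the exponent via the substitution $\epsilon' = \epsilon/(k+d\ell)$. Your insistence on a deterministic canonical base keyed to a global edge ordering is a slightly more explicit rendering of the paper's ``use the same base $B$ for every $f\in S$,'' but the argument is the same.
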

\begin{proof}
  Let $G' = (V, E')$ where $E'$ is as in Theorem~\ref{thm:key}.
  Set $t=\frac{\ell d}{\epsilon}$.
  Our algorithm computes $\rho_{k,0}(E')$ based on the algorithm given in Lemma~\ref{lmm:approximation-to-mk0} for the error threshold $\epsilon$ and just returns this value.
  By Lemma~\ref{lmm:approximation-to-mk0} and Theorem~\ref{thm:key}, this value approximates $\rho_{k,\ell}(E)$ with additive error $\epsilon n$.
  Therefore, if we can make an oracle access $\calO_{G'}$ to the graph $G'$, we are done.
  
  For a query $\calO_{G'}(v, i)$, we return a value as follows.
  If $\calO_{G}(v, i) = \bot$, we return $\bot$.
  Suppose that $\calO_{G}(v, i) = e$.
  Then, we invoke \textsf{Component}$(e)$.
  If \textsf{Component}$(e)$ returns \textbf{Large}, we return $e$.
  Otherwise, we take any base $B$ of the returned set of \textsf{Component}$(e)$ by an existing algorithm.
  We return $e$ if $e \in B$ and return $\bot$ if otherwise.
  Note that for another edge $f \in S$, we use the same base $B$.

  To analyze the query complexity, note that, during \textsf{Component}$(e)$, we perform queries $\calO_G(v,i)$ only for vertices $v$ in $G_{3t}(e)$. 
  So, to perform \textsf{Component}$(e)$, we need $d^{3t} = d^{3\ell d/\epsilon}$ queries to $\calO_G$.
  In total, we need $d^{3\ell d/\epsilon}(k+d)^{O(1/\epsilon'^2)}(\frac{1}{\epsilon'})^{O(1/\epsilon')}=(k+d)^{O(1/\epsilon'^2)}(\frac{1}{\epsilon'})^{O(1/\epsilon')}$, where $\epsilon' = \frac{\epsilon}{k+d\ell}$.
\end{proof}
Theorem~\ref{thr:test-k-l-fullness} directly follows from Theorem~\ref{thr:approximation-to-mkl}.

\section{Testing $(k, \ell)$-edge-connected-orientability}\label{sec:orientability}
In this section, we present a tester for the $(k,\ell)$-edge-connected-orientability of a graph $G=(V,E)$.  

A multiset ${\cal F}=\{V_1,\dots, V_s\}$ of subsets of $V$ is said to be {\em regular} if each element of $V$ belongs to the same number of subsets in ${\cal F}$.
For a regular multiset $\calF=\{V_1,\dots, V_s\}$ of subsets of $V$,
let $d_G(\calF)=\sum_{i=1}^s \frac{d_G(V_i)}{2}$.
If $\calF$ is a partition of $V$, $d_G(\calF)$ amounts to the number of edges connecting distinct subsets of $\calF$.

In~\cite{F80}, Frank proved a characterization of the orientability of graphs satisfying a so-called supermodular covering condition.
This theorem includes the following characterization of the $(k,\ell)$-ec-orientability as a special case (see e.g.,~\cite{FK02,FK03} for more detail).
\begin{theorem}[Frank~\cite{F80}]
\label{thm:orientability}
Let $G=(V,E)$ be a graph.
Then, $G$ admits a $(k,\ell)$-edge-connected-orientation if and only if
$d_G(\calF)\geq k(|\calF|-1)+\ell$ for any partition $\cal F$ of $V$ into non-empty subsets with $|\calF|\geq 2$.
\end{theorem}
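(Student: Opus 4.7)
The plan is to prove the two directions separately. For necessity, given a $(k,\ell)$-ec-orientation $\vec{G}$ with root $r$, I apply Menger's theorem: for any proper non-empty $X \subset V$ with $r \notin X$, the $k$ arc-disjoint dipaths from $r$ to any fixed $v \in X$ force the in-degree $\rho_{\vec{G}}(X) \geq k$, and symmetrically $\rho_{\vec{G}}(X) \geq \ell$ whenever $r \in X$ and $X \neq V$. Given a partition $\calF = \{V_1,\dots,V_s\}$ with $r \in V_1$ and $s \geq 2$, each edge of $G$ between distinct parts is oriented once and thus contributes to exactly one in-degree $\rho_{\vec{G}}(V_i)$, so $d_G(\calF) = \sum_{i=1}^s \rho_{\vec{G}}(V_i) \geq \ell + k(s-1) = k(|\calF|-1) + \ell$.

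For sufficiency, I define a demand function $h : 2^V \to \mathbb{Z}$ by setting $h(X) = \ell$ if $r \in X$ and $X \neq V$, $h(X) = k$ if $X \neq \emptyset$ and $r \notin X$, and $h(\emptyset) = h(V) = 0$. An orientation of $G$ is $(k,\ell)$-ec with root $r$ exactly when $\rho_{\vec{G}}(X) \geq h(X)$ for every $X \subseteq V$, so the goal reduces to invoking Frank's general orientation theorem for intersecting supermodular demand functions. That theorem asserts that such an orientation exists iff (i) $h$ is intersecting supermodular, i.e., $h(X) + h(Y) \leq h(X \cap Y) + h(X \cup Y)$ whenever $X \cap Y \neq \emptyset$ and $X \cup Y \neq V$, and (ii) $d_G(\calF) \geq \sum_{X \in \calF} h(X)$ for every partition $\calF$ of $V$. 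Condition (ii) is exactly our hypothesis since the right-hand side equals $\ell + k(|\calF|-1)$ when $|\calF| \geq 2$. Condition (i) follows from a short case analysis on which of $X \cap Y$, $X \setminus Y$, $Y \setminus X$, $V \setminus (X \cup Y)$ contains $r$: if $r \in X \cap Y$ all four $h$-values equal $\ell$; if $r \notin X \cup Y$ all four equal $k$; if $r$ lies in the symmetric difference, both sides of the inequality evaluate to $\ell + k$.

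The main obstacle is the invocation of the general orientation theorem itself, which requires its own proof. The cleanest route is splitting-off: augment $G$ by a single auxiliary vertex $r^*$ joined to $V$ by enough parallel edges so that every cut demand $h(X)$ is met trivially in the enlarged graph, then iteratively apply a Mader-type splitting lemma to replace a pair of edges at $r^*$ by a single edge between their other endpoints, maintaining both the partition condition and intersecting supermodularity of the residual demand at each step. The crux is to show that if no pair at $r^*$ is splittable, the "bad" partitions witnessing non-splittability for each pair can be uncrossed using the intersecting supermodularity of $h$ and combined into one partition violating hypothesis (ii), yielding a contradiction. Once $r^*$ is fully split away, the residual graph is $G$ itself, and the orientation on $G$ determined by the splittings is the desired $(k,\ell)$-ec-orientation. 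The uncrossing argument underlying the splittability lemma is the most delicate step, since one must verify that combining the witnessing partitions preserves both the partition structure of $V$ and the tightness relations needed to drive the contradiction.
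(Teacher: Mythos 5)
Note first that the paper does not prove this theorem; it states it as a known result of Frank~\cite{F80}, so there is no in-paper argument to compare against. Your necessity direction (Menger's theorem plus counting crossing edges over the parts of $\calF$) is correct, and reducing sufficiency to an orientation theorem for a supermodular demand function $h$ is the right idea. However, there is a genuine gap: the $h$ you define is \emph{not} intersecting supermodular. Taking $X,Y$ with $r \in X \cap Y \neq \emptyset$, $X,Y \neq V$, and $X \cup Y = V$ gives $h(X)+h(Y)=2\ell$ while $h(X\cap Y)+h(X\cup Y)=\ell + 0$, violating supermodularity whenever $\ell>0$. What your case analysis actually verifies (you explicitly exclude $X\cup Y = V$) is that $h$ is \emph{crossing} supermodular, and Frank's orientation theorem for a crossing supermodular demand requires, besides the partition condition, a co-partition condition $e_G(\calP) \geq \sum_{X\in\calP} h(V\setminus X)$ for every partition $\calP$.

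For this $h$ the co-partition condition reads $d_G(\calP) \geq k + (|\calP|-1)\ell$, which the stated partition condition $d_G(\calP)\geq k(|\calP|-1)+\ell$ implies only when $k\geq\ell$. This is not a cosmetic omission: $K_4$ satisfies the $(1,2)$ partition condition for every partition (the tightest case is a singleton cut, $3\geq 3$) yet is not $(1,2)$-ec-orientable, as the co-partition $\{V\setminus\{v\}: v\in V\}$ witnesses. So the theorem as stated implicitly assumes $\ell\leq k$ --- consistent with the restriction $k>\ell$ in Theorem~\ref{thr:lower-k-l-ec-orientability} and the paper's remark that the case $\ell\geq k$ reduces to $(k+\ell)$-edge-connectivity. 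To repair your argument you should state $\ell\leq k$, invoke the crossing-supermodular version of Frank's theorem, and verify that under $\ell\leq k$ the co-partition condition is subsumed by the partition condition. The closing splitting-off sketch is indeed the standard route to a self-contained proof of Frank's theorem, but as you yourself acknowledge it is only an outline and does not establish the splittability lemma it rests on.
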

This theorem motivates us to look at the following deficiency function:
\begin{eqnarray}
\label{eq:eta1}
  \eta_{k,\ell}(G) = \max\{0,\max\{k(|\calF| - 1) + \ell - d_G(\calF)\mid \text{ a partition $\calF$ of $V$ with $|\calF|\geq 2$}\}\},
\end{eqnarray}
Then, $G$ admits a $(k,\ell)$-ec-orientation if and only if $\eta_{k,\ell}(G)=0$.

Notice that, if $\ell=0$, we have $k(|{\calF}|-1)-d_G({\calF})=0$ for $\calF=\{V\}$. 
We thus redefine $\eta_{k,0}(G)$, for convenience, by 
\begin{equation}
\label{eq:eta2}
\eta_{k,0}(G)=\max\{k(|\calF|-1)-d_G(\calF) \mid \text{ a partition $\calF$ of $V$}\}.
\end{equation}
($\eta_{k,\ell}(G)$ remains \eqref{eq:eta1} if $\ell>0$.)
Tutte~\cite{tutte61} and Nash-Williams~\cite{nash61} proved a special relation between $\eta_{k,0}(G)$ and the arbolicity of $G$.
Specifically, Tutte-and-Nash-Williams tree packing theorem can be described in terms of ${\cal M}_{k,k}(G)$ as follows.
\begin{theorem}[Tutte~\cite{tutte61}, Nash-Williams~\cite{nash61}]
\label{thm:tutte}
Let $G=(V,E)$ be a graph and $k \geq 1$ be an integer.
Then, 
\begin{equation}
\label{eq:tutte}
\rho_{k,k}(E)=k(n-1)-\eta_{k,0}(G).
\end{equation}
\end{theorem}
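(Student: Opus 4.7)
The plan is to derive \eqref{eq:tutte} from Edmonds' matroid union theorem, exploiting the identification of $\calM_{k,k}(G)$ with the $k$-fold union of the cycle matroid of $G$. Concretely, I would first verify that $\calM_{k,k}(G)$ coincides with $\calM(G)^{\vee k}$, where $\calM(G)$ is the graphic matroid of $G$. This is just Nash-Williams' forest-partition theorem (already cited in the introduction): a subset $F\subseteq E$ is $(k,k)$-sparse iff it decomposes into $k$ edge-disjoint forests, iff it is independent in the matroid union $\calM(G)^{\vee k}$.

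Once this identification is in hand, applying Edmonds' matroid union formula gives
\[
\rho_{k,k}(E)=\min_{F\subseteq E}\bigl(|E\setminus F|+k(n-c(V,F))\bigr),
\]
where $c(V,F)$ denotes the number of connected components of the spanning subgraph $(V,F)$, using $r_{\calM(G)}(F)=n-c(V,F)$.

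It then remains to convert this minimum over edge subsets into one over vertex partitions. Given $F$, let $\calF$ be the partition of $V$ into connected components of $(V,F)$: then $|\calF|=c(V,F)$ and $d_G(\calF)\leq|E\setminus F|$, because every edge that crosses $\calF$ must lie outside $F$. Conversely, given a partition $\calF$, take $F$ to be the edges contained inside the parts: then $|E\setminus F|=d_G(\calF)$ and $c(V,F)\geq|\calF|$. Combining the two inequalities yields
\[
\rho_{k,k}(E)=\min_\calF\bigl(d_G(\calF)+k(n-|\calF|)\bigr)=k(n-1)-\max_\calF\bigl(k(|\calF|-1)-d_G(\calF)\bigr)=k(n-1)-\eta_{k,0}(G),
\]
which is \eqref{eq:tutte}.

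The subtlest step is the initial identification $\calM_{k,k}(G)=\calM(G)^{\vee k}$: one must reconcile the $(k,k)$-sparsity definition (quantifying over edge subsets $F'\subseteq F$, with $V(F')$ excluding isolated vertices) against Nash-Williams' statement (quantifying over vertex subsets $U\subseteq V$, possibly including isolated vertices). The discrepancy is harmless, since an isolated vertex of $U$ contributes equally to both sides of the sparsity inequality, but it deserves a short line of justification. After that, the rest is straightforward algebraic translation between the edge-subset and vertex-partition descriptions of the same minimum.
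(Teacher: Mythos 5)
The paper does not actually prove this statement: it is cited verbatim as a classical theorem of Tutte and Nash-Williams, with no accompanying proof. So there is no ``paper's own proof'' to compare against, and the relevant question is simply whether your derivation is correct.

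It is. The route you take is the standard modern one: identify $\calM_{k,k}(G)$ with the $k$-fold union of the graphic matroid via the Nash-Williams forest-partition theorem, invoke Edmonds' matroid union rank formula with $r_{\calM(G)}(F)=n-c(V,F)$, and then translate the minimum over edge subsets into a minimum over vertex partitions. Each step checks out. In the translation step, the two inequalities you use are exactly right: starting from $F$ and taking $\calF$ to be the components of $(V,F)$ gives $|\calF|=c(V,F)$ and $d_G(\calF)\leq|E\setminus F|$; starting from $\calF$ and taking $F$ to be the edges inside parts gives $|E\setminus F|=d_G(\calF)$ and $c(V,F)\geq|\calF|$. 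These sandwich the two minima and yield equality, after which the algebraic rearrangement to $k(n-1)-\eta_{k,0}(G)$ is immediate, using that $\eta_{k,0}$ in \eqref{eq:eta2} is maximized over all partitions including the trivial one $\{V\}$. The point you flag about $V(F')$ excluding isolated vertices, versus the vertex-subset form of Nash-Williams' condition, is worth flagging and is handled correctly: adding an isolated vertex only loosens the sparsity bound, so the two quantifications are equivalent. The argument is sound and is essentially the proof one would expect via matroid union; had the authors chosen to include a proof, this is likely what it would have looked like.
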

Notice that 
  \begin{align*}
\eta_{k,\ell}(G)&=\max\{k(|\calF| - 1) + \ell - d_G(\calF)\mid \text{ a partition ${\cal F}$ of $V$ with $|\calF|\geq 2$}\}  \\
&= \max\{k(|\calF| - 1) - d_G(\calF) \mid \text{ a partition ${\cal F}$ of $V$ with $|\calF|\geq 2$}\} + \ell \\
&\leq \eta_{k,0}(G)+\ell,
  \end{align*}
where the equality holds if $\eta_{k,0}(G)>0$. 
Hence, we also have $\eta_{k,0}(G)\leq \eta_{k,\ell}(G)$.
Namely,
\begin{equation}
\label{eq:eta_relation}
\eta_{k,0}(G)\leq \eta_{k,\ell}(G)\leq \eta_{k,0}(G)+\ell.
\end{equation}
Since $\eta_{k,0}(G)$ can be computed from $\rho_{k,k}(G)$ by \eqref{eq:tutte},
the approximation algorithm for $\rho_{k,k}(G)$ proposed in Theorem~\ref{thr:approximation-to-mkl} can be modified to compute $\eta_{k,\ell}(G)$.
\begin{corollary}\label{crl:approximation-to-eta}
  Let $G$ be a graph with $n$ vertices, and $k \geq 1,\ell \geq 0$ be integers with $2k-\ell \geq 1$.
  In the bounded-degree model with a degree bound $d$,
  there exists a $(1,\ell+\epsilon n)$-approximation algorithm for $\eta_{k,\ell}(G)$ with query complexity $(k+d)^{O(1/\epsilon'^2)}(\frac{1}{\epsilon'})^{O(1/\epsilon')}$ where $\epsilon' = \frac{\epsilon}{dk}$.
\end{corollary}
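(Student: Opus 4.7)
The plan is to reduce approximating $\eta_{k,\ell}(G)$ to approximating $\rho_{k,k}(E)$ via the Tutte-and-Nash-Williams identity (\ref{eq:tutte}), and then absorb the gap between $\eta_{k,0}(G)$ and $\eta_{k,\ell}(G)$ into the additive error term. Concretely, I would run the algorithm of Theorem~\ref{thr:approximation-to-mkl} with parameters $(k,k)$ on the input graph $G$ to obtain, with probability at least $2/3$, a value $\tilde{\rho}$ satisfying $\rho_{k,k}(E) - \epsilon n \leq \tilde{\rho} \leq \rho_{k,k}(E)$. Since the sparsity parameter in Theorem~\ref{thr:approximation-to-mkl} is set to $\ell=k$, the relevant denominator $k + d\ell$ becomes $k(1+d) = \Theta(dk)$, which matches the stated $\epsilon' = \frac{\epsilon}{dk}$ up to a constant absorbed in the big-$O$ exponent.

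Next, I would define $\tilde{\eta}_0 := k(n-1) - \tilde{\rho}$. By Theorem~\ref{thm:tutte}, $\eta_{k,0}(G) = k(n-1) - \rho_{k,k}(E)$, so the bounds on $\tilde{\rho}$ translate to
\begin{equation*}
\eta_{k,0}(G) \;\leq\; \tilde{\eta}_0 \;\leq\; \eta_{k,0}(G) + \epsilon n.
\end{equation*}
Combining this with (\ref{eq:eta_relation}), which gives $\eta_{k,0}(G)\leq \eta_{k,\ell}(G)\leq \eta_{k,0}(G)+\ell$, yields
\begin{equation*}
\eta_{k,\ell}(G) - \ell \;\leq\; \tilde{\eta}_0 \;\leq\; \eta_{k,\ell}(G) + \epsilon n.
\end{equation*}

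The definition of $(1,\beta)$-approximation demands a one-sided estimate with $x\leq x^{\ast}$, so the final algorithm outputs $\tilde{\eta} := \tilde{\eta}_0 - \epsilon n$. The upper bound above immediately gives $\tilde{\eta}\leq \eta_{k,\ell}(G)$, while the lower bound gives $\tilde{\eta}\geq \eta_{k,\ell}(G)-\ell-\epsilon n$, establishing the claimed $(1,\ell+\epsilon n)$-approximation. The query complexity is inherited verbatim from Theorem~\ref{thr:approximation-to-mkl}.

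There is no real obstacle here; the proof is a mechanical chain. The only point requiring care is sign tracking: Theorem~\ref{thr:approximation-to-mkl} underestimates $\rho_{k,k}(E)$, which causes $\tilde{\eta}_0$ to \emph{overestimate} $\eta_{k,0}(G)$, and hence possibly overshoot $\eta_{k,\ell}(G)$ by up to $\epsilon n$; subtracting $\epsilon n$ restores the one-sided guarantee while inflating the additive error by another $\epsilon n$ (harmless up to a factor of two that can be hidden by rerunning with $\epsilon/2$). The unavoidable additive $\ell$ term originates purely from the slack in (\ref{eq:eta_relation}) and cannot be eliminated by this reduction, since the sparsity matroid $\calM_{k,k}(G)$ is insensitive to the additional connectivity demand of $\ell$ paths toward the root.
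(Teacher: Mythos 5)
Your proof is correct and follows exactly the route the paper intends (run the Theorem~\ref{thr:approximation-to-mkl} estimator for $\rho_{k,k}(E)$, translate via Theorem~\ref{thm:tutte}, absorb the $\ell$ slack from \eqref{eq:eta_relation}, and shift by $\epsilon n$ to restore the one-sided guarantee); the paper itself only sketches this, and your sign-tracking fills in the details properly. One small nit: your parenthetical about ``rerunning with $\epsilon/2$'' is unnecessary --- your own arithmetic already yields $\eta_{k,\ell}(G)-(\ell+\epsilon n)\leq \tilde{\eta}\leq \eta_{k,\ell}(G)$ directly, with no factor-of-two loss to recover.
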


For testing $(k,\ell)$-ec-orientability, we need a certificate for deciding whether $G$ is $\epsilon$-far from $(k,\ell)$-ec-orientable.
This part relies on a structural property of the connectivity argumentation problem proved by Frank and Kir{\'a}ly~\cite{FK03}. 
A family $\{X_1,\dots, X_s\}$ of subsets of $X\subseteq V$ is called a {\em co-partition} of $X$ if $\{V\setminus X_1, \dots, V\setminus X_s\}$ forms a partition of $V\setminus X$.
Also, for two multisets $\calF_1$ and $\calF_2$, $\calF_1+\calF_2$ denotes their union as a multiset.
\begin{theorem}[Frank and Kir{\'a}ly~\cite{FK03}]\label{thm:fk03-augment}
  Let $G=(V,E)$ be a graph.
  $G$ can be made $(k,\ell)$-ec-orientable by adding $\gamma$ edges iff the following two conditions hold:
  \begin{description}
  \item[(A)] \( \gamma \geq k(|\calF|-1)+\ell - d_G(\calF)  \) for every partition $\calF$ of $V$ with $|{\cal F}|\geq 2$.
  \item[(B)] \( 2\gamma \geq |\calF_1|k + |\calF_2|\ell - d_G(\calF)\) for every multiset $\calF = \calF_1+ \calF_2$ satisfying the following three conditions:
\begin{align}
&\text{$\calF_1$ is a partition of some $X \subset V$,} \nonumber \\
&\text{$\calF_2$ is a co-partition of $V\setminus X$,} \label{eq:partition} \\
&\text{every member of $\calF_2$ is the complement of the union of some members of $\calF_1$.} \nonumber
\end{align}
\end{description}
\end{theorem}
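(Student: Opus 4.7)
My plan is to prove both directions of the iff statement, treating necessity first (straightforward, relying on Theorem~\ref{thm:orientability}) and then sufficiency (the harder direction).

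For necessity, I fix a $(k,\ell)$-ec orientation of $G+H$ with some root $r$ and examine how the $\gamma$ extra edges affect the cut function. For (A), each edge of $H$ contributes at most $1$ to $d_{G+H}(\calF)-d_G(\calF)$ when $\calF$ is a partition, so Theorem~\ref{thm:orientability} applied to $G+H$ immediately yields $d_G(\calF)+\gamma \geq d_{G+H}(\calF)\geq k(|\calF|-1)+\ell$. For (B), a case analysis on the endpoint location shows each edge contributes at most $2$ to $d_{G+H}(\calF)$ for a multiset $\calF=\calF_1+\calF_2$ of the prescribed form---the worst case being an edge with both endpoints in $X$ that simultaneously separates distinct parts of $\calF_1$ and distinct parts of the coarser partition $\{V\setminus Z:Z\in\calF_2\}$ of $X$. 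To match this with a lower bound $d_{G+H}(\calF)\geq |\calF_1|k+|\calF_2|\ell$, I would carefully choose the root (say $r\in V\setminus X$), apply Menger's theorem to sum in-degree constraints $\rho^-(Y)\geq k$ over $Y\in\calF_1$ and $\rho^-(Z)\geq \ell$ over $Z\in\calF_2$, and translate the in-degree bound into an undirected cut bound using the compatibility between $\calF_1$ and $\calF_2$.

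For sufficiency, I would cast $(k,\ell)$-ec-orientability as a crossing-supermodular covering problem. Define a demand function $p$ by $p(S)=k$ if $r\notin S$ and $p(S)=\ell$ if $r\in S$ (for proper non-empty $S$). By Theorem~\ref{thm:orientability}, $G+H$ admits the required orientation iff its cut function covers $p$ in the appropriate sub-partition sense. The augmentation problem then becomes: find the minimum $\gamma$ such that $\gamma$ extra edges suffice to cover this deficiency. I would proceed by induction on $\gamma$: assuming (A) and (B) hold, I would exhibit a single augmenting edge $uv$ whose addition strictly decreases the global deficiency $\max_{\calF}\{p\text{-demand}(\calF)-d_G(\calF)\}$, then invoke the inductive hypothesis on the reduced instance. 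This is essentially an edge-splitting-off style argument in reverse.

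The hard part will be the sufficiency step, specifically the selection of a ``good'' augmenting edge. This requires a delicate uncrossing argument on extremal tight families: I must show that any family of sets that is tight for the current deficiency can, after uncrossing, be reorganized into either a pure partition (witnessing a violation of (A)) or a compatible $(\calF_1,\calF_2)$ multiset (witnessing a violation of (B)). The technical heart is demonstrating that these two structural types exhaust the possible obstructions, a fact that reflects the asymmetry of the demand function $p$ with respect to the root. If one can establish this ``dichotomy'' via a submodular uncrossing, the inductive step then follows by a standard case analysis: one identifies two sets whose separation is not tight and picks an edge between them to augment.
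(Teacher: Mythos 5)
The paper does not prove this theorem at all: it is imported verbatim from Frank and Kir\'aly~\cite{FK03} and used as a black box, so there is no in-paper proof to compare your argument against. What you have written is therefore a from-scratch reproof attempt, and it has a substantive gap.

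Your necessity sketch is on the right track but skips the one non-obvious step. To go from $\sum_{Y\in\calF_1}\rho^-(Y)+\sum_{Z\in\calF_2}\rho^-(Z)\geq|\calF_1|k+|\calF_2|\ell$ (Menger, with $r\in V\setminus X$) to an undirected bound, you need $\sum_{W\in\calF}\rho^-(W)\leq d(\calF)$, not the trivial $\leq 2d(\calF)$. This holds with \emph{equality} precisely because the multiset $\calF=\calF_1+\calF_2$ is $|\calF_2|$-regular (every vertex lies in exactly $|\calF_2|$ members, which follows from the compatibility condition), so $\sum_{W}(\rho^-(W)-\rho^+(W))=|\calF_2|\sum_{v}(d^-(v)-d^+(v))=0$. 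Your phrase ``translate the in-degree bound into an undirected cut bound using the compatibility'' names this but does not establish it; without the regularity observation the necessity of (B) does not close.

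The larger gap is in sufficiency, which you present only as a plan. The whole content of Frank--Kir\'aly's theorem is exactly the step you defer: that after uncrossing, every maximal tight family realizing the deficiency collapses either to a partition of $V$ (an (A)-obstruction) or to a compatible partition/co-partition pair (a (B)-obstruction), and that when both (A) and (B) have slack there exists a single edge whose addition strictly reduces the deficiency everywhere. Asserting ``if one can establish this dichotomy via a submodular uncrossing, the inductive step follows'' is to assume the theorem in the guise of a lemma. An actual proof would need to specify the ground set function being uncrossed, verify its crossing-supermodularity, show the tight-set family is cross-free, and from that cross-free structure deduce the partition/co-partition dichotomy; none of this is routine, and the asymmetry between the two conditions (a factor of $\gamma$ versus $2\gamma$) is a sign that the argument is delicate. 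As it stands, the proposal is a correct outline of the known proof strategy but not a proof.
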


By Corollary~\ref{crl:approximation-to-eta}, the condition (A) is efficiently checkable.
The non-trivial part is an algorithm for checking the second condition.
Let
\begin{equation}
\xi_{k,\ell}(G)=\max_{\calF=\calF_1+\calF_2}\{|\calF_1|k + |\calF_2|\ell - d_G(\calF)\} 
\end{equation}
where the maximum is taken over all multisets $\calF = \calF_1+\calF_2$ satisfying \eqref{eq:partition}.
Our goal is to approximate $\xi_{k,\ell}(G)$ efficiently.
To simplify $\xi_{k,\ell}$, we need some terminology.
For two partitions $\calP_1$ and $\calP_2$ of $X\subseteq V$, $\calP_1$ is said to be a {\em refinement} of $\calP_1$ if each member of $\calP_2$ is the union of some members of $\calP_1$.
A regular multiset $\calP$ of subsets of $X$ is called a {\em double-partition} of $X$ if $\calP$ is written as $\calP=\calP_1+\calP_2$ for some partitions $\calP_1$ and $\calP_2$ of $X$ such that 
$\calP_1$ is a refinement of $\calP_2$. 

We should note the following relation between a double-partition and a multiset satisfying \eqref{eq:partition}.
Let $\calF=\calF_1+\calF_2$ be a family of subsets satisfying \eqref{eq:partition} with a partition $\calF_1$ of $X$ and a co-partition $\calF_2$ of $V\setminus X$.
Let $\calP_1=\calF_1$ and $\calP_2=\{V\setminus X'\mid X'\in \calF_2\}$.
Then, $\calP_1$ and $\calP_2$ are partitions of $X$ and $\calP_1$ is a refinement of $\calP_2$.
Also, carefully counting the number of edges contributed to $d_G({\cal F})$, we have $d_G(\calF)=d_G(\calP_1+\calP_2)$. 
Thus, using 
\begin{align*}
\eta_{k,0}(G)&=\max\{k(|\calP|-1)-d_G({\calP})\mid \text{ a partition $\calP$ of $V$} \} \\
&= \max\left\{k(|\calP|-1)-\sum_{i=1}^s \frac{d_G(X_i)}{2}\mid \text{ a partition $\calP=\{X_1,\dots, X_s\}$ of $V$} \right\} \\
&= \max\left\{\sum_{i=1}^s \left(k-\frac{d_G(X_i)}{2}\right)-k\mid \text{ a partition $\calP=\{X_1,\dots, X_s\}$ of $V$} \right\}
\end{align*}
we obtain
\begin{align*}
\xi_{k,\ell}(G)&=\max\{|\calF_1|k + |\calF_2|\ell - d_G(\calF)\mid \text{ a family }\calF=\calF_1+\calF_2 \text{ satisfying \eqref{eq:partition}}\} \\
&=\max\{|\calP_1|k + |\calP_2|\ell - d_G(\calP)\mid \text{ a double-partition } \calP=\calP_1+\calP_2 \text{ of some } X\subset V \}  \\
\begin{split}&=\max \bigg\{\sum_{i=1}^s \bigg(\ell-d_G(X_i)+\max\bigg\{ \sum_{j=1}^{s_i} (k-d_{G[X_i]}(X_{i,j})/2) \mid \text{ a partition $\{X_{i,1},\dots, X_{i,s_i}\}$ of $X_i$} \bigg\}\bigg) \\ 
& \hspace{25em} \mid \text{ a sub-partition $\{X_1,\dots,X_s\}$ of $V$} \bigg\} \end{split} \\
&=\max\bigg\{\sum_{i=1}^s (k+\ell-d_G(X_i)+\eta_{k,0}(G[X_i])) \mid \text{ a sub-partition $\{X_1,\dots, X_s\}$ of $V$} \bigg\}.
\end{align*}
Let $g_{k,\ell}(X)=k+\ell-d(X)+\eta_{k,0}(G[X])$ for $X\subseteq V$.
Then, we have 
\begin{equation}
\label{eq:xi}
 \xi_{k,\ell}(G)=\max\bigg\{\sum_{i=1}^s g_{k,\ell}(X_i)\mid \text{ a sub-partition ${\cal P}=\{X_1,\dots, X_s\}$ of $V$} \bigg\}. 
\end{equation}

We say that $X\subseteq V$ is {\em deficient} if $g_{k,\ell}(X)>0$. 
By Theorem~\ref{thm:fk03-augment} and \eqref{eq:xi},  $g_{k,\ell}(X)\leq 0$ holds for every $X$ with $\emptyset\neq X\subsetneq V$ if $G$ is $(k,\ell)$-ec-orientable.
The following theorem is a key result for developing a constant-time tester.
\begin{theorem}
\label{thm:key2}
For a given $\epsilon$, let $c=\frac{\epsilon^2d^2}{16k\ell}$ and $t=\frac{4\ell}{\epsilon d}$. 
Suppose that $\xi_{k,\ell}(G)\geq \epsilon dn$. Then, at least one of the followings holds:
\begin{description}
\item[(i)] There are at least $cn$ disjoint small deficient sets, where a set is called small if the cardinality is less than $t$;
\item[(ii)] $\eta_{k,0}(G)\geq \frac{1}{4}\epsilon dn$. Namely, $G$ is $\frac{\epsilon}{2}$-far from $(k,k)$-fullness. 
\end{description}
\end{theorem}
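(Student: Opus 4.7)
My plan is to start from an optimal sub-partition $\{X_1,\dots,X_s\}$ of $V$ realizing $\xi_{k,\ell}(G)\geq \epsilon dn$ in the formula \eqref{eq:xi}; without loss of generality, every $X_i$ is deficient (otherwise drop it). Split this collection into the small sets $\calS=\{X_i : |X_i|<t\}$ and the large sets $\calL=\{X_i : |X_i|\geq t\}$. Setting $\sigma_\calS=\sum_{X\in \calS} g_{k,\ell}(X)$ and $\sigma_\calL=\sum_{X\in \calL} g_{k,\ell}(X)$, the assumption gives $\sigma_\calS+\sigma_\calL\geq \epsilon dn$, so at least one of $\sigma_\calS,\sigma_\calL$ is at least $\epsilon dn/2$. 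I will then show that the first case yields (i) and the second yields (ii).

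For the small case $\sigma_\calS\geq \epsilon dn/2$, the key estimate is a per-set upper bound on $g_{k,\ell}$. Since $\eta_{k,0}(G[X])\leq k(|X|-1)$ (any singleton partition of $X$ shows this from the definition \eqref{eq:eta2}), and since $d_G(X)\geq 0$, we get $g_{k,\ell}(X)\leq k|X|+\ell\leq kt+\ell$ for every $X\in \calS$. Hence $|\calS|\geq \sigma_\calS/(kt+\ell)$. Plugging in $t=4\ell/(\epsilon d)$ and the mild assumption $\epsilon d \leq 4k$ (so that $kt\geq \ell$ and $kt+\ell\leq 8k\ell/(\epsilon d)$), we obtain $|\calS|\geq \epsilon^2 d^2 n/(16k\ell)=cn$, and these sets are disjoint because they belong to a sub-partition; this is conclusion~(i).

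For the large case $\sigma_\calL\geq \epsilon dn/2$, the idea is to exhibit a partition $\calP$ of $V$ that witnesses a large value of $\eta_{k,0}(G)$. For each $X\in\calL$, take an optimal partition $\calQ_X$ of $X$ achieving $\eta_{k,0}(G[X])=k(|\calQ_X|-1)-d_{G[X]}(\calQ_X)$, and let $\calP=\bigcup_{X\in\calL}\calQ_X \cup \{V\setminus \bigcup_{X\in\calL}X\}$ (dropping the last set if empty). Counting which edges cross $\calP$ gives $d_G(\calP)\leq \sum_{X\in \calL} d_{G[X]}(\calQ_X)+\sum_{X\in \calL} d_G(X)$, since every edge cut by $\calP$ is either cut inside some $X$ by $\calQ_X$ or leaves some $X\in \calL$. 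Using $k|\calQ_X|-d_{G[X]}(\calQ_X)=\eta_{k,0}(G[X])+k$, the definition of $g_{k,\ell}$, and $|\calL|\leq n/t=\epsilon dn/(4\ell)$, I obtain
\[
\eta_{k,0}(G)\geq k(|\calP|-1)-d_G(\calP)\geq \sum_{X\in \calL}g_{k,\ell}(X)-|\calL|\ell - O(k) \geq \tfrac{\epsilon dn}{2}-\tfrac{\epsilon dn}{4}-O(k).
\]
For $n$ large, the $O(k)$ slack is absorbed, giving $\eta_{k,0}(G)\geq \epsilon dn/4$. Finally, by Theorem~\ref{thm:tutte} we have $kn-\rho_{k,k}(E)\geq \eta_{k,0}(G)\geq \epsilon dn/4$, and since every added edge raises the rank of $\calM_{k,k}(G)$ by at most one, making $G$ be $(k,k)$-full requires at least $\epsilon dn/4$ new edges, i.e.\ $G$ is $\epsilon/2$-far from $(k,k)$-fullness; this gives conclusion~(ii).

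The only subtle step is the edge-counting inequality for $d_G(\calP)$ in the large case: one has to verify that the cut contributions of edges between different $X_i$'s and between each $X_i$ and the ``outside'' part are collectively absorbed by $\sum_{X\in \calL} d_G(X)$ rather than double counted. The rest of the argument is bookkeeping with the formula \eqref{eq:xi}, the bound $\eta_{k,0}(G[X])\leq k(|X|-1)$, and the parameter choices $t=4\ell/(\epsilon d)$ and $c=\epsilon^2 d^2/(16k\ell)$.
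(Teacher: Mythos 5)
Your proof is correct in its essentials, and for the large case you found a genuinely different -- and in fact sharper -- argument than the paper. The paper does \emph{not} directly exhibit a witnessing partition for $\eta_{k,0}(G)$. Instead it invokes an auxiliary edge set $E_H$ with $|E_H|=\eta_{k,0}(G)$ making $G'=(V,E\cup E_H)$ $(k,k)$-full, and then proves two inequalities (their Claim~5.1): $\eta_{k,0}(G'[X])+k\leq d_{G'}(X)$ and $\eta_{k,0}(G[X])\leq \eta_{k,0}(G'[X])+i_H(X)$. Chaining these gives $\sum_{X}(g_{k,\ell}(X)-\ell)\leq \sum_X\bigl(d_H(X)+i_H(X)\bigr)\leq 2|E_H|=2\eta_{k,0}(G)$, i.e.\ only $\eta_{k,0}(G)\geq \tfrac12\sum_X(g_{k,\ell}(X)-\ell)$. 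Your direct construction -- taking $\calP=\bigcup_{X\in\calL}\calQ_X\cup\{V\setminus\bigcup X\}$, noting $k|\calQ_X|-d_{G[X]}(\calQ_X)=\eta_{k,0}(G[X])+k$, and carefully counting the cut edges -- yields $\eta_{k,0}(G)\geq \sum_{X\in\calL}(g_{k,\ell}(X)-\ell)-O(k)$, which is a factor of 2 stronger. Your edge-count worry is unfounded: an edge between two distinct $X_i,X_j\in\calL$ is counted twice in $\sum_X d_G(X)$ but only once in $d_G(\calP)$, so the inequality $d_G(\calP)\leq\sum d_{G[X]}(\calQ_X)+\sum d_G(X)$ only over-estimates.

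The second structural difference is the case split. The paper argues by contrapositive from the failure of (i): if there are fewer than $cn$ disjoint small deficient sets, then $\sigma_{\rm small}<g_{\max}\cdot cn$, hence $\sigma_{\rm large}>\epsilon dn-g_{\max}cn$, and this slack is what the factor-of-2 loss is calibrated to absorb. You instead split $\sigma_\calS+\sigma_\calL\geq\epsilon dn$ as a clean $50$--$50$ dichotomy and show each half implies a conclusion directly. This is where your two small wrinkles appear: (a) the per-set bound $g_{k,\ell}(X)\leq kt+\ell$ forces the side hypothesis $\epsilon d\leq 4k$ to land on $|\calS|\geq cn$, and (b) in the large case you only reach $\eta_{k,0}(G)\geq\tfrac14\epsilon dn-k$ rather than $\tfrac14\epsilon dn$ exactly. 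Both are harmless for the intended use (the tester tolerates a slightly smaller constant), and the $-k$ term in fact only appears when $\bigcup_{X\in\calL}X=V$, which forces $\calS=\emptyset$ and leaves extra slack. Still, if you wanted this theorem verbatim, you would re-tune $c$ and $t$ -- e.g.\ set $c$ so that the small-case count works without the $\epsilon d\leq 4k$ proviso, or argue by contrapositive as the paper does, which removes the need to lower-bound $|\calS|$ at all. As a final note, the paper's stated per-set bound $g_{k,\ell}(X)\leq kt$ for small $X$ is itself only valid when $\ell\leq k$ (the correct bound is $kt+\ell-k$), so your more careful $kt+\ell$ is actually preferable.
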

\begin{proof}
Let ${\cal P}=\{X_1,\dots, X_s\}$ be a sub-partition of $V$ that maximizes the right hand side of \eqref{eq:xi}.
Since the maximum is taken over all sub-partitions of $V$, we may assume $g_{k,\ell}(X_i)>0$ for all $1\leq i\leq t$. 
Let us divide ${\cal P}$ into two subsets ${\cal P}_{\rm small}$ and ${\cal P}_{\rm large}$ depending on whether it is small or not.
Notice that for each $X\in {\cal P}_{\rm small}$ we have 
$g_{k,\ell}(X)=k+\ell-d_G(X)+\eta_{k,0}(G[X])\leq kt$ since $\eta_{k,0}(G[X])\leq k(|X|-1)\leq kt-2k$.

Suppose that (i) does not happen. Then, by $\xi_{k,\ell}(G)\geq \epsilon dn$ and  $\sum_{X\in {\cal P}_{\rm small}} g_{k,\ell}(X)\leq ktcn$, we have
\begin{equation}
\label{eq:large_deficient1}
\sum_{X\in {\cal P}_{\rm large}}g_{k,\ell}(X)\geq (\epsilon d-ktc)n.
\end{equation}
We now prove the following relation between $\eta_{k,0}$ and $g_{k,\ell}$, which gives us a lower bound on $\eta_{k,0}(G)$.
\begin{equation}
\label{eq:large_deficient2}
\eta_{k,0}(G)\geq \frac{1}{2}\sum_{X\in {\cal P}_{\rm large}} (g_{k,\ell}(X)-\ell).
\end{equation}
Recall that $\eta_{k,0}(G)$ is the number of edges we need to add to make $G$ $(k,k)$-full.
Hence, we can take  a new graph $H=(V,E_H)$ on $V$  such that $|E_H|=\eta_{k,0}(G)$ and $G'=(V,E\cup E_H)$  is $(k,k)$-full.
We need the following formulae.
\begin{claim}
\label{claim:5_1}
For any $X\subseteq V$,
\begin{description}
\item[(a)] $\eta_{k,0}(G'[X])+k\leq d_{G'}(X)$, and
\item[(b)] $\eta_{k,0}(G[X])\leq \eta_{k,0}(G'[X])+i_{H}(X)$, where $i_H(X)=|\{uv\in E_H\mid u,v\in X\}|$. 
\end{description}
\end{claim}
\begin{proof}
Let ${\cal P}_X'$ be a partition of $X$ such that 
$\eta_{k,0}(G'[X])=k(|{\cal P}_X'|-1)-d_{G'[X]}({\cal P}_X')$.
Let ${\cal F}={\cal P}_X'\cup\{V\setminus X\}$. 
Then $\calF$ is a partition of $V$.
Since $G'$ is $(k,k)$-full, we have
\begin{align*}
0=\eta_{k,0}(G')&\geq k(|\calF|-1)-d_{G'}(\calF) \\
&=k(|\calP_X'|-1)+k-d_{G'}(X)-d_{G'[X]}(\calP_X') \\
&=\eta_{k,0}(G'[X])+k-d_{G'}(X),
\end{align*}
implying (a).

On the other hand, let ${\cal P}_X$ be a partition of $X$ such that 
$\eta_{k,0}(G[X])=k(|{\cal P}_X|-1)-d_{G[X]}({\cal P}_X)$.
Since $d_{G[X]}({\cal P}_X)+i_H(X)\geq d_{G'[X]}({\cal P}_X)$, we have
\begin{align*}
\eta_{k,0}(G[X])&=k(|\calP_X|-1)-d_{G[X]}(\calP_X) \\
&\leq k(|\calP_X|-1)-d_{G'[X]}(\calP_X)+i_H(X) \\
&\leq \eta_{k,0}(G'[X])+i_H(X).
\end{align*}
\end{proof}
In total, we have
\begin{align*}
\sum_{X\in \calP_{\rm large}}(g_{k,\ell}(X)-\ell)&=\sum_{X\in {\cal P}_{\rm large}}(\eta_{k,0}(G[X])+k-d_G(X)) \\
&\leq \sum_{X\in {\cal P}_{\rm large}}(\eta_{k,0}(G'[X])+i_H(X)+k-d_G(X)) \\
&\leq \sum_{X\in {\cal P}_{\rm large}}(d_{G'}(X)+i_H(X)-d_G(X)) \\
&=  \sum_{X\in {\cal P}_{\rm large}}(d_{H}(X)+i_H(X)) \\
&\leq 2|E_H|=2\eta_{k,0}(G).
\end{align*}
Thus, we obtain \eqref{eq:large_deficient2}.
Moreover, since there are at most $\frac{n}{t}$ large sets among ${\cal P}$, \eqref{eq:large_deficient2} implies
\begin{equation}
\label{eq:large_deficient3}
\eta_{k,0}(G)\geq \sum_{X\in {\cal P}_{\rm large}} \frac{g_{k,\ell}(X)}{2} -\frac{\ell n}{2t}.
\end{equation}
Combining \eqref{eq:large_deficient1} and \eqref{eq:large_deficient3}, we finally have
$\eta_{k,0}(G)\geq \frac{1}{2}(\epsilon d-ktc-\frac{\ell}{t})n=\frac{\epsilon dn}{4}$.
This completes the proof.
\end{proof}

A testing algorithm for the $(k,\ell)$-ec-orientability of a graph $G=(V,E)$ is given in Algorithm~3.
In Line~7, $V_t(v)$ denotes the set of vertices whose distances to $v\in V$ are at most $t$.
\begin{algorithm}
  \caption{Testing the $(k,\ell)$-ec-orientability of a bounded-degree graph $G$}
  \label{alg:k-l-orientability}
  \begin{algorithmic}[1]
    \STATE Take any $\epsilon''$ such that $\epsilon''<\epsilon$.
    \STATE Run a $(1,\frac{\epsilon'' dn}{4})$-approximation algorithm for $\eta_{k,0}(G)$. 
    \IF{the obtained value $x^*$ satisfies $x^*>0$}
    \STATE \textbf{reject} $G$.
    \ENDIF
    \STATE Choose a set $S$ of $\frac{8k\ell}{\epsilon^2d^2}$ vertices uniformly at random from $G$. 
    \FOR{$v\in S$}
    \STATE compute $X_v={\rm argmax}\{g_{k,\ell}(X):X\subseteq V_{t}(v), X\neq \emptyset\}$ with $t=\frac{4\ell}{\epsilon d}$.
    \IF{$g_{k,\ell}(X_v)>0$}
    \STATE \textbf{reject} $G$.
    \ENDIF
\ENDFOR
    \STATE \textbf{accept} $G$.
  \end{algorithmic}
\end{algorithm}

\begin{proof}[Proof of Theorem~\ref{thr:test-k-l-orientability}]
We prove that Algorithm~\ref{alg:k-l-orientability} can be implemented with query complexity $(k+d)^{O(1/\epsilon'^2)}(\frac{1}{\epsilon'})^{O(1/\epsilon')}$ where $\epsilon'=\max(\frac{\epsilon }{dk},\frac{d \epsilon }{\ell})$, and correctly tests the $(k,\ell)$-ec-orientability of $G$.

For Line~2, we use an approximation algorithm mentioned in Corollary~\ref{crl:approximation-to-eta} that runs in $(k+d)^{O(1/\epsilon'^2)}(\frac{1}{\epsilon'})^{O(1/\epsilon')}$ queries.
For Line~7, we use an algorithm for minimizing $g_{k,\ell}$ on $V_t(v)$.
Let $\hat{g}_{k,\ell}:2^{V_t(v)}\rightarrow \mathbb{Z}$ be the function defined by, for each $X\subseteq V_t(v)$,
\begin{equation*}
\hat{g}_{k,\ell}(X)=\begin{cases}
-\infty & \text{if } X=\emptyset \\
g_{k,\ell}(X) & \text{otherwise}.
\end{cases}
\end{equation*}
Since $g_{k,\ell}$ is a supermodular function (see Appendix~\ref{sec:supermodularity}),
it is easy to observe that $\hat{g}_{k,\ell}$ is an intersecting supermodular function 
(i.e., $f(X)+f(Y)\leq f(X\cap Y)+f(X\cup Y)$ holds for any $X,Y\subseteq V_t(v)$ with $X\cap Y\neq \emptyset$).
Thus, to perform Line~7, we call a polynomial time algorithm for minimizing an intersecting submodular function (see e.g.,~\cite{Schrijver, fujishige}).
By $|V_t(v)|\leq d^t$, the query complexity taken in the for-loop is upper bounded by $O(\frac{d^2}{\epsilon'^2})\cdot d^{O(1/\epsilon')}$.  
Thus, Algorithm~\ref{alg:k-l-orientability} can be implemented with query complexity $(k+d)^{O(1/\epsilon'^2)}(\frac{1}{\epsilon'})^{O(1/\epsilon')}$.

To see the correctness, assume first that $G$ is $(k,\ell)$-ec-orientable.
Then, we have $0=\eta_{k,\ell}(G)\geq \eta_{k,0}(G)$ by (\ref{eq:eta_relation}).
Since $x^*\leq \eta_{k,0}(G)$ with probability at least $\frac{2}{3}$,  $x^*\leq 0$ holds. 
Namely, the algorithm does not reject $G$ at Line~4 with probability $\frac{2}{3}$.
Also, since $g_{k,\ell}(G)\leq 0$ for every $X$ with $\emptyset\neq X\subsetneq V$ by Theorem~\ref{thm:fk03-augment},
the algorithm never rejects $G$ at Line~9.
Thus, Algorithm~3 accepts $G$ with probability at least $\frac{2}{3}$.

Conversely, suppose that $G$ is $\epsilon$-far.
Then, by Theorem~\ref{thm:fk03-augment} and (\ref{eq:eta_relation}),
$\eta_{k,0}(G)+\ell\geq \eta_{k,\ell}(G)\geq \frac{\epsilon dn}{2}$ or $\xi_{k,\ell}(G)\geq \epsilon dn$ holds.
Since $\ell <\frac{\epsilon dn}{4}$ (otherwise $n$ becomes constant and we can apply any existing polynomial time algorithm),
$\eta_{k,0}(G)\geq \frac{\epsilon dn}{4}$ or $\xi_{k,\ell}(G)\geq \epsilon dn$ holds.
In total, combining this with Theorem~\ref{thm:key2},
$\eta_{k,0}(G)\geq \frac{\epsilon dn}{4}$ holds or $G$ has  at least $cn$ disjoint deficient sets of size at most $t$, where $c=\frac{\epsilon^2d^2}{16k\ell}$.
If $\eta_{k,0}(G)\geq \frac{\epsilon dn}{4}$, then $x^*\geq \eta_{k,0}(G)-\frac{\epsilon''dn}{4}>0$ holds and $G$ is rejected in Line~4;
Otherwise, the probability that we choose some vertex in a deficient set of size at most $t$ in Line~5 is at least
\begin{equation}
1-\left(\frac{n-cn}{n}\right)^{\frac{2}{c}}\geq 1-\frac{1}{e^2}\geq \frac{2}{3},
\end{equation}
and  Algorithm~3 rejects $G$ with probability at least $\frac{2}{3}$ in Line~9.
\end{proof}

\section{Linear Lower Bounds for One-Sided Error Testers}\label{sec:lower-bound}
In this section, we prove Theorems~\ref{thr:lower-k-l-fullness} and~\ref{thr:lower-k-l-ec-orientability}.
As for $(k,\ell)$-ec-orientability,
in the bounded-degree model,
Orenstein~\cite{Ore10} showed a liner lower bound of one-sided error tester for $(k,0)$-ec-orientability where $k \geq 2$.
We can easily modify his proof to achieve Theorem~\ref{thr:lower-k-l-ec-orientability} by using Theorem~\ref{thm:orientability}.
Thus, we omit the detail in this paper.
He also showed that there is a one-sided error tester for $(1,0)$-ec-orientability.
We cannot extend the lower bound to the case $\ell \geq k$ since, in such a case, 
$(k,\ell)$-ec-orientability coincides with the $(k+\ell)$-edge-connectivity,
and we have one-sided error testers for it~\cite{GR02}.

In what follows, we consider lower bounds for testing $(k,\ell)$-fullness.
Let $k \geq 2,\ell \geq 0$ be integers with $2k - \ell \geq 1$.
We mention that, when $k = 1$,
it is easy to make one-sided error testers (see Appendix~\ref{sec:trivial-testers}).
Note that a one-sided error tester cannot reject a graph until it has found an evidence that the graph is not $(k,\ell)$-full,
i.e., an $\epsilon$-far graph cannot be $(k,\ell)$-full no matter how we add edges in the unseen part of the graph.
With this observation, 
Orenstein~\cite{Ore10} constructed a graph which is $\epsilon$-far from $(k,k)$-full while if one has seen only $\beta n$ vertices for some constant $\beta$, one can add edges so that the resulting graph is $(k,k)$-full.
His construction relies on the Tutte-Nash-Williams tree-packing theorem (Theorem~\ref{thm:tutte}), which is a special property of $(k,k)$-fullness.
We complete the work by showing the existence of such a graph for general $(k,\ell)$-fullness.

First, we define a \textit{$(\beta,\gamma)$-expander} as a graph $G=(V,E)$ such that for any $S \subseteq V, |S| \leq \beta |V|$,
we have that $|\Gamma(S)| \geq \gamma |S|$.
The following lemma states that such graphs indeed exist.
\begin{lemma}[See e.g.,~\cite{HLW06}]\label{lmm:expander}
  Let $d \geq 3$ be an integer.
  Then, there exists a $d$-regular $(\beta,d-2)$-expander graph for some universal constant $\beta > 0$.
\end{lemma}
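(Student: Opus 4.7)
The plan is to prove the existence via the probabilistic method applied to a standard model of random $d$-regular graphs. I would work with the configuration (pairing) model on $n$ vertices: assign $d$ half-edges to each vertex and take a uniformly random perfect matching of the $nd$ half-edges. For fixed $d$, the resulting multigraph is simple with constant probability, so it suffices to establish the expansion property for the random pairing and then condition on simplicity.

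The first key step is to bound, for a fixed subset $S \subseteq V$ with $|S|=s$, the probability that $|\Gamma(S)| < (d-2)s$. If this bad event occurs, then every half-edge emanating from $S$ must be paired either to another half-edge in $S$ or to a half-edge incident to some fixed set $T \subseteq V \setminus S$ with $|T| = (d-2)s - 1$. By a direct count of matchings in the pairing model, the probability that all $ds$ half-edges from $S$ land in the $(d|S| + d|T|)$ half-edges over $S \cup T$ is at most $\bigl(\tfrac{d(s+|T|)}{d(n-s)}\bigr)^{\Omega(s)} = \bigl(\tfrac{(d-1)s}{n}\bigr)^{\Omega(s)}$ after absorbing constants. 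Multiplying by the $\binom{n}{|T|}$ choices of $T$ bounds the probability for this single $S$.

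The second step is a union bound: sum over $\binom{n}{s}$ choices of $S$ and over $s$ from $1$ to $\beta n$. Using $\binom{n}{k} \leq (en/k)^k$, the whole sum is dominated by terms of the form $\bigl[\tfrac{en}{s}\bigr]^s \bigl[\tfrac{en}{(d-1)s}\bigr]^{(d-1)s} \bigl(\tfrac{(d-1)s}{n}\bigr)^{cs}$ for some constant $c > 0$ coming from the "expansion slack" $d-2$ versus the trivial bound $d-1$. This simplifies to $\bigl[C \cdot (s/n)^{c'}\bigr]^s$ for absolute constants $C, c' > 0$ depending only on $d$. Choosing $\beta$ small enough (depending only on $d$) that $C \cdot \beta^{c'} < 1/2$ makes the sum geometric in $s$ and strictly less than $1$, so there is positive probability that a random $d$-regular graph has the claimed expansion. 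Finally, one conditions on simplicity (which holds with probability bounded below by a constant depending on $d$) to extract an actual $d$-regular simple graph witnessing the lemma.

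The main obstacle is getting the constants to balance. The entropy from the two binomial coefficients eats almost all of the savings from the pairing probability, and the usable surplus is exactly the $d-2$ appearing in the expansion requirement. For $d = 3$ this surplus is just $1$, so there is no slack to waste, and $\beta$ must be chosen as a genuinely small universal constant rather than something close to $1/2$. One also has to be careful that the bound on the number of bad matchings is written so that the $\binom{n}{(d-2)s}$ term for choosing $T$ does not overwhelm the pairing probability; handling this cleanly is what forces the configuration-model framing (as opposed to, say, the uniform-random-regular-graph framing, where the analogous computation is messier).
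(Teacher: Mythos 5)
Your framing via the configuration model and a first-moment union bound is indeed the standard route taken in the literature, but the exponent bookkeeping in your second step does not close, and the gap is not a matter of constants. In the non-bipartite pairing model, once you fix $S$ and $T$ the event ``every $S$-half-edge lands in $S\cup T$'' is witnessed by at least $ds/2$ matching steps (one per edge incident to $S$), not $ds$; when $G[S]$ is dense the process terminates after only $ds/2$ steps, so the best you can guarantee is $\Pr[\Gamma(S)\subseteq T]\le\bigl(\tfrac{(d-1)s}{n}\bigr)^{ds/2}$. Against this you pay $\binom{n}{s}\binom{n}{(d-2)s}\lesssim (n/s)^{(d-1)s}$ for the union bound (note your display writes $\binom{n}{(d-1)s}$, but $|T|=(d-2)s-1$). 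The net power of $s/n$ is therefore $ds/2-(d-1)s = s(1-d/2)\le -s/2$ for all $d\ge 3$, i.e.\ the term is $(n/s)^{\Omega(s)}$ and the union bound diverges rather than contracting to $\bigl[C(s/n)^{c'}\bigr]^s$ with $c'>0$. Splitting by the number of internal edges does not rescue it: the subcase $e_{\mathrm{in}}(S)\ge s+1$ already costs $\binom{n}{s}\Pr[e_{\mathrm{in}}\ge s+1]\gtrsim (e^2d/2)^s$, again not summable.

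The obstruction is structural, not just analytic. A random $d$-regular graph ($d\ge 3$) contains, with probability bounded away from zero (and w.h.p.\ for size roughly $2\log_{d-1}n$), induced theta subgraphs: connected $S$ with $|S|=s$ vertices and $s+1$ edges. Any such $S$ has at most $ds-2(s+1)=(d-2)s-2<(d-2)s$ external half-edges and hence violates $(\beta,d-2)$-expansion for every constant $\beta>0$. So a first-moment argument on uniformly random $d$-regular graphs \emph{cannot} establish the lemma, because the conclusion simply fails in that model. The clean place where your computation does balance is the bipartite $d$-left-regular setting, where each of the $ds$ edges from $S$ independently lands on the right side, giving $\Pr[\Gamma(S)\subseteq T]\approx(|T|/n)^{ds}$ with exponent $ds$ rather than $ds/2$; there the net power of $s/n$ is $ds-(d-1)s=s>0$ and the sum is geometric. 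Your write-up needs either to move to that bipartite formulation (which is what the cited survey actually proves), or to invoke an explicit high-vertex-expansion construction, together with whatever adaptation of the downstream argument this forces.
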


\begin{lemma}\label{lmm:far}
  Let $G$ be the $(2k-1)$-regular expander graph of $n$ vertices given in Lemma~\ref{lmm:expander}.
  When $n$ is sufficiently large,
  $G$ is $\epsilon$-far from $(k,\ell)$-fullness for $\epsilon = O(\frac{1}{k})$. 
\end{lemma}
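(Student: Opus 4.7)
The plan is a direct edge-counting argument; the expander property of $G$ plays no role in this lemma (it will enter the surrounding lower bound construction to show that inside any small neighborhood the graph can still be completed to a $(k,\ell)$-full one).

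I would start with the basic counts. Since $G$ is $(2k-1)$-regular on $n$ vertices, $|E(G)| = \frac{(2k-1)n}{2} = kn - \frac{n}{2}$, while any $(k,\ell)$-full graph on $n$ vertices must contain a $(k,\ell)$-tight spanning subgraph and thus have at least $kn - \ell$ edges in total. Writing $a$ and $r$ for the numbers of added and removed edges in any modification of $G$ to a $(k,\ell)$-full graph $G'$, the identity $|E(G')| = kn - \tfrac{n}{2} + a - r$ combined with $|E(G')| \geq kn - \ell$ would force $a \geq \tfrac{n}{2} - \ell + r$, so the total number of modifications $a + r$ is at least $\tfrac{n}{2} - \ell$. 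For $n$ sufficiently large (say $n \geq 4\ell$), this is at least $\tfrac{n}{4}$.

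Translating to the farness definition with degree bound $d = 2k-1$ (or indeed any $d = O(k)$), I would then read off $\tfrac{\epsilon d n}{2} \leq \tfrac{n}{4}$, hence $\epsilon \leq \frac{1}{2d} = O(\tfrac{1}{k})$, as claimed. There is no substantive obstacle here: the argument is really just the global observation that the $(2k-1)$-regular expander does not carry enough edges to host any $(k,\ell)$-tight spanning subgraph, so $\Theta(n)$ edge additions are forced no matter how cleverly one modifies the graph locally. The only mildly delicate point is the ``sufficiently large $n$'' hypothesis, needed to absorb the fixed constant $\ell$ into the $n/2$ term; this is exactly what the hypothesis of the lemma permits.
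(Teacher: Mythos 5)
Your proof is correct and takes essentially the same approach as the paper: both argue by simple edge-counting that a $(2k-1)$-regular graph has $kn - n/2$ edges, any $(k,\ell)$-full graph needs at least $kn - \ell$, and hence at least $n/2 - \ell = \Omega(n)$ modifications are forced. Your writeup is slightly more careful in separating additions from removals and in explicitly translating the modification count into the $\frac{\epsilon dn}{2}$ farness threshold, but this is a matter of exposition rather than a different argument.
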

\begin{proof}
  Note that any $(k,\ell)$-full graph must have at least $kn - \ell$ edges.
  However, $G$ has $\frac{(2k-1)n}{2}$ edges.
  Thus, to make $G$ $(k,\ell)$-full,
  we need to add at least $kn - \ell - \frac{(2k-1)n}{2} = \frac{n}{2} - \ell$ edges.
  Thus, the lemma holds.
\end{proof}
The following is a well-known graph operation that preserves $(k,\ell)$-fullness. 
\begin{lemma}[See e.g.,~\cite{Fekete:2004}]\label{lmm:combine}
  Let $G = (V,E)$ be a $(k,\ell)$-full graph.
  We introduce a new vertex $v$ and connect $v$ and distinct $k$ vertices of $V$ by new edges.
  Then, the resulting graph is also $(k,\ell)$-full.
\end{lemma}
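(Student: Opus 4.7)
The plan is to take a $(k,\ell)$-tight spanning subgraph $H = (V,F)$ witnessing the $(k,\ell)$-fullness of $G$, attach the new vertex $v$ together with the $k$ new edges (call this set $E_v$) to form $H' = (V \cup \{v\},\, F \cup E_v)$, and verify that $H'$ is itself $(k,\ell)$-tight on $|V|+1$ vertices. The edge count is immediate, $|F \cup E_v| = (k|V|-\ell)+k = k(|V|+1)-\ell$, and since the augmented graph $G'$ contains $H'$ as a spanning subgraph, $(k,\ell)$-fullness of $G'$ follows once sparsity of $H'$ is confirmed.

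The real work is checking $(k,\ell)$-sparsity of $H'$. I will fix an arbitrary nonempty $F'' \subseteq F \cup E_v$ and split it as $F'' = F''_1 \sqcup F''_2$ with $F''_1 \subseteq F$ and $F''_2 \subseteq E_v$. There are three cases. If $F''_2 = \emptyset$, the bound is inherited directly from the sparsity of $H$. If $F''_1 = \emptyset$, then because the $k$ new edges have $k$ distinct endpoints in $V$, we get $|V(F'')| = |F''_2|+1$, so the required inequality reduces to $\ell \leq (k-1)|F''_2| + k$; this is tightest at $|F''_2|=1$, where it is exactly the standing assumption $2k-\ell\geq 1$. In the mixed case, letting $U_2 \subseteq V$ denote the endpoints in $V$ of $F''_2$, we have $|V(F'')| = |V(F''_1) \cup U_2| + 1 \geq |V(F''_1)| + 1$; combining the sparsity bound $|F''_1| \leq k|V(F''_1)|-\ell$ inherited from $H$ with $|F''_2| \leq k$ then yields $|F''| \leq k|V(F''_1)|-\ell+k \leq k|V(F'')|-\ell$, which is what we need.

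There is no genuine obstacle here — the verification is essentially bookkeeping — but it is worth pointing out that the single-new-edge subcase of $F''_1=\emptyset$ is precisely where the hypothesis $2k-\ell\geq 1$ is consumed, so the statement would fail without it.
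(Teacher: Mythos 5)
Your proof is correct. The paper itself does not prove this lemma---it cites it as a known fact from Fekete and Jord\'an~\cite{Fekete:2004}---so there is no in-paper argument to compare against, but your direct verification (take a $(k,\ell)$-tight witness $H$, attach the new vertex with its $k$ edges, and check sparsity by a three-way case split on whether a violating set uses only old edges, only new edges, or both) is the standard argument and is carried out without error. Your observation that the case $F''_1=\emptyset$, $|F''_2|=1$ is exactly where $2k-\ell\geq 1$ is needed is also correct and worth recording, since the paper states this inequality as a standing assumption but does not flag where it bites in this lemma.
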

We also need the following graph operation for the case $k=2$.
\begin{lemma}
\label{lmm:combine2}
Let $G=(V,E)$ be a $(2,\ell)$-full graph with $|V|\geq 2$.
We introduce a cycle graph $G'=(U,C)$ consisting of new vertices $U=\{u_1,u_2,\dots, u_s\}$ and then connect each new vertex $u_i$ to a vertex $v_i\in V$ so that
$v_i\neq v_j$ for some $1\leq i<j\leq s$.
Then, the resulting graph is also $(2,\ell)$-full.
\end{lemma}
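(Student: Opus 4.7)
The plan is to exhibit a $(2,\ell)$-tight spanning subgraph of the augmented graph explicitly. Let $H=(V,F)$ be any $(2,\ell)$-tight spanning subgraph of $G$, which exists by the $(2,\ell)$-fullness. I would set $F^* := F \cup C \cup \{u_iv_i : 1\leq i \leq s\}$. The edge count gives $|F^*| = (2|V|-\ell) + 2s = 2(|V|+s)-\ell$, matching the target tight size on $|V|+s$ vertices, so the entire argument reduces to verifying that $F^*$ is $(2,\ell)$-sparse.

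For the sparsity check, I would fix a non-empty $F'\subseteq F^*$ and decompose it as $F' = F'_1 \cup F'_2 \cup F'_3$ according to whether each edge lies in $F$, in $C$, or among the connectors $\{u_iv_i\}$. Set $s' := |V(F')\cap U|$ and $n' := |V(F')\cap V|$, so $|V(F')| = n'+s'$. Three elementary bounds drive the analysis: (i) $|F'_1|\leq 2n'-\ell$ whenever $F'_1\neq\emptyset$, inherited from the $(2,\ell)$-sparsity of $F$; (ii) $|F'_2|\leq s'-1$ when $s'<s$ because cycle edges restricted to a proper subset of $U$ form a path forest, and $|F'_2|\leq s$ when $s'=s$; (iii) $|F'_3|\leq s'$ since distinct connectors use distinct vertices of $U$. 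Summing these three bounds in the required inequality $|F'|\leq 2(n'+s')-\ell$ and recalling $\ell\leq 2k-1 = 3$ disposes of every case where $F'_1\neq\emptyset$, and also of the case $F'_1=\emptyset$ with $s'<s$.

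The only delicate case is $F'_1=\emptyset$ with $s'=s$, where $F'$ uses no old edge but touches all of $U$. If in addition $F'_3=\emptyset$ then $F'\subseteq C$ and the required inequality is $|F'_2|\leq 2s-\ell$, which holds as long as $s\geq \ell$; since $\ell\leq 3$ and the cycle has $s\geq 3$ vertices, this is automatic. The genuinely subtle subcase is $F'_3\neq\emptyset$ with $n'=1$, i.e. every connector in $F'_3$ is incident to a single common vertex $v\in V$. Here the hypothesis that $v_i\neq v_j$ for some $i,j$ forces $|\{i : v_i = v\}|\leq s-1$, so $|F'_3|\leq s-1$ and hence $|F'|\leq s + (s-1) = 2s-1 = 2(s+1)-\ell$ when $\ell=3$ (the slack is strictly larger for $\ell\leq 2$). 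This completes the verification and establishes that $(V\cup U, F^*)$ is a $(2,\ell)$-tight spanning subgraph of the augmented graph, proving it is $(2,\ell)$-full.

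The main obstacle is precisely this last subcase: without the hypothesis that at least two $v_i$'s are distinct, collapsing all connectors onto a single vertex $v$ would concentrate $2s$ edges within $|V(F')|=s+1$ vertices, directly violating $(2,3)$-sparsity. The role of the hypothesis $v_i\neq v_j$ is exactly to provide the one unit of slack that makes this pathological configuration impossible.
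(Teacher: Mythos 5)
Your proof is correct and follows essentially the same approach as the paper: reduce to a $(2,\ell)$-tight spanning subgraph, split any violating edge set into old edges, cycle edges, and connectors, and bound each part, using $\ell \leq 3$. The paper dispatches the case $F_V=\emptyset$ with the phrase ``a simple counting argument''; you carry it out in full and correctly pinpoint the only subcase where the hypothesis $v_i\neq v_j$ for some $i<j$ is actually needed, namely when $F'$ uses no old edge, touches all of $U$, and all connectors in $F'$ hit a single vertex of $V$.
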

\begin{proof}
It is sufficient to consider the case when $G$ is $(2,\ell)$-tight.
Let $E_{U,V}=\{u_iv_i\mid 1\leq i\leq s\}$.
Note that the total number of edges amounts to $|E|+|C|+|E_{U,V}|=2|V|-\ell+|U|+|U|=2|V\cup U|-\ell$.

Suppose that the resulting graph is not $(2,\ell)$-tight.
Then, there is an edge subset $F$ that violates the counting condition, i.e., $|F|>f_{2,\ell}(F)$.
We split $F$ into three parts; $F_{U}=F\cap C$, $F_{U,V}=F\cap E_{U,V}$ and $F_V=F\cap E$.
Since $G'$ is a cycle, we have $|F_U|\leq |V_{G'}(F)|$.
Also, each vertex $u_i\in U$ is incident to only one vertex in $V$, $|F_{U,V}|\leq |V_{G'}(F)|$.
Thus, if $F_V\neq \emptyset$, we have $|F| = |F_U| + |F_{U,V}| + |F_V| \leq |V_{G'}(F)| + |V_{G'}(F)| + 2|V_G(F)| - \ell = 2|V_{G'}(F)\cup V_G(F)| - \ell=f_{2,\ell}(F)$.
Therefore, $F_V=\emptyset$ must hold, but a simple counting argument shows that any subset of $C\cup E_{U,V}$ cannot violate the counting condition, which is a contradiction.
\end{proof}

\begin{proof}[Proof of Theorem~\ref{thr:lower-k-l-fullness}]
  Let $G = (V,E_G)$ be the $(2k-1)$-regular $(\beta,2k-3)$-expander graph of $n$ vertices given in Lemma~\ref{lmm:expander}.
  From Lemma~\ref{lmm:far}, $G$ is $O(\frac{1}{k})$-far from $(k,\ell)$-fullness.
  Suppose that an algorithm $\calA$ has queried $\beta n$ times.
  and let $V_{\calA} \subseteq V$ be the set of vertices involved with those queries.
  That is, for every $v \in V_{\calA}$, 
  there was a query of the form $\calO_G(v,i)$ for some $i \in [d]$,
  or $\calO_G$ returns an edge incident to $v$.
  Clearly, $|V_{\calA}| \leq \beta n$ holds.

  Let $S = V \setminus V_{\calA}$.
  We take any $(k,\ell)$-full graph $H=(S,E_H)$ on $S$ using new edges.
  Then, we consider the graph $G' = (V, E_G \cup E_H)$.
  We show that $G'$ is $(k,\ell)$-full.
  This means that any algorithm cannot reject $G$ just by seeing $\beta n$ edges.
 
  We know that $G'[S]$ is $(k,\ell)$-full since $H$ is $(k,\ell)$-full.
  To show that entire $G'$ is $(k,\ell)$-full,
  we iteratively enlarge $S$ keeping that $G'[S]$ is $(k,\ell)$-full. Let $\overline{S}=V\setminus S$.
  We have the following two cases.
  \begin{itemize}
  \item If there exists a vertex $v \in \overline{S}$ such that $|\Gamma(v) \cap S| \geq k$,  
    then $G'[S+v]$ is also $(k,\ell)$-full by Lemma~\ref{lmm:combine}.
    Thus, we replace $S$ by $S + v$.
  \item If every vertex $v \in \overline{S}$ satisfies $|\Gamma(v) \cap S| < k$, then we have $d_{G'}(v,S) < k$ for any $v \in \overline{S}$.
    Since $G'$ contains a $(\beta,2k-3)$-expander,
    we have that $d_{G'}(\overline{S}) \geq (2k-3)|\overline{S}|$.
    However, the assumption implies that $d_{G'}(\overline{S}) = \sum_{v \in \overline{S}}d_{G'}(v,S) < k|\overline{S}|$.
    Combining those inequalities, we have $k = 2$.
    Furthermore, by $d_{G'}(v,S)<k$, we have $d_{G'}(v,S) = 1$ for every $v \in \overline{S}$.
    Note that the degree of any vertex $v \in \overline{S}$ is $2k-1 = 3$.
    This means that $G'[\overline{S}]$ consists of disjoint cycles and we have an edge from each vertex $v \in \overline{S}$ to $S$.
    Let us take such a cycle and let $U$ be the vertex set of this cycle. Then, by $|\Gamma(U)| \geq |U|$, we can apply Lemma~\ref{lmm:combine} to claim that $G'[S\cup U]$ is $(2,\ell)$-full.
    Thus, we replace $S$ by $S\cup U$.
  \end{itemize}
  For any of those two cases, we can enlarge $S$ until $S$ becomes $V$.
  Thus, the theorem holds.
\end{proof}

\section{Concluding Remarks}
The concept of $(k,\ell)$-sparsity can be generalized as follows.
For a hypergraph $H=(V,{\cal E})$, let $\mathbf{k}:V\rightarrow \mathbb{Z}_+$ and $\ell\in \mathbb{Z}_+$.
We define a function $f_{\mathbf{k},\ell}:2^{\cal E}\rightarrow \mathbb{Z}_+$ by $f_{\mathbf{k},\ell}({\cal E}')=\sum_{v\in V({\cal E}')}\mathbf{k}(v)-\ell$ for ${\cal E}'\subseteq {\cal E}$,
where $V({\cal E}')=\bigcup_{X\in {\cal E}'}X$.
It is easy to see that $f_{\mathbf{k},\ell}$ is non-decreasing and submodular, and thus $f_{\mathbf{k},\ell}$ induces a matroid, ${\cal M}_{\mathbf{k},\ell}(H)$, on ${\cal E}$.
%
It is easy to generalize our approximation algorithm to that for the rank of ${\cal M}_{\mathbf{k},\ell}(H)$ by just modifying the auxiliary graph $G_k$ defined in Section~\ref{sec:matching}.

Our tester for the $(k,\ell)$-fullness of a graph $G$ approximates the rank of $\calM_{k,\ell}(G)$.
It might be interesting to know for which matroid we can approximate the rank of it in constant time.
In particular, can we approximate the rank of a matrix with entries in $\mathbb{F}_2$?

We note that the $(k,k)$-fullness of a graph can be decided by checking the rank of the union of $k$ graphic matroids.
This problem is usually solved via a matroid intersection problem.
This leaves us questions: for which matroids can we approximate the rank of their union, 
and for which matroids $\calM_1,\calM_2$ can we approximate the size of the largest common independent set in $\calM_1$ and $\calM_2$ in constant queries?

\bibliography{testing_rigidity}{}

\appendix

\section{One-Sided Error Testers for $(1,0)$-Fullness and $(1,1)$-Fullness}\label{sec:trivial-testers}
In this section, we give one-sided error testers for $(1,0)$-fullness and $(1,1)$-fullness.

\begin{lemma}\label{lmm:far-from-(1,0)-full}
  Let $G$ be a graph $\epsilon$-far from $(1,0)$-fullness.
  Then, there are at least $\frac{\epsilon dn}{4}$ connected components of size at most $\frac{4}{\epsilon d}$ containing no cycle.
\end{lemma}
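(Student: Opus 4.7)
My approach is by contrapositive: I will show that if $G$ has strictly fewer than $\frac{\epsilon dn}{4}$ tree components of size at most $s := \frac{4}{\epsilon d}$, then $G$ can be made $(1,0)$-full by modifying strictly fewer than $\frac{\epsilon dn}{2}$ edges, contradicting $\epsilon$-farness. The starting point is the combinatorial characterization: by Proposition~\ref{prop:pseudo_forest} applied with $k=1$, the rank $\rho_{1,0}(E)$ equals the size of a maximum pseudoforest in $G$, which is $n$ minus the number of acyclic connected components. Hence $G$ is $(1,0)$-full iff every connected component contains at least one cycle; write $t(G)$ for the number of tree (cycle-free) components.

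The first step is to upper-bound $t(G)$ under the contrapositive hypothesis. Large tree components (those of size strictly greater than $s$) are pairwise vertex-disjoint and each contains more than $s$ vertices, so there are fewer than $\frac{n}{s} = \frac{\epsilon dn}{4}$ of them. Combined with the assumption on small tree components we obtain $t(G) < \frac{\epsilon dn}{2}$.

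The heart of the argument is the constructive step: modify $G$ using at most $t(G) + O(1)$ edge additions to obtain a $(1,0)$-full graph, without violating the degree bound. The construction goes component by component. For each tree component of size $\geq 3$, pick two non-adjacent leaves and add the edge between them; this is always possible since a tree on $\geq 3$ vertices has two leaves that are not the endpoints of a common edge, and because leaves have degree $1$ in $G$ the bound $d \geq 2$ is respected. Each such addition creates an internal cycle and eliminates that tree. Tiny tree components (isolated vertices and single-edge components) cannot be closed by a single internal simple edge, so I instead merge each of them into an ``anchor'' component that already contains a cycle via one added edge per tiny tree; the anchor can be any non-tree component of $G$, or any tree of size $\geq 3$ after it has been closed. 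If $G$ is itself a forest consisting entirely of tiny trees, I first build an anchor by combining a constant number of tiny components (e.g., closing a triangle out of three isolated vertices at the cost of $3$ edges) before absorbing the rest.

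The main obstacle is precisely this tiny-component case: the simple-graph constraint prevents closing a $2$-vertex tree via a parallel edge, and the degree bound must be tracked as tiny components are merged into the anchor. Careful case analysis (which uses the fact that tree components are uniformly ``degree-sparse'', with average degree strictly less than $2$, so leaves and other low-degree vertices are always available) shows that the extra cost is a bounded constant independent of $n$. Once this construction is established, for $n$ sufficiently large we have $t(G) + O(1) < \frac{\epsilon dn}{2}$, contradicting the $\epsilon$-farness of $G$ and completing the proof.
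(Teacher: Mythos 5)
Correct, and essentially the same argument as the paper's, just phrased in contrapositive form with the implicit ``fix each acyclic component with one edge'' step spelled out: the paper collapses your whole analysis into the one-line jump from $\epsilon$-farness to ``at least $\epsilon dn/2$ acyclic components,'' followed by the same pigeonhole on large components. One remark: the $+O(1)$ overhead and the ``$n$ sufficiently large'' caveat are avoidable, since under your contrapositive hypothesis $G$ cannot consist solely of small acyclic components (that would already force $t_{\rm small}\geq n/s=\epsilon dn/4$), so an anchor always exists and the repair costs exactly $t<\epsilon dn/2$ edges.
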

\begin{proof}
  Note that a graph is $(1,0)$-full iff each connected component in the graph contains a cycle.
  Thus, there are at least $\frac{\epsilon dn}{2}$ connected components containing no cycle in $G$.
  Then, it is easy to observe that the lemma holds.
\end{proof}

\begin{lemma}\label{lmm:far-from-(1,1)-full}
  Let $G$ be a graph $\epsilon$-far from $(1,1)$-fullness.
  Then, there are at least $\frac{\epsilon dn}{4}$ connected components of size at most $\frac{4}{\epsilon d}$.
\end{lemma}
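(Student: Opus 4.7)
The plan is to exploit the well-known equivalence between $(1,1)$-fullness and connectivity (stated in the ``Applications'' paragraph of the introduction) and then apply an elementary averaging argument, in direct analogy to the proof sketched for Lemma~\ref{lmm:far-from-(1,0)-full}.

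First, I would observe that a graph is $(1,1)$-full if and only if it is connected, so it suffices to analyze the cost of making $G$ connected. Since removing edges can never merge connected components, any sequence of edge additions and deletions transforming $G$ into a connected graph must contain at least $c-1$ edge additions, where $c$ denotes the number of connected components of $G$. Because $G$ is $\epsilon$-far from $(1,1)$-fullness, we therefore have
\[
c - 1 \;\geq\; \frac{\epsilon dn}{2},
\]
and in particular $c \geq \frac{\epsilon dn}{2}$.

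Next, I would classify each connected component as \emph{small} if it has at most $\frac{4}{\epsilon d}$ vertices and \emph{large} otherwise. Let $s$ and $L$ denote the number of small and large components, respectively, so $s + L = c$. Since each large component contains strictly more than $\frac{4}{\epsilon d}$ vertices and the total number of vertices is $n$, we obtain the crude bound
\[
L \;<\; \frac{n}{4/(\epsilon d)} \;=\; \frac{\epsilon d n}{4}.
\]
Combining this with the lower bound on $c$ gives
\[
s \;=\; c - L \;\geq\; \frac{\epsilon d n}{2} - \frac{\epsilon d n}{4} \;=\; \frac{\epsilon d n}{4},
\]
which is the desired conclusion.

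There is really no serious obstacle here; the only point that deserves a moment's care is confirming that edge deletions cannot reduce the number of additions required (so that the ``$\epsilon$-far'' budget, which counts additions \emph{and} deletions, translates cleanly into a lower bound of $c-1$ on the number of components). Everything else is a one-line pigeonhole computation on the vertex budget, identical in spirit to Lemma~\ref{lmm:far-from-(1,0)-full}.
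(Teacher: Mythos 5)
Your route is the same as the paper's (which only notes that $G$ must have at least $\frac{\epsilon dn}{2}$ connected components and then treats the rest as a routine counting step), and the pigeonhole at the end — bounding the number of large components by $\frac{\epsilon dn}{4}$ via the vertex budget and subtracting — is correct.

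The one place where the reasoning as written does not support the conclusion is the derivation of $c - 1 \geq \frac{\epsilon dn}{2}$. You argue that any modification sequence connecting $G$ must contain at least $c - 1$ additions; that is a \emph{lower} bound on the edit distance to connectivity, and comparing two lower bounds on the same quantity (this one and the $\epsilon$-far bound) tells you nothing about the relative size of $c-1$ and $\frac{\epsilon dn}{2}$. What the step actually requires is the \emph{upper} bound: the edit distance is at most $c - 1$, because adding $c - 1$ edges — say a path through one chosen vertex of each component — already yields a connected graph. That upper bound, combined with the $\epsilon$-far guarantee that the edit distance is at least $\frac{\epsilon dn}{2}$, gives $c - 1 \geq \frac{\epsilon dn}{2}$. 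The remark you single out as the delicate point (that deletions cannot reduce the number of additions required) again argues only the lower-bound direction, which is not the one you need; the direction you do need — that $c-1$ additions suffice — is the elementary one, and should be the one stated.
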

\begin{proof}
  Note that a graph is $(1,1)$-full iff the graph is connected.
  Thus, there are at least $\frac{\epsilon dn}{2}$ connected components in $G$.
  Then, it is easy to observe that the lemma holds.
\end{proof}

\begin{theorem}
  In the bounded-degree model with a degree bound $d$,
  There are one-sided error testers for $(1,0)$-fullness and $(1,1)$-fullness with query complexity $O(\frac{1}{\epsilon^2 d})$.
\end{theorem}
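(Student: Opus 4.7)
The plan is to implement a standard Goldreich-Ron style sampling tester and rely on Lemmas~\ref{lmm:far-from-(1,0)-full} and~\ref{lmm:far-from-(1,1)-full} for soundness. Specifically, set $s = \Theta(1/(\epsilon d))$ and $\tau = \lceil 4/(\epsilon d)\rceil$. The tester samples $s$ vertices uniformly at random from $V$, performs a truncated breadth-first search from each sampled vertex using the oracle $\calO_G$ until either the BFS completes (revealing an entire connected component of size at most $\tau$) or more than $\tau$ vertices have been discovered. In the former case, for the $(1,1)$-fullness test we reject immediately (a small connected component certifies that $G$ is not connected, hence not $(1,1)$-full); for the $(1,0)$-fullness test we additionally check whether the revealed component contains a cycle, which can be done during the BFS by recording whether any probed edge closes a cycle on already-visited vertices, and reject iff no cycle was found. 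Otherwise we accept.

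For one-sided correctness, note that if $G$ is $(1,1)$-full then every connected component has $n$ vertices, so no small component can ever be discovered; if $G$ is $(1,0)$-full then every connected component contains a cycle, so the cycle-free rejection condition is never triggered. Thus the tester always accepts when $G$ satisfies the property. For soundness, suppose $G$ is $\epsilon$-far from the property. Then Lemma~\ref{lmm:far-from-(1,0)-full} or Lemma~\ref{lmm:far-from-(1,1)-full} guarantees at least $\epsilon d n/4$ vertices lie in ``bad'' components of size at most $\tau$. Thus one uniformly random vertex belongs to a bad component with probability at least $\epsilon d/4$, and the probability that none of the $s$ sampled vertices is bad is at most $(1-\epsilon d/4)^s \leq 1/3$ for suitable $s = \Theta(1/(\epsilon d))$. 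Whenever a sampled vertex lies in such a bad component, the truncated BFS will completely explore the component (since its size is at most $\tau$) and find the rejection witness.

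The query cost is straightforward to bound: a single truncated BFS visits at most $\tau+1$ vertices, and for each visited vertex we query $\calO_G$ at most $d$ times to enumerate its incident edges. Hence one BFS costs $O(d\tau) = O(1/\epsilon)$ queries. Aggregating over $s = O(1/(\epsilon d))$ seeds yields $O(1/(\epsilon^2 d))$ queries in total, matching the claim.

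The only mildly delicate point is the $(1,0)$-case cycle detection: we must make sure that when the BFS terminates having explored the whole component, we have enough information to decide whether it is acyclic. This is immediate from the incidence-list model — during BFS we examine every incident edge of every vertex of the component, so the induced subgraph on the component is fully determined, and acyclicity can be tested (e.g., by comparing edge count to $|V(\text{component})|-1$). There is no significant obstacle beyond bookkeeping.
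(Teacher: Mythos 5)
Your proposal is correct and follows essentially the same route as the paper: sample $\Theta(1/(\epsilon d))$ vertices, run truncated BFS up to $\Theta(1/(\epsilon d))$ vertices from each, reject on finding a small connected component (for $(1,1)$) or a small acyclic component (for $(1,0)$), and invoke Lemmas~\ref{lmm:far-from-(1,0)-full} and~\ref{lmm:far-from-(1,1)-full} for soundness. The only cosmetic difference is that you spell out the cycle-detection bookkeeping for the $(1,0)$ case, which the paper leaves implicit by saying the tester is constructed ``in a similar way.''
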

\begin{proof}
  We describe the algorithm for $(1,1)$-fullness.
  Let $S$ be a set of $\frac{8}{\epsilon d} $ vertices chosen uniformly at random from an input graph.
  For each chosen vertex,
  we perform BFS from the vertex until we reach $\frac{4}{\epsilon d} + 1$ vertices.
  If the BFS cannot reach $\frac{4}{\epsilon d} + 1$ vertices for some vertex in $S$, we reject the graph.
  Otherwise, we accept the graph.
  Clearly, the query complexity of the algorithm is at most $\frac{8}{\epsilon d} \cdot (\frac{4}{\epsilon d}+1) \cdot d = O(\frac{1}{\epsilon^2 d})$.

  Since $(1,1)$-full graph is connected, we can reach $n$ vertices from any vertex.
  Thus, the algorithm always accepts $(1,1)$-full graph.
  Suppose that a graph is $\epsilon$-far from $(1,1)$-fullness.
  From Lemma~\ref{lmm:far-from-(1,1)-full},
  the probability that we choose some vertex in a connected component of size at most $\frac{4}{\epsilon d}$ is at least
  \begin{eqnarray*}
    1 - \left(1 - \frac{\epsilon dn}{4n}\right)^{\frac{8}{\epsilon d}} \geq 1 - \frac{1}{e^{2}} \geq \frac{2}{3}.
  \end{eqnarray*}
  For such vertex, the BFS cannot reach $\frac{4}{\epsilon d}$ vertices.
  Thus, the algorithm rejects a graph $\epsilon$-far from $(1,1)$-fullness with probability at least $\frac{2}{3}$.

  We can construct a tester for $(1,0)$-fullness in a similar way using Lemma~\ref{lmm:far-from-(1,0)-full}.
\end{proof}

\section{Supermodularity of $g_{k,\ell}$}\label{sec:supermodularity}
For each $X\subseteq V$, we have
\begin{align*}
g_{k,\ell}(X)&=k+\ell-d_G(X)+\eta_{k,0}(G[X_i]) \\
&=k+\ell-\frac{d_G(X)}{2}+\max\left\{\sum_{j=1}^s\left(k-\frac{d_G(X_j)}{2}\right)\mid \text{ a partition } \{X_1,\dots, X_s\} \text{ of } X \right\} \\
&=k+\ell-\frac{d_G(X)}{2}+\max\left\{\sum_{j=1}^s h(X_j)\mid \text{ a partition } \{X_1,\dots, X_s\} \text{ of } X \right\},
\end{align*}
where $h:2^{V}\rightarrow \mathbb{Z}$ is defined as $h(X):=k-\frac{d_G(X)}{2}$ for $X\subseteq V$.
Note that $h$ is a supermodular function and 
$\hat{h}(X):=\max\left\{\sum_{j=1}^s h(X_j)\mid \text{ a partition } \{X_1,\dots, X_s\} \text{ of } X \right\}$ is the so-called Dilworth truncation of $h$,
which is known to be supermodular again (see e.g., \cite[Chapter~48]{Schrijver}).
Since $d_G$ is submodular, $g_{k,\ell}$ is a supermodular.
\end{document}